\DeclareMathOperator{\polylog}{polylog}
\newcommand{\OPT}{\ensuremath{\mathrm{OPT}}} %
\newcommand{\OPTcr}{\ensuremath{\mathrm{OPT}_{\scriptscriptstyle\mathrm{cr}}}} %
\newcommand{\Topt}{\ensuremath{T_{\scriptscriptstyle\mathrm{OPT}}}} %
\newcommand{\Toptd}{\Topt^\star}
\newcommand{\Talg}{\ensuremath{T_{\scriptscriptstyle\mathrm{ALG}}}} %
\newcommand{\Tcr}{\ensuremath{T_{\scriptscriptstyle\mathrm{cr}}}} %
\newcommand{\R}{\ensuremath{\mathds{R}}} %
\newcommand\pe{\mathcal{P}}
\DeclareMathOperator{\diam}{diam}
\def\ds#1{\overrightarrow{#1}}
\newcommand{\eps}{\varepsilon}
\def\Td#1{T^{#1}_{\scriptscriptstyle\mathrm{OPT}}} 
\def\bd{\operatorname{BoundDiam}} 
\newcommand{\bigO}{\mathcal{O}}
\theoremstyle{definition}
\title{Long Plane Trees\thanks{A preliminary version appeared as S.~Cabello,
M.~Hoffmann, K.~Klost, W.~Mulzer, and J. Tkadlec. \emph{Long Plane Trees}
in Proc.~38th SoCG, pp. 23:1–-23:17.}}
\titlerunning{Long Plane Trees}
\author{Sergio Cabello}{Faculty of Mathematics and Physics, University of Ljubljana, Slovenia\and
Institute of Mathematics, Physics and Mechanics, Slovenia}{sergio.cabello@fmf.uni-lj.si }{https://orcid.org/0000-0002-3183-4126}{Supported in part by the Slovenian Research Agency (P1-0297, J1-9109, J1-8130, J1-8155, J1-1693, J1-2452, N1-0218, N1-0285). Supported in part by the European Union (ERC, KARST, project number 101071836). Views and opinions expressed are however those of the authors only and do not necessarily reflect those of the European Union or the European Research Council. Neither the European Union nor the granting authority can be held responsible for them.}
\author{Michael Hoffmann}{Department of Computer Science, ETH Z\"urich, Switzerland}{hoffmann@inf.ethz.ch}{https://orcid.org/0000-0001-5307-7106}{Supported by the Swiss National Science Foundation within the collaborative DACH project \emph{Arrangements and Drawings} as SNSF Project 200021E-171681.}
\author{Katharina Klost}{Institut f\"ur Informatik, Freie Universit\"at Berlin, Germany}{kathklost@inf.fu-berlin.de}{}{}
\author{Wolfgang Mulzer}{Institut f\"ur Informatik, Freie Universit\"at Berlin, Germany}{mulzer@inf.fu-berlin.de}{https://orcid.org/0000-0002-1948-5840}{Supported in part by ERC StG 757609.}
\author{Josef Tkadlec}{Computer Science Institute, Charles University, Czech Republic}{josef.tkadlec@iuuk.mff.cuni.cz}{https://orcid.org/0000-0002-1097-9684}{Supported by Charles Univ.~projects UNCE 24/SCI/008 and PRIMUS 24/SCI/012.}
\authorrunning{S. Cabello, M. Hoffmann, K. Klost, W. Mulzer, and J. Tkadlec}
\keywords{geometric network design, spanning trees, plane straight-line 
graphs, approximation algorithms}
\begin{document}
\maketitle

\begin{abstract}
In the \emph{longest plane spanning tree} problem, we 
are given a finite planar point set $\pe$, and our task 
is to find a plane (i.e., noncrossing) spanning tree 
for $\pe$ with maximum total Euclidean edge 
length. Despite more than two decades of 
research, it remains open whether this problem is 
NP-hard. Thus, previous efforts have focused on 
polynomial-time algorithms that produce plane trees 
whose total edge length approximates $\OPT$, the maximum
possible length. 
The approximate trees in these
algorithms all have 
small unweighted diameter, typically three or four. 
It is natural to ask whether this is a common 
feature of longest plane spanning trees, or 
an artifact of the specific approximation algorithms.

We provide three results to elucidate the
interplay between the approximation guarantee and the 
unweighted diameter of the approximate trees.
First, we describe a polynomial-time algorithm to construct a plane tree 
with diameter at most four 
and total edge length at least $0.546 \cdot \OPT$.
This constitutes a substantial improvement over the state
of the art.
Second, we show that a longest plane tree among those with 
diameter at most three 
can be found in polynomial time.
Third, for any candidate diameter $d \geq 3$, we 
provide upper bounds on the approximation factor 
that can be achieved by a longest plane tree 
with diameter at most $d$ (compared to a longest plane tree
without constraints). 
\end{abstract}


\section{Introduction}\label{sec:intro}
\emph{Geometric network design} is a common and well-studied
task in computational geometry and combinatorial 
optimization~\cite{MM17,Har-Peled11,Eppstein00,Mitchell04}. 
In this family of problems, we are given a set $\pe$ of points
in general position, 
and our task is to connect $\pe$ 
into a (geometric) graph that has certain favorable properties.
Not surprisingly, this general question has captivated
the attention of researchers for a long time,
and we can find countless variants, depending on which 
restrictions we put on the graph that connects $\pe$ and which
criteria of this graph we would like to optimize.
Typical graph classes of interest include 
matchings, paths, cycles, trees, or general \emph{plane} 
(\emph{noncrossing}) graphs, i.e.,
graphs, whose straight-line embedding on $\pe$ does not contain any
edge crossings.
Typical quality criteria include the 
total edge length~\cite{MulzerR08,Arora98,Mitchell99,deBergChVKOv08},
the maximum length (bottleneck) edge~\cite{EfratIK01,Biniaz20},
the maximum degree~\cite{Chan04,AroraC04,PapadimitriouV84,FranckeH09}, 
the dilation~\cite{mulzer04minimum,NarasimhanSm07,Eppstein00}, 
or the stabbing number~\cite{MulzerOb20,Welzl92}
of the graph.

Many famous problems from computational geometry fall into this 
general setting.
For example, if our goal is to minimize the total edge length, while
restricting ourselves to paths, trees, or triangulations, respectively,
we are faced with the venerable problems of finding an optimum TSP 
tour~\cite{Har-Peled11}, a Euclidean minimum spanning 
tree~\cite{deBergChVKOv08},
or a minimum weight triangulation~\cite{MulzerR08} for $\pe$.  
These three examples also illustrate the wide variety of complexity aspects
that we may encounter in geometric design problems: the Euclidean TSP
problem is known to be NP-hard~\cite{Papadimitriou77}, 
but it admits a PTAS~\cite{Arora98,Mitchell99}. On the other hand,
it is possible to find a Euclidean minimum spanning tree for $\pe$
in polynomial time~\cite{deBergChVKOv08} (even though, curiously, the 
associated decision problem is not known to be solvable by a 
polynomial-time Turing machine, see, e.g., \cite{bloemer_radicals_1991}).
The minimum weight triangulation
problem is also known to be NP-hard~\cite{MulzerR08}, 
but the existence of a PTAS
is still open; however, a QPTAS is known~\cite{RemyS09}.

In this work, we are interested in the interaction of two specific requirements 
for a geometric design problem, namely the two objectives of obtaining
a plane graph and of optimizing the total edge length. For the case that we
want to \emph{minimize} the total edge length of the resulting graph,
these two goals are often in perfect harmony: the shortest Euclidean
TSP tour and the shortest Euclidean minimum spanning tree are 
automatically plane, as can be seen by a simple application 
of the triangle inequality.
In contrast, if our goal is to \emph{maximize} the total
edge length, while obtaining a plane graph, much less is known.

This family of problems was studied by Alon, Rajagopalan, and
Suri~\cite{AlonRS95}, who considered the problems of computing a
longest plane matching, a longest plane
Hamiltonian path, and a longest plane spanning tree for a planar 
point set $\pe$
in general position. They conjectured that
these three problems are all NP-hard, but as far as we know, 
this is still open.
The situation is similar for the problem of finding a 
\emph{maximum weight triangulation} for $\pe$: here, we
have neither an NP-hardness proof, nor a polynomial time 
algorithm~\cite{ChinQW04}.
If we omit the planarity condition, then the problem of finding a
longest Hamiltonian path (the \emph{geometric maximum TSP problem})
is known to be NP-hard in dimension three and above, while the two-dimensional
case remains open~\cite{BarvinokFJTWW03}. 
On the other hand, we can find a longest (typically  
not plane) tree on $\pe$ in polynomial time, using
classic greedy algorithms~\cite{CLRS}. 

\subparagraph*{Longest plane spanning trees.}
We focus on the specific problem of finding a longest plane 
(i.e., noncrossing) tree for a given set $\pe$ of $n\ge 3$ points in the plane
in general position (that is, no three points in $\pe$ are collinear).
Such a tree is necessarily spanning.
The general position assumption was also used in previous 
work on this problem~\cite{AlonRS95,DBLP:journals/dcg/DumitrescuT10}.
Without it, one should specify whether overlapping edges are allowed,
an additional complication that we would like to avoid.

If $\pe$ is in convex position, then the longest plane 
tree for $\pe$ can be found in polynomial time on a real RAM, 
by adapting standard dynamic programming methods for plane
structures on convex point sets~\cite{Gilbert79,Klincsek80}.
On the other hand, for an arbitrary point set $\pe$, the problem
is conjectured---but not known---to be NP-hard~\cite{AlonRS95}.
Hence, past research has focused on designing polynomial-time 
approximation algorithms.
Typically, these algorithms construct 
several ``simple'' spanning trees for $\pe$ of small (unweighted) 
diameter, and one then argues 
that at least one such tree is sufficiently long.
In a seminal work, Alon et
al.~\cite{AlonRS95}  showed that a longest
star (a plane tree with diameter two) on $\pe$ yields a 
$0.5$-approximation for the
longest (not necessarily plane) spanning tree of $\pe$.
They further argued that this bound is
essentially tight for point sets that consist of two large 
clusters far away from each other. 
Dumitrescu and T\' oth~\cite{DBLP:journals/dcg/DumitrescuT10} refined
this algorithm by adding two additional families of candidate trees,
now with diameter four. They showed that at least one member of this
extended range of candidates constitutes a
$0.502$-approximation, which was further improved to $0.503$ by Biniaz et
al.~\cite{biniaz_maximum_2019}.  In all these results, the 
approximation factor is analyzed by
comparing to the length of a longest (typically not plane)
spanning tree.  Such a tree may be longer by a factor of up to $\pi/2>1.5$ 
than a maximum-length plane tree~\cite{AlonRS95}, as witnessed
by, e.g., a large set of points spaced uniformly on a unit circle.
While the ratio between the lengths of the longest plane tree and 
the longest (possibly crossing) tree is an interesting quantity in itself, 
the original objective is to construct a longest plane tree, and
thus, it is better to compare the length of the constructed plane trees
against the true optimum, that is, against the length of a longest plane tree.
Considering certain trees of diameter at most five, a superset 
of the authors of this paper managed to compare 
against the longest plane tree and pushed the approximation factor to
$0.512$~\cite{cabello2020better}. This was subsequently improved even further
to $0.519$ by Biniaz~\cite{biniaz2020improved}.

\subparagraph*{Our results.}
We provide a deeper study of the interplay between the approximation factor 
and the diameter of the candidate trees.
First, we give 
a polynomial-time 
algorithm to find a tree  of diameter at most four
that guarantees an approximation factor of roughly $0.546$, 
a substantial improvement over the previous bounds.

\begin{restatable}{theorem}{thmapproximation}\label{thm:approximation}
  For any finite point set $\pe$ in general position (the set $\pe$
  contains no three collinear points),
  we can compute in polynomial time a plane tree of
  Euclidean length at least $f \cdot \OPT$, where $\OPT$ denotes
  the length of a longest plane tree on $\pe$ and $f> 0.5467$ is the
  fourth smallest real root of the polynomial
  \[P(x)=-80 + 128 x + 504 x^2 - 768 x^3 - 845 x^4 + 1096 x^5 + 256 x^6.
  \]
\end{restatable}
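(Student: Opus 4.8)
The plan is to establish the approximation guarantee in two stages, following the standard "pick the best of a small family of low-diameter trees" paradigm, but with a much more careful amortized analysis. First I would fix a longest plane tree $\Topt$ and set $\Topt^\star$ to be a variant of it adapted to the analysis — intuitively, one wants to reroot or slightly modify $\Topt$ so that it has a distinguished high-degree vertex, since it is known from the convex-position and two-cluster lower bounds that stars already capture a constant fraction of $|\Topt|$. The core of the argument is to define a small (constant-size) family $\mathcal{F}$ of candidate plane trees of diameter at most four — presumably containing all stars centered at points of $\pe$, together with "double stars" (two centers joined by an edge, every other point attached to one of the two centers) and possibly one more structured family — and to show $\max_{T\in\mathcal{F}}|T|\ge f\cdot|\Topt|$. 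Each member of $\mathcal{F}$ is plane (a star is trivially plane; a double star on points in general position is plane after a suitable choice of which center each leaf attaches to), has diameter $\le 4$, and can be constructed in polynomial time by trying all $O(n^2)$ choices of centers and, for each, computing the optimal leaf assignment; so the algorithmic claim reduces entirely to the existence inequality.

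The heart of the proof is the inequality $\max_{T\in\mathcal{F}}|T|\ge f\cdot|\Topt|$. Here I would argue by contradiction: assume every candidate tree is shorter than $f\cdot|\Topt|$, and derive from these $O(n^2)$ simultaneous inequalities a contradiction. The key geometric tool will be a triangle-inequality-type bound: for any three points $a,b,c$, at least one of $|ab|,|ac|,|bc|$ compares favorably to the others, and more usefully, the edges of $\Topt$ can be charged to edges of stars and double stars via the fact that for a leaf $\ell$ attached to an internal edge $uv$ of $\Topt$, either $|u\ell|$ or $|v\ell|$ is at least $|uv|/2$-ish, etc. I expect the argument to proceed by choosing two well-separated "anchor" points (the diameter pair of $\Topt$, or the two endpoints of a longest edge), partitioning $\pe$ by proximity to these anchors, and bounding $|\Topt|$ from above by a weighted sum of candidate-tree lengths using the plane structure of $\Topt$ to control how edges can cross the partition. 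The polynomial $P(x)$ and the specific constant $f>0.5467$ should emerge as the threshold where a linear program — maximize the ratio $|\Topt|/\max_{T\in\mathcal{F}}|T|$ subject to the triangle-inequality and planarity-derived constraints — attains its optimum; the four nested families of candidate trees presumably correspond to the number of constraint types, which is why it is the \emph{fourth} smallest root.

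The main obstacle, and where the bulk of the work lies, is the tightness of the constant: getting past the previous $0.519$ barrier requires exploiting the planarity of $\Topt$ more aggressively than just "it's shorter than the longest unconstrained tree." Concretely, the hard part will be showing that the adversary cannot simultaneously make all stars, all double stars, and the third candidate family short — this forces one to understand the worst-case configuration (which the two-cluster example suggests is something like points split among a bounded number of tight clusters) and to verify that on any such configuration the best candidate achieves ratio $f$. I anticipate this reduces to analyzing a finite-dimensional optimization over cluster weights and inter-cluster distances, where general position and the noncrossing condition on $\Topt$ impose the inequalities that pin down $P(x)$; checking that $f>0.5467$ is indeed the relevant root (rather than a spurious smaller one) and that the extremal configuration is attainable by an actual plane tree is the delicate endgame. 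The remaining steps — planarity of each candidate, polynomial running time, and the reduction from the general point set to the clustered worst case — I expect to be routine once the right family $\mathcal{F}$ and the right two anchors are identified.
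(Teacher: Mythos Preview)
Your high-level framework is right---the paper does pick the best tree out of a small family containing all stars $S_a$ and all trees $T_{a,b}$ of diameter at most four, $O(n^2)$ candidates in total---but the actual analysis is quite different from what you sketch, and the speculative parts of your plan would not lead to the result.

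First, the core of the paper's argument is not a global LP over cluster configurations or a proof by contradiction from $O(n^2)$ inequalities. Instead, the paper fixes the longest edge $ab$ of $\Topt$, orients every edge of $\Topt$ toward $ab$, and proves a \emph{pointwise} inequality: for every $p\notin\{a,b\}$, a fixed convex combination $\avg(p,\beta)=(\tfrac12-\beta)\ell_{S_a}(p)+\beta\,\ell_{T_{a,b}}(p)+\beta\,\ell_{T_{b,a}}(p)+(\tfrac12-\beta)\ell_{S_b}(p)$ of the edges assigned to $p$ in four specific candidate trees is at least $f\cdot\ell_{\Topt}(p)$. Summing over $p$ then gives the claim for one of those four trees. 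There is no clustering and no adversary analysis; the whole thing is a per-point geometric estimate inside a lens $D(a,2)\cap D(b,2)$, after separately disposing of the cases where the longest edge of $\Topt$ is short (two stars suffice) or some point lies in a ``far region'' (three stars suffice).

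Second, your guess about the origin of $P(x)$ and the phrase ``fourth smallest root'' is off. The polynomial does not encode four families of trees or four LP constraints. It arises because the pointwise inequality is proved under two separate constraints on the weight $\beta$, namely $\beta\ge (2f-1)/(2\sqrt{5-8f}-1)$ and $\beta\le 1-f\sqrt{4f^2-1}-2f^2$; the best $f$ makes these two bounds meet, and clearing radicals produces the degree-six polynomial. It has six real roots (plus the spurious factor $x-\tfrac58$ from squaring), and the one that actually solves the original equation with the correct sign happens to be the fourth when listed in increasing order---nothing structural about ``four'' there.

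The genuine gap in your proposal is the absence of the per-point charging scheme and the specific tree $T_{a,b}$: it is not a double star but a diameter-four tree in which $a$ is joined to every point closer to $b$, and each point closer to $a$ is hung off one of those spokes inside a convex wedge. Without this construction and the accompanying lens geometry (bounding $\ell_{\Topt}(p)$ by $\min\{2d,\max\{\Vert pp_a\Vert,\Vert pp_b\Vert,\Vert pp_u\Vert\}\}$), there is no path from your outline to the constant $0.5467$.
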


The algorithm of \cref{thm:approximation} 
constructs six plane trees: 
four stars and two general trees of 
diameter at most four. We show 
that one of these 
trees is always sufficiently long.  
The algorithm is very simple, but its analysis
uses several geometric insights. The polynomial $P(x)$ comes from
optimizing several constants that appear in the algorithm.

Second, we
characterize longest plane trees for convex point sets.
A \emph{caterpillar} is a tree $T$ that contains a path $S$, the
\emph{spine}, so that every vertex of $T\setminus S$ is adjacent to a
vertex in $S$.
A tree $T$ that is straight-line embedded on a convex point set $\pe$
is a \emph{zigzagging caterpillar} if its edges split the convex hull of $\pe$
into faces that are all triangles. Our next two theorems show that
the longest plane trees for convex sets are given exactly by caterpillar
graphs.

\begin{restatable}{theorem}{thmconvex}\label{thm:convex}
  Let $\pe$ be a convex finite point set in the plane. Then, 
  every longest plane tree on $\pe$ is a zigzagging
  caterpillar.
\end{restatable}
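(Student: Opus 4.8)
The plan is to combine two structural facts: that a longest plane tree on a convex $\pe$ is a caterpillar, and that its spine and legs are forced into the zigzag pattern. Both will come from exchange arguments driven by the quadrilateral inequality, and since $\pe$ is in general position every distance comparison used will be strict, so the argument shows that \emph{every} longest plane tree has the stated form, not merely some normalized one. Before starting I would fix the precise meaning of a zigzagging caterpillar on convex points — two complementary hull arcs $A,B$, a spine that alternates between $A$ and $B$, and at each spine vertex a block of legs going to a contiguous run of the opposite arc, with these runs appearing in hull order — and observe that any such tree is plane, so the content is the converse.

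The workhorse is that four points $a,b,c,d$ occurring in this cyclic order on the convex hull satisfy $|ac|+|bd|>|ab|+|cd|$ and $|ac|+|bd|>|ad|+|bc|$. A first consequence I would establish directly is a leaf-stability property: in a longest plane tree $T$, every leaf $\ell$ is attached to a farthest visible vertex, i.e. if $u$ is the neighbour of $\ell$ then $|\ell u|$ is the maximum of $|\ell w|$ over all $w$ such that the segment $\ell w$ crosses no edge of $T$ other than $\ell u$ — otherwise re-attaching $\ell$ to that farther vertex keeps the tree plane and strictly lengthens it. This already forces legs to point as far across as possible, which is the source of the hull order of the legs, and it suggests an induction on $n$: deleting a leaf of a longest plane tree should again yield a longest plane tree on the remaining $n-1$ convex points, to which the inductive hypothesis applies, and re-inserting the leaf at a farthest visible vertex should preserve the zigzag caterpillar shape.

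To obtain the caterpillar property, suppose $T$ is longest but not a caterpillar: deleting all leaves still leaves a vertex $v$ of degree at least three, so $T-v$ has three components with at least two vertices each, lying in three disjoint angular sectors at $v$ bounded by the chords from $v$ to its neighbours. One then picks, in one component, a vertex together with a tree edge that can be cut and reconnected across a bounding chord so that the new edge together with two old ones forms a convex quadrilateral on which the total length strictly increases, while the reconnection introduces neither a crossing nor a cycle — and this last requirement is the delicate part, since the bare quadrilateral inequality would have us insert a crossing pair of diagonals, so the exchange must be arranged to use only one new chord at a time or to exploit the sector structure. Once $T$ is known to be a caterpillar, the same kind of bounded, planarity-respecting exchange applied to three consecutive spine vertices forces the spine to alternate between the two hull arcs, and exchanges on (leaf, spine vertex) pairs (or simply the leaf re-routing move above) force each leg onto the correct arc and into hull order.

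I expect the main obstacle to be exactly this tension between length and planarity: the quadrilateral inequality always improves toward the crossing pair of diagonals, which a plane tree cannot contain, so every exchange in the proof must be engineered — by moving a single well-chosen vertex, by using the angular-sector decomposition at a high-degree vertex, or by leaf re-routing — to be simultaneously length-increasing, crossing-free, and acyclicity-preserving. Showing that such an exchange is always available whenever $T$ deviates from a zigzagging caterpillar is the technical heart of the argument. A smaller but genuine point is to handle the degenerate shapes correctly in the definition (single-vertex spine, i.e. stars, and short spines such as a hull edge), so that the statement ``every longest plane tree is a zigzagging caterpillar'' is both true and unambiguous.
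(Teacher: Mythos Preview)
Your outline heads in the right direction but is far more elaborate than necessary, and the worry you flag as the ``technical heart'' dissolves once you adopt the paper's viewpoint. The paper defines \emph{zigzagging caterpillar} via the dual graph $T^\star$: the regions into which the edges of $T$ cut the convex hull are the vertices of $T^\star$, adjacent when they share an edge of $T$; the tree $T$ is a zigzagging caterpillar exactly when $T^\star$ is a path. With this definition there is nothing to do separately for ``caterpillar'' and then ``zigzag'', and the whole proof is a single short exchange. If some face $v$ of $T^\star$ has degree $\ge 3$, then three of its bounding edges share endpoints in a chain $ab,bc,cd$; since $\pe$ is convex, $abcd$ is a convex quadrilateral and $|ab|+|cd|<|ac|+|bd|$, so one of the swaps $ab\to ac$ or $cd\to bd$ strictly increases length. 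Crucially, the new edge $ac$ (resp.\ $bd$) lies entirely inside the face $v$, so it crosses no edge of $T$, and it closes the short cycle $a\!-\!b\!-\!c$ (resp.\ $b\!-\!c\!-\!d$), so removing the old edge restores a tree. That is the whole argument; the ``tension between length and planarity'' you anticipate never arises, because the exchange happens inside a single face.

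Two specific points about your plan. First, the inductive step ``deleting a leaf of a longest plane tree again yields a longest plane tree on the remaining points'' is not obvious and, as stated, unjustified; the optimal tree on $\pe\setminus\{\ell\}$ may differ globally from $\Topt\setminus\{\ell\}$. Second, your two-phase program (prove caterpillar, then prove zigzag via further exchanges on spine triples and leaf reroutings) can presumably be made to work, but each phase would require exactly the kind of face-local exchange that the dual formulation gives you for free in one stroke. Recasting your definition of ``zigzagging caterpillar'' in terms of the dual being a path is the missing idea.
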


\begin{restatable}{theorem}{thmcaterpillars}\label{thm:caterpillars}
  For any caterpillar $C$, there exists a convex point set $\pe$ such
  that the unique longest tree for $\pe$ is isomorphic to $C$.
\end{restatable}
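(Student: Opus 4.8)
The plan is to build, for a given caterpillar $C$ on $n$ vertices, a convex point set $\pe$ together with a plane straight-line drawing $T^\star$ of $C$ on $\pe$ for which $|T^\star|>|T|$ holds for every plane tree $T\neq T^\star$ on $\pe$. By \cref{thm:convex} every longest plane tree on a convex point set is a zigzagging caterpillar, hence so is every plane tree whose length equals the maximum; thus it suffices to guarantee $|T^\star|>|T|$ for all \emph{zigzagging caterpillars} $T\neq T^\star$ (then $T^\star$ is the unique longest plane tree, and it is isomorphic to $C$). As a preliminary step one checks that $C$ is realizable as a zigzagging caterpillar at all: fix a spine $s_1,\dots,s_k$ of $C$ with leaf sets $L_1,\dots,L_k$ and place the vertices along the convex hull in the cyclic order $L_1, s_1, L_3, s_3, L_5, s_5, \dots, L_4, s_4, L_2, s_2$ (odd-indexed spine vertices in increasing order, then even-indexed ones in decreasing order, each spine vertex immediately preceded by its leaves); a direct check shows that the edges of $C$ subdivide the hull into $n-2$ triangles, so this drawing is a zigzagging caterpillar isomorphic to $C$.

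Next I would make the point set almost degenerate. Put the $n$ points on a very flat, strictly concave arc just above a horizontal segment, realizing the cyclic order above, at $x$-coordinates $x_1<\dots<x_n$ still to be chosen, and write $w_g:=x_{g+1}-x_g>0$. If the arc is flat enough then for \emph{every} plane tree $T$ one has $|T|=\sum_{g=1}^{n-1}c_T(g)\,w_g+O(\eta)$, where $\eta$ is the height of the arc and $c_T(g)$ counts the edges of $T$ having one endpoint among the first $g$ points and one among the last $n-g$. Since all these errors are $O(\eta)$, they can be absorbed once the $w_g$ are fixed. So, up to the controllable $O(\eta)$ term, everything reduces to a $1$-dimensional problem: choose the gaps $w_g>0$ so that the \emph{crossing profile} $\bigl(c_{T^\star}(1),\dots,c_{T^\star}(n-1)\bigr)$ strictly maximizes $\sum_g c_T(g)\,w_g$ over all zigzagging caterpillars $T$ on these points. (This works out cleanly in small cases; for instance, for $n=4$ the points at $x$-coordinates $0,1,4,6$ make the path $P_4$ the unique longest plane tree, while the points at $0,3,4,6$ make the star $K_{1,3}$ the unique longest.)

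The heart of the argument is this $1$-dimensional statement. Among the several zigzagging-caterpillar drawings of $C$ one should choose one, $T^\star$, whose crossing profile is a vertex of the convex hull of the set of all zigzagging-caterpillar crossing profiles on $n$ points in convex position; such a $T^\star$ is then the unique maximizer of $\sum_g c_T(g)\,w_g$ for a suitable weight vector $w$, and one takes $w$ accordingly (concretely, $w$ can be taken as a small perturbation of the vector obtained by ordering the $n-1$ gaps by how decisive they are for $T^\star$ and letting the weights decay geometrically along that order, so that $\sum_g c_T(g)\,w_g$ compares profiles lexicographically in that order). Having fixed $w$ so that $c_{T^\star}$ strictly dominates, choose the arc height $\eta$ small enough to swallow the $O(\eta)$ corrections; if some zigzagging caterpillars still tie $T^\star$---which can only happen for drawings sharing the profile $c_{T^\star}$---apply a generic infinitesimal perturbation of the points off the arc: it preserves the cyclic order, hence the family of plane trees, makes all tree lengths pairwise distinct, and, being arbitrarily small, keeps $T^\star$ strictly longest.

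The step I expect to be the main obstacle is the combinatorial claim just invoked: that $C$ admits a zigzagging-caterpillar drawing whose crossing profile is a vertex of the profile polytope, and that after the perturbation no surviving longest tree has an isomorphism type different from $C$. This needs a description of which integer vectors occur as crossing profiles of zigzagging caterpillars, and which of those are vertices; the natural approach is to exhibit, given $C$, a drawing that is ``extremal'' for a gap ordering tailored to the spine of $C$ (intuitively, the decisive gaps are the ones straddled by the long spine diagonals of $T^\star$, and forcing those diagonals forces the rest of $T^\star$). A complementary viewpoint, which makes ``$T^\star$ is \emph{a} longest plane tree'' transparent, is to spread $\pe$ into two far-apart clusters, one containing the odd-indexed spine vertices together with the leaves of the even-indexed ones and the other containing the even-indexed spine vertices together with the leaves of the odd-indexed ones; then every edge of $T^\star$ runs between the two clusters, so $T^\star$ attains the trivial upper bound of $n-1$ near-diametral edges, and the remaining---still delicate---task is to give the two clusters enough internal near-degenerate structure to single out $T^\star$ among the plane spanning trees of the resulting bipartite convex configuration.
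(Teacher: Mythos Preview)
Your outline shares the paper's starting point---put the $n$ points on a flat convex arc and compare plane trees via their ``crossing profiles'' (the paper calls these \emph{cover sequences}) against a chosen gap vector $(w_1,\dots,w_{n-1})$---and you correctly reduce, via \cref{thm:convex}, to beating all other zigzagging caterpillars. But the part you yourself flag as ``the main obstacle''---showing that the target drawing's profile is a vertex of some polytope, engineering lexicographically decaying weights, and then perturbing to break residual ties---is exactly where the paper has a short, explicit argument that you are missing.

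The key lemma is this: on a flat arc $a_1,\dots,a_n$, every zigzagging caterpillar that contains the edge $a_1a_n$ has cover sequence equal to a \emph{unimodal permutation} of $\{1,\dots,n-1\}$. (One-line induction: deleting $a_1a_n$ leaves a zigzagging caterpillar on $n-1$ points containing either $a_1a_{n-1}$ or $a_2a_n$; adding $a_1a_n$ back increments every entry and appends a $1$ at one end.) Moreover, every longest plane tree on the arc must contain $a_1a_n$, since that edge is strictly longest and crosses nothing. So every candidate for $\Topt$ has cover sequence some unimodal permutation $\pi$ of $\{1,\dots,n-1\}$.

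With this in hand the gap choice is explicit: if your target drawing $T^\star$ has cover sequence $(c_1,\dots,c_{n-1})$, set $w_i:=c_i$. Then $|T^\star|=\sum_i c_i^2$ while any competitor has length $\sum_i \pi_i c_i$, and the rearrangement inequality gives $\sum_i \pi_i c_i\le \sum_i c_i^2$ with equality only when $\pi=c$. Since (by the same induction) the cover sequence determines the zigzagging caterpillar uniquely once $a_1a_n$ is present, $T^\star$ is the unique longest plane tree. No polytope vertices, no lexicographic weights, and---importantly---no perturbation step: your perturbation argument is not safe as written, because you never established that two zigzagging caterpillars with equal profile must be isomorphic, so breaking a tie by perturbation could hand the win to a tree not isomorphic to $C$. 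The two-cluster idea in your last paragraph is a different construction altogether and, as you note, leaves the uniqueness question just as open.
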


In particular, \cref{thm:caterpillars} implies that 
the diameter of a (unique) longest plane tree can be
arbitrarily large.  As a consequence, we obtain an upper bound on the
approximation factor $\bd(d)$ that can be achieved by a plane tree of
diameter at most $d$.

\begin{restatable}{theorem}{thmdiameterbound}\label{thm:diameter-bound}
  For every $d\ge 2$, there exists a convex point set $\pe$ so that
  every plane tree of diameter at most $d$ on $\pe$ is at most
  \[
    \bd(d)\le 1- \frac{6}{(d+1)(d+2)(2d+3)} = 1-\Theta(1/d^3)
  \]
  times as long as the length $|\Topt|$ of a longest (general) plane
  tree on $\pe$.
\end{restatable}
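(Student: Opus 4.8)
The plan is to produce, for every $d\ge 2$, a single convex point set $\pe=\pe_d$ on which the longest plane tree genuinely needs large diameter, and then to read off the bound. The starting point is a combinatorial dichotomy: if $|\pe|=n=d+2$, a spanning tree of $\pe$ has diameter $d+1$ exactly when it is a Hamiltonian path, and every other spanning tree has diameter at most $d$. Hence it suffices to build $\pe_d$ so that (a)~its unique longest plane tree is a Hamiltonian path $P$ (necessarily of diameter $d+1$), and (b)~$|P|$ exceeds the length of every plane spanning tree of $\pe_d$ that is \emph{not} a Hamiltonian path by at least the length $\ell$ of the shortest edge of $P$. Granting this, every plane tree of diameter at most $d$ has length at most $|P|-\ell$, so it only remains to arrange $|P|=\big(\sum_{k=1}^{d+1}k^{2}\big)\,\ell$; the closed form then follows from $\sum_{k=1}^{d+1}k^{2}=\frac{(d+1)(d+2)(2d+3)}{6}$, and the stated $1-\Theta(1/d^{3})$ asymptotics are immediate.

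For the construction I place the $d+2$ points on a convex curve and tune the spacing, letting \cref{thm:convex} carry the load: it reduces the competition among all plane spanning trees to zigzagging caterpillars, i.e.\ trees whose edges triangulate the convex hull, which form a finite family amenable to exchange arguments. The intended winner $P$ is the alternating path $q_{0},q_{n-1},q_{1},q_{n-2},q_{2},\dots$, which does triangulate the hull. The curve must have genuine curvature: on a nearly collinear point set the optimum drifts to a star or broom whose long edges reach an extreme point, so curvature is needed to neutralise that while still rewarding the long diagonals that $P$ uses. With the shape fixed I solve a small linear system -- in the same spirit as the realization behind \cref{thm:caterpillars} -- making the $k$-th edge of $P$ have Euclidean length $(n-k)^{2}$ up to a common scale, so that $|P|=\sum_{k=1}^{n-1}k^{2}=\sum_{k=1}^{d+1}k^{2}$ in units with $\ell=1$; a final generic perturbation puts $\pe_d$ in general position while changing all lengths only negligibly and leaving the relevant combinatorics intact.

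The analytic heart is step~(b). By \cref{thm:convex} I may take the competitor to be a zigzagging caterpillar $C\ne P$; since $C$ is not a Hamiltonian path it has a vertex of degree at least $3$, and a local rerouting of such a ``fork'' produces a zigzagging caterpillar that is at least as long as $C$ and combinatorially closer to $P$. Iterating transforms $C$ into $P$ through length-nondecreasing moves, and the quadratic spacing is chosen precisely so that the total gain is at least $\ell$ whenever $C$ is not a Hamiltonian path; this simultaneously shows that $P$ is the unique longest plane tree and that $|C|\le|P|-\ell$. Combining with the dichotomy, every plane tree of diameter at most $d$ has length at most $|P|-\ell=\big(1-\frac{1}{\sum_{k=1}^{d+1}k^{2}}\big)|\Topt|$, which is the claimed bound.

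The step I expect to fight with is making~(b) \emph{tight}: arranging a convex point set whose longest plane tree is a Hamiltonian path at all (nontrivial, since the optimum otherwise tends to a low-diameter caterpillar), and then showing that the margin to \emph{all} non-path competitors is no smaller than the single shortest edge -- any lossier exchange estimate yields only a weaker bound than $\frac{6}{(d+1)(d+2)(2d+3)}$. Pinning down that the alternating path, rather than some other zigzagging caterpillar, wins for the chosen spacing, and checking that this is robust under the perturbation into general position, is where the effort concentrates; the summation identity and the diameter dichotomy are routine.
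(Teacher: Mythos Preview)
Your high-level plan matches the paper's: take $n=d+2$ convex points so that ``diameter $\le d$'' is equivalent to ``not a Hamiltonian path'', arrange that the unique longest plane tree is a path $P$ with $|P|=\sum_{k=1}^{d+1}k^{2}$ (in some unit $\ell$), and show every other plane spanning tree loses by at least $\ell$.

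But your execution contains a real misconception that sends you down a harder road. You write that ``the curve must have genuine curvature: on a nearly collinear point set the optimum drifts to a star or broom.'' The paper does exactly the opposite: it uses a \emph{flat arc} (infinitesimal $y$-coordinates) with \emph{integer} gap sequence $G=(1,3,5,\dots,d+1,\dots,6,4,2)$. On a flat arc the length of a tree is $\sum_i c_i g_i$ where $(c_i)$ is its cover sequence; by \cref{lem:unimodal} every zigzagging caterpillar containing $a_1a_{m+1}$ has a cover sequence that is a unimodal permutation of $\{1,\dots,m\}$, and the rearrangement inequality (exactly as in the proof of \cref{thm:caterpillars}) then forces the unique optimum to have cover sequence equal to $G$ itself---which is a path of diameter $d+1$. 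Flatness does not push the optimum toward a star; the \emph{shape of the gap sequence} controls which caterpillar wins, and a centred unimodal sequence selects the zigzag path.

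The second idea you are missing is that your step~(b), which you correctly flag as the hard part, is free in the paper's setup. Because the gaps are integers, every plane spanning tree has (up to the infinitesimal flatness error) integer length, so uniqueness of the optimum immediately yields $|T|\le|\Topt|-1$ for every $T\ne\Topt$. No exchange argument, no ``rerouting of forks'', no robustness-under-perturbation analysis is needed. Your proposed route via a curved placement with prescribed edge lengths $(n-k)^2$ and iterated local moves might be completable, but you have not given that argument, and it is strictly harder than the flat-arc-plus-integrality trick.
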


We have better bounds for small values of $d$. For $d=2$, it is easy to
see that $\bd(2)\le 1/2$: put two groups of roughly half of the points
sufficiently far from each other \cite{AlonRS95}. For $d=3$, we can show
$\bd(3)\le 5/6$.

\begin{restatable}{theorem}{thmdiameterthreebound}\label{thm:diameter-3-bound}
  For every $\eps>0$, there exists a convex point set $\pe$ such that every
  longest plane tree of diameter  three on $\pe$ is at most 
  $(5/6)+\eps$ times as long as every longest (general) plane tree on $\pe$.
\end{restatable}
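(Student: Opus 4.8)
The plan is to build, for each $\varepsilon>0$, a convex point set on which some plane caterpillar of diameter four is essentially optimal while every plane tree of diameter at most three is shorter by a factor approaching $6/5$. Concretely I would take a bounded number of tightly clustered groups of points whose ``centres'' lie in convex position, so that up to a vanishing additive error all edge lengths equal the corresponding inter-cluster distances and only the combinatorics of the inter-cluster edges of a non-crossing tree matters; the cluster positions and sizes are the free parameters, tuned so that in the clustering limit the bound is exactly $5/6$. (The point sets underlying \cref{thm:caterpillars}, specialised to a diameter-four caterpillar and pushed towards an extremal metric, are a natural source of such configurations.)

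First I would fix the construction and pin down $\Topt$. Exhibiting a single plane caterpillar $T^{\star}$ of diameter four gives $|\Topt|\ge |T^{\star}|$; for the reverse inequality I would either invoke \cref{thm:convex} (so $\Topt$ is a zigzagging caterpillar, which on a clustered configuration is easy to enumerate) or argue directly, bounding from above, for each macroscopic gap between consecutive clusters, the number of tree edges that can cross it. Either way one gets $|\Topt| = (1+O(\varepsilon))\,|T^{\star}|$, so the claimed inequality reduces to an upper bound on diameter-three trees in terms of $|T^{\star}|$.

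The heart of the argument is that upper bound. A plane tree of diameter at most three is a \emph{double star}: it has a centre edge $uv$, and every remaining point is a leaf of $u$ or of $v$. I would run over the $O(1)$ choices (up to the symmetry of the construction) of which clusters contain $u$ and $v$. For each choice, planarity dictates the leaf assignment: on each of the two arcs into which the chord $uv$ splits the remaining points, the leaves of $u$ form the prefix near $u$ and the leaves of $v$ the complementary suffix; this makes the maximum double-star length an explicit, piecewise-linear function of the cluster sizes. Taking the maximum over all cases and comparing with $|T^{\star}|$ gives the ratio, with the parameters chosen so that the worst case equals $5/6$.

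The step I expect to be the real obstacle is arranging for that worst case to be exactly $5/6$ rather than something larger: several ``centre-edge placements'' must be balanced against the single optimum simultaneously, and the planarity restriction for a double star whose centre edge spans several clusters interacts delicately with which clusters lie strictly between $u$ and $v$ — an edge drawn over an intermediate cluster crosses the edges attaching that cluster unless they share an endpoint. Reconciling all of these local constraints while keeping $|T^{\star}|$ essentially optimal is where the bulk of the work lies; converting the resulting inequality into the stated ``$5/6+\varepsilon$'' form via the clustering limit is then routine.
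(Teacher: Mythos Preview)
Your framework is right and matches the paper's: exhibit a convex point set, lower-bound $|\Topt|$ by one explicit plane tree, then upper-bound every diameter-three tree (a bistar with centre edge $uv$) by casework on the placement of $uv$, using that planarity forces the leaves of $u$ on each side of $uv$ to form a contiguous arc. The paper carries out exactly this program, including the observation that on each arc the leaves of $u$ and of $v$ cannot interleave.

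But the construction is the whole proof, and you have not supplied one; you yourself flag this as ``where the bulk of the work lies''. The paper does \emph{not} use clusters, nor does it specialise a flat arc from \cref{thm:caterpillars} to a diameter-four caterpillar. It takes a flat set $\pe_{4k+2}$ made of \emph{two} arcs, mirror-symmetric about a horizontal line, each with gap sequence $(1,\dots,1,\,2k+1,\,1,\dots,1)$ (with $k$ unit gaps on either side of one big gap). A specific plane tree $T_L$ gives $|\Topt|\ge 12k^2+6k+1$, and the bistar analysis yields, in closed form, $|T|\le 10k^2+6k+1$ for every tree $T$ of diameter at most three; the ratio tends to $5/6$ as $k\to\infty$. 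The extremal bistar has its two centres on opposite sides of the big gap, and the paper gets the exact value $10k^2+6k+1-2ij$ (where $i,j$ are the offsets of the centres from the gap) rather than balancing an optimisation.

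Two points where your hints may lead you astray. First, the paper's point set is not a single flat arc, and the two-arc structure is used: a reflection of $a$ across the symmetry axis shows the extremal bistar has its centres on opposite arcs, which is what makes the ``prefix/suffix'' leaf assignment clean and the arithmetic exact. A single flat arc in the style of \cref{thm:caterpillars} seems unlikely to reach $5/6$. Second, a cluster construction could in principle work, but your proposal gives no indication of how many clusters, where, or with what multiplicities; the paper's $4k+2$ points spread along two arcs are doing real work that a handful of tight clusters would not obviously replicate.
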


Third, we give polynomial-time algorithms for finding a longest
plane tree 
among those of diameter at most three and among a special class of
trees of diameter at most four.
Note that in contrast to diameter two, the number of spanning trees of
diameter at most three is exponential in the number of points.

\begin{restatable}{theorem}{thmdp}\label{thm:dp}
  For any set $\pe$ of $n$ points in general position, a longest plane 
  tree of diameter at
  most three on $\pe$ can be computed in $\bigO(n^4)$ time.
\end{restatable}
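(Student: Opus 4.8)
The plan is to use the fact that a plane tree of diameter at most three on $\pe$ (with $n\ge 3$) is exactly a \emph{double star}: it has two adjacent centers $a,b$ joined by a spine edge, and every other point is a leaf attached to $a$ or to $b$. Stars (diameter two) are the degenerate case in which all leaves attach to a single center, and such a star is recovered by choosing any incident edge as the spine. Since there are only $\binom{n}{2}$ candidate spine edges, it suffices to show that for a fixed pair $a,b$ a longest \emph{plane} double star with spine $ab$ can be computed in $\bigO(n^2)$ time; taking the maximum over all spines then gives $\bigO(n^4)$ overall.

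Fix $a,b$ and let $\ell$ be the line through them; by general position no other point of $\pe$ lies on $\ell$. First I would carry out a crossing analysis. The spine edge shares an endpoint with every leaf edge, and two leaf edges at the same center never cross, so the only possible crossings are between an edge $ap$ ($p$ a leaf of $a$) and an edge $bq$ ($q$ a leaf of $b$). Moreover, if $p$ and $q$ lie on opposite sides of $\ell$, then $ap$ stays in the closed half-plane on $p$'s side and meets $\ell$ only in $a$, while $bq$ meets $\ell$ only in $b$, so $ap$ and $bq$ cannot cross; hence the leaves above $\ell$ and the leaves below $\ell$ decouple completely. For two leaves $p,q$ on the same side of $\ell$, a short orientation-test computation shows that $ap$ and $bq$ cross if and only if $q$ lies counterclockwise of the ray $\vec{ap}$ \emph{and} $p$ lies clockwise of the ray $\vec{bq}$ (the proof uses that every leaf above $\ell$ has its angle around $a$ strictly smaller than its angle around $b$, which constrains the configuration of the two fans of rays). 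Defining, on each side, the partial order in which one leaf \emph{dominates} another when it is more counterclockwise around $a$ and simultaneously more counterclockwise around $b$, the double star is plane precisely when, on each side, no leaf of $b$ dominates a leaf of $a$; equivalently, on each side the set $B$ of leaves assigned to $b$ is a down-set (order ideal) of this dominance order, and $A$ is its complement.

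It then remains, for each side, to choose an ideal $B$ maximizing $\sum_{p\in A}|ap|+\sum_{p\in B}|bp| = \sum_p |ap| + \sum_{p\in B}\bigl(|bp|-|ap|\bigr)$, i.e. a maximum-weight down-set of a two-dimensional dominance order with weights $w_p = |bp|-|ap|$. I would solve this by a direct dynamic program: sort the $m$ leaves of the side by the angle of $\vec{bp}$ and process them in that order, maintaining as state the smallest angle of $\vec{ap}$ among the leaves excluded so far; a leaf may be added to $B$ exactly when its angle of $\vec{ap}$ is below that threshold, while excluding a leaf merely updates the threshold. The state assumes $\bigO(m)$ values over $\bigO(m)$ steps, so this runs in $\bigO(m^2)=\bigO(n^2)$ time, and standard backtracking recovers the assignment. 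Running it on both sides and adding $|ab|$ yields a longest plane double star with spine $ab$ in $\bigO(n^2)$ time, and the maximum over all spine edges yields the longest plane tree of diameter at most three, together with the tree itself, in $\bigO(n^4)$ time.

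The technical heart — and the step most in need of care — is the crossing characterization: proving that same-side crossings are governed exactly by the two-dimensional dominance relation, and verifying that opposite-side leaf edges and the spine never interfere, so that the two sides genuinely decouple into independent maximum-weight down-set instances. Once that is established, the reduction to maximum-weight down-sets and the quadratic-time dynamic program are routine; a small amount of additional care is needed only to confirm that the degenerate cases (stars, and the small cases) are covered by ranging over all spine edges.
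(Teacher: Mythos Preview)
Your proof is correct and shares the same outer structure as the paper: observe that diameter-$\le 3$ trees are exactly bistars (double stars), iterate over all $\binom{n}{2}$ center pairs, and compute the longest plane bistar for each fixed pair in $\bigO(n^2)$ time, for $\bigO(n^4)$ total. The difference lies entirely in the $\bigO(n^2)$ inner algorithm.

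The paper (Lemmas~\ref{lem:bistarcorrect} and~\ref{lem:bestbistar}) also decouples the two sides of the line through $a$ and $b$, but then sweeps downward by $y$-coordinate: the DP state is a ``valid pair'' $(p,q)$ with $ap$ and $bq$ noncrossing, which determines a quadrilateral region $Q(p,q)$ still to be processed; the recurrence takes the highest remaining point $k$ in that region and branches on whether $k$ attaches to $a$ or to $b$, each choice forcing a triangular set of points and leaving a smaller subproblem. Achieving $\bigO(n^2)$ then requires some nontrivial bookkeeping with precomputed angular orders around $a$ and $b$.

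Your route is more combinatorial: after establishing that the same-side crossing condition between an $a$-edge $ap$ and a $b$-edge $bq$ is exactly two-dimensional dominance of $p$ by $q$ in the pair of angular orders, you reduce each side to a \emph{maximum-weight order ideal} problem in a 2D poset and solve that with a clean DP sorted by one angular coordinate, carrying the other as a monotone threshold. This makes the $\bigO(n^2)$ bound immediate and avoids the ad hoc range-sum bookkeeping. The paper's geometric sweep, on the other hand, is the viewpoint that generalizes to the tristar dynamic program in Section~\ref{sec:diam4}. One minor remark: the inequality you invoke (that for each upper leaf the angle around $a$ is strictly smaller than the angle around $b$) is true but not actually needed for the crossing characterization, which already follows from the two standard orientation tests.
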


\begin{restatable}{theorem}{thmtristar}\label{thm:tristar}
  For any set $\pe$ of points in general position and for any 
  three specified points on the boundary of the
  convex hull of $\pe$, we can compute in polynomial time 
  the longest plane tree such that each edge
  is incident to at least one of the three specified points.
\end{restatable}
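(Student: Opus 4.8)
The plan is to guess a small part of the tree, its \emph{core}, and then reduce the rest of the optimization to independent sub-problems, one per face of the core, each asking for a longest plane star, double star, or triple star inside a region; these are solved by a dynamic program over an angular order. So let $a,b,c$ be the three specified points on the convex hull of $\pe$ and let $T$ be a plane tree all of whose edges are incident to $\{a,b,c\}$. Any vertex $v\notin\{a,b,c\}$ has all of its edges going to $\{a,b,c\}$, hence $\deg(v)\le 3$, with equality only if $v$ is adjacent to all three; two such vertices would create a cycle, so at most one exists (the \emph{claw} case). Otherwise every vertex outside $\{a,b,c\}$ has degree at most two: a degree-$2$ vertex (a \emph{connector}) has both its neighbors in $\{a,b,c\}$, and a degree-$1$ vertex (a \emph{leaf}) has its unique neighbor in $\{a,b,c\}$. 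Consequently the subtree $K\subseteq T$ spanned by $\{a,b,c\}$---the \emph{core}---is either a claw $K_{1,3}$ on $a,b,c$ and one further point, or a path through the central one of $a,b,c$ in which each of the two links is a single edge or a connector; in all cases $K$ has at most two vertices outside $\{a,b,c\}$, and $T$ is $K$ together with leaves, each attached to exactly one of $a,b,c$.

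Next I would enumerate the core: there are only $\bigO(1)$ combinatorial types, and each of the at most two extra vertices can be any of the $\le n$ points of $\pe$, so there are $\bigO(n^2)$ candidate cores, of which we discard the non-plane ones. Fix a plane core $K$. Because $a,b,c$ lie on the convex hull, $K$ partitions the hull into $\bigO(1)$ faces, each bounded by edges of $K$ and at most one arc of the hull, and each remaining point of $\pe$ lies in exactly one face. A leaf edge cannot cross $K$, so a point may be attached only to a center on the boundary of its own face; moreover a leaf edge stays inside its face, so leaf edges belonging to different faces never cross. Therefore the total length to be maximized is a sum over faces, each summand being optimized independently: for a face $F$, attach each point of $\pe$ inside $F$ to one of the (two or three) centers on the boundary of $F$ so that the induced straight-line star, double star, or triple star is plane, maximizing the sum of the edge lengths.

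Consider first a convex face with two centers $a,b$, whose points then all lie on one side of the line $ab$. Measuring angles at $a$ and at $b$ from the rays toward the other center, the edges $aq$ and $bq'$ cross exactly when $q'$ follows $q$ in the order around $a$ but precedes $q$ in the order around $b$. Hence an assignment of the points to $\{a,b\}$ is plane iff no point assigned to $b$ stands in this relation to a point assigned to $a$, and a longest such assignment is found by a dynamic program that scans the points in the order around $a$ and keeps, besides the scan position, only the last one---in the order around $b$---among the points already assigned to $a$. This has $\bigO(n^2)$ states, which, summed over the faces and maximized over the $\bigO(n^2)$ cores, would give a polynomial-time algorithm.

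The remaining, and main, obstacle is twofold: the three-center face---the one bounded by the whole core path (through the central vertex, say $b$)---where three stars share a region, and the non-convex faces that can appear when a link of the core is a connector or when the core is a claw, in which case a leaf edge may cross $K$ even though it stays within the ``correct'' face. For the three-center face one shows that the leaves of the central vertex $b$ cut it, in the angular order around $b$, into sectors, only the two extreme of which can receive leaves of $a$ and of $c$ respectively, which reduces it to two restricted double stars; encoding all of these interacting constraints in a single dynamic program over the angular order around $b$, with only polynomially many states, is the technical heart of the argument. Once this and the analogous analysis of the non-convex faces are in place, the enumeration of cores and the final assembly are routine.
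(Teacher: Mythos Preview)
Your structural analysis is correct and is a reasonable way to frame the problem: every vertex outside $\{a,b,c\}$ has all its neighbours in $\{a,b,c\}$, hence the minimal subtree on $\{a,b,c\}$ (your \emph{core}) is either a claw or a two-link path with at most one connector per link, giving $\bigO(n^2)$ candidates; fixing a plane core splits the convex hull into $\bigO(1)$ faces, and leaves in different faces do not interact. The two-centre faces are indeed bistar instances, and your sweep for those is essentially the paper's bistar dynamic program. (The paper, by contrast, does not enumerate cores at all: it fixes the core to be the two edges $ac,bc$ up to relabelling, which sidesteps your connector and claw cases but also leaves them implicit.)

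The genuine gap is precisely where you place it: the three-centre face. Your proposed reduction---``the leaves of the central root $b$ cut the face into sectors, only the two extreme of which receive leaves of $a$ and of $c$, so we get two restricted double stars''---does not go through as stated. The difficulty is that when you commit a point $k$ to, say, root $a$, the edge $ak$ cuts off a wedge that may still contain points eligible for \emph{both} $a$ and the central root; the two ``restricted double stars'' are therefore not independent of the central-root choices, and a single angular sweep around $b$ does not carry enough state to enforce all three pairwise non-crossing constraints simultaneously. The paper's solution to this region is a considerably more elaborate dynamic program whose state is a \emph{five}-tuple $(p,p',r,q',q)$: here $p$ and $q$ bound the current frontier of edges from the two outer roots, $p'$ and $q'$ delimit (in angular order around the central root) the cone of points still eligible for the central root, and $r$ fixes a horizontal sweep line. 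The recursion takes the highest remaining point $k$, branches on which of the three roots it attaches to, and in each branch spawns a smaller five-tuple subproblem \emph{together with} an auxiliary bistar instance for the region that the new edge cuts off. Recovering this five-parameter encoding, and in particular the nested bistar calls, is the technical content of the proof; your outline stops short of it.
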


The algorithms are based on dynamic programming. Even though the
length $\OPT_3$ of a longest plane tree of diameter at most three
can be computed in polynomial time, we do not know the corresponding
approximation factor $\bd(3)$ with respect to $\OPT$. 
The best bounds we are aware of are
$1/2\le \bd(3)\le 5/6$. The lower bound follows from~\cite{AlonRS95},
the upper bound is from~\cref{thm:diameter-3-bound}. We conjecture
that $\OPT_3$ actually gives a better approximation factor than the
tree constructed in~\cref{thm:approximation}---but we are unable to
prove this.

Finally, a natural way to design an algorithm for the longest plane
spanning tree problem is the following local search
heuristic~\cite{WilliamsonSh11}:
start with an arbitrary plane tree $T$, and while it is possible, apply the 
following \emph{local improvement rule}: 
if there are two edges $e$,~$f$ on $\pe$
such that $(T \setminus \{e\}) \cup \{f\}$ is a plane spanning tree for
$\pe$ that is longer than $T$, replace $e$ by $f$.
Once no further local improvements are possible, output the current tree $T$.
We show that for some point sets, 
this algorithm fails to compute the optimum answer as it may ``get stuck'' 
in a local optimum (see~\cref{thm:stuck} in \cref{sec:stuck}).
This holds regardless of how the edges that are swapped are chosen.
This suggests that a natural local search approach does not yield an
optimal algorithm for the problem.


\subparagraph*{Preliminaries and Notation.}\label{sec:preliminaries}
Let $\pe\subset\R^2$ be a set of $n$ points in the plane, 
so that no three points in $\pe$ are collinear.
For any spanning tree $T$ on $\pe$, we denote by $|T|$ the 
total Euclidean edge length of $T$. 
Let $\OPT$ be the maximum Euclidean length of a plane (i.e., noncrossing) 
spanning tree on $\pe$, and $\OPTcr$ the maximum length of a 
(possibly crossing) spanning tree on $\pe$.

As in previous algorithms~\cite{AlonRS95,
  DBLP:journals/dcg/DumitrescuT10,biniaz_maximum_2019,cabello2020better,biniaz2020improved}, 
  we make extensive use of stars. 
For any point $p \in \pe$, the \emph{star}
$S_p$ rooted at $p$ is the tree that connects $p$ 
to all other points of $\pe$.

We also need a notion of ``flat'' point sets.
A point set $\pe$ is \emph{flat} if $\diam(\pe)\ge 1$ and
all $y$-coordinates in $\pe$ are essentially negligible, that is, 
their absolute values are bounded by an infinitesimal $\eps>0$.
In a flat point set, we can approximate the length of an edge by 
subtracting the $x$-coordinates of its endpoints: the error 
becomes arbitrarily small as $\eps\to 0$.

Lastly, $D(p,r)$ denotes a closed disk with center $p$ and radius $r$,
while $\partial D(p,r)$ is its boundary: a circle of radius $r$ centered at $p$.


\section{A general approximation algorithm}
\label{sec:approximation}
We present a polynomial-time algorithm that yields an 
$ f\doteq 0.5467$-approximation of a longest plane tree
for general point sets and a
$(2/3)$-approximation for flat point sets.

Let $\pe$ be the input point set (in general position). Our
algorithm considers two kinds of plane spanning trees for
$\pe$. First, for every point $a \in \pe$, let $S_a$ be the
star rooted at~$a$, as defined in \cref{sec:intro}.
Second, for any distinct points~$a, b \in \pe$, we define
a tree
$T_{a,b}$ as follows
(see \cref{fig:approx-Tab}):
let $\pe_a$ be the points of $\pe$ that are closer to $a$ than to $b$, and 
let $\pe_b=\pe\setminus \pe_a$.
First, we connect $a$ to every point in $\pe_b$. 
Then, we connect each point of $\pe_a\setminus \{ a \}$ to some point of 
$\pe_b$, according to the  following
rule: 
the rays $\overrightarrow{av}$ for $v\in \pe_b$ together with the 
opposing ray of $\overrightarrow{ab}$ partition the plane into 
convex wedges with common apex $a$.
For each such wedge $W$, let  $b_W \in \pe_b$ the point on a 
bounding ray of $W$ that forms the smaller (convex) angle with 
$\overrightarrow{ab}$.
Within each wedge $W$, we connect all points of 
$W \cap (\pe_a \setminus \{ a \})$ to $b_W$.
The resulting tree $T_{a,b}$  has diameter at most four, and
it is plane, because the interiors of the wedges are pairwise 
disjoint, and we add a star within each wedge.

In general, the trees $T_{a,b}$ and $T_{b,a}$ are different. 
However, for $\pe_a=\{a\}$,
both $T_{a,b}$ and $T_{b,a}$ coincide with the star $S_a$.

\begin{figure}[ht]
\center
\includegraphics[scale=1]{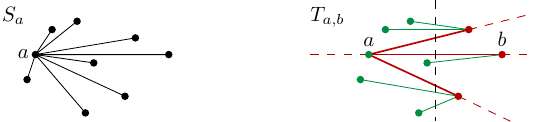}
\caption{A tree $S_a$ and a tree $T_{a,b}$.
}
\label{fig:approx-Tab}
\end{figure}

Now, our algorithm
$\texttt{AlgSimple}(\pe)$
constructs the $n$ stars $S_a$, for $a \in \pe$, and
the $n(n-1)$ trees $T_{a,b}$, for distinct $a, b \in \pe$,
and it computes the total edge length for each of them.
It returns the longest of these $n^2$ trees as the desired
approximation. We denote this resulting tree as $\Talg$.

The algorithm 
$\texttt{AlgSimple}(\pe)$
runs in polynomial time on a real RAM, as there are $n^2$ relevant trees, 
each of which can be constructed in polynomial time.

It remains to analyze the approximation guarantee of \texttt{AlgSimple}.
The main result of this section (\cref{thm:approximation}) states 
that for any point set $\pe$, we have $|\Talg|> 0.5467 \cdot \OPT$.
The proof is rather involved.
As a warm-up for the full proof,
we first show a stronger result for the special case 
of \emph{flat} point sets (cf.~\cref{sec:intro}):
if $\pe$ is flat, we have $|\Talg|\ge (2/3)\cdot \OPTcr$, 
where $\OPTcr$ is the length of a longest (possibly crossing) spanning tree.

\begin{theorem}\label{thm:approximation-flat} 
Suppose $\pe$ is flat. Then $|\Talg| \geq \frac{2}{3} \cdot \OPTcr 
\geq \frac{2}{3} \cdot \OPT$.
\end{theorem}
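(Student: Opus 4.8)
Since $\pe$ is flat, we may work throughout with the approximation in which the length of an edge equals the absolute difference of the $x$-coordinates of its endpoints; the error vanishes as $\eps \to 0$. Order the points so that $p_1$ is the leftmost and $p_n$ the rightmost, and set $w = \diam(\pe) = x(p_n) - x(p_1) \ge 1$. The second inequality $|\Tcr| \ge |\Topt|$ is immediate since every plane tree is in particular a (possibly crossing) spanning tree, so the whole content is the first inequality $|\Talg| \ge \tfrac{2}{3}|\Tcr|$.

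The plan is to fix a longest (possibly crossing) spanning tree $\Tcr$, root it conceptually, and give each non-root vertex $p$ the charge $\ell_{\Tcr}(p)$ equal to the length of its edge toward the root; then $|\Tcr| = \sum_p \ell_{\Tcr}(p)$. In the flat model each such charge is at most $w$, and moreover the charge toward $p_1$ or toward $p_n$ is a good one: connecting every point either all to $p_1$ or all to $p_n$ gives the two extreme stars $S_{p_1}$ and $S_{p_n}$, whose lengths sum to $\sum_p (x(p) - x(p_1)) + \sum_p (x(p_n) - x(p)) = (n-1)w$ when we also observe $x(p) - x(p_1) + x(p_n) - x(p) = w$. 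So $\max(|S_{p_1}|, |S_{p_n}|) \ge \tfrac{n-1}{2} w$. The first step is therefore to record the baseline bound $|\Talg| \ge \tfrac12 (n-1) w$ coming from the two extreme stars alone. The second step is to bound $|\Tcr|$ from above: each of the $n-1$ edges has length at most $w$, giving $|\Tcr| \le (n-1)w$, but this only yields factor $\tfrac12$, so we must do better.

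The improvement must come from the trees $T_{a,b}$ with $\{a,b\} = \{p_1, p_n\}$. The key structural observation is that in a flat point set, a crossing tree can only be substantially longer than the better extreme star by routing a block of ``middle'' points to the far endpoint: if $m$ points lying near the left half are connected (in $\Tcr$) to $p_n$ rather than to $p_1$, each contributes close to $w$ instead of close to its small distance to $p_1$, and this is exactly the phenomenon exploited in the preceding Observation. I would formalize this by splitting $\pe$ at the bisector of $p_1 p_n$ into $\pe_{p_1}$ and $\pe_{p_n}$ and writing $|\Tcr|$ as a sum of four groups of charges according to (left/right side) $\times$ (routed left / routed right). A point on its ``own'' side contributes at most $w/2$ no matter where $\Tcr$ routes it; a point routed to the opposite side contributes at most $w$; and — this is the crucial trade-off — the tree $T_{p_1, p_n}$ captures, for each point, the larger of its distances to $p_1$ and to $p_n$, i.e.\ at least $w/2$, while $T_{p_n, p_1}$ does likewise. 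Comparing $|\Tcr|$ against $|S_{p_1}| + |S_{p_n}| + |T_{p_1,p_n}| + |T_{p_n,p_1}|$ term by term over these four groups, one checks that the worst case is when exactly half the points are ``cross-routed'', which is precisely where the ratio $\tfrac23$ appears; averaging the four (at most) tree lengths against $|\Tcr|$ then gives $4|\Talg| \ge |S_{p_1}| + |S_{p_n}| + |T_{p_1,p_n}| + |T_{p_n,p_1}| \ge \tfrac{3}{2} \cdot \tfrac{4}{3}\,|\Tcr|$ after the case analysis — rearranged, $|\Talg| \ge \tfrac23 |\Tcr|$.

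The main obstacle I anticipate is the bookkeeping in that term-by-term comparison: one has to argue that $T_{p_1,p_n}$ (resp.\ $T_{p_n,p_1}$) really does realize, for every non-apex point $p$, a charge of at least $\max(x(p)-x(p_1),\, x(p_n)-x(p))$ — this is clear for points in $\pe_{p_n}$ in the tree $T_{p_1,p_n}$ since they are joined directly to $p_1$, but for points in $\pe_{p_1}$ one must use the wedge construction and check that the point $b_W$ to which they get attached lies on the far side, so that the edge length is still $\ge w/2$ in the flat limit (a point in $\pe_{p_1}$ at $x$-coordinate $x(p_1)+t$ with $t < w/2$ gets attached to some point of $\pe_{p_n}$, which has $x$-coordinate $\ge x(p_1)+w/2$, so the edge has length $\ge w/2 - t$… and the complementary estimate comes from $T_{p_n,p_1}$). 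Making this dichotomy precise so that for \emph{every} point one of the two trees $T_{p_1,p_n}, T_{p_n,p_1}$ pays $\ge w/2$, simultaneously and consistently, is where care is needed; once that is in place, the inequality is a finite linear-programming-style check over the four groups.
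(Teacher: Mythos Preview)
Your overall strategy---pick the diameter endpoints $a=p_1$, $b=p_n$, consider the four trees $S_a,T_{a,b},T_{b,a},S_b$, direct all edges toward $ab$, and compare pointwise---is exactly the paper's approach. But the plan breaks down at the averaging step, and the gap is not just bookkeeping.

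First, the arithmetic: $\tfrac{3}{2}\cdot\tfrac{4}{3}=2$, so the chain $4|\Talg|\ge 2|\Tcr|$ yields only $|\Talg|\ge\tfrac12|\Tcr|$, not $\tfrac23$. To get $\tfrac23$ from an \emph{equal}-weight average of four trees you would need $|S_a|+|T_{a,b}|+|T_{b,a}|+|S_b|\ge\tfrac83|\Tcr|$, and this is false. Place $a=0$, $b=1$ and take a point $p$ at $x$ close to~$1$ (so $p$ is closer to~$b$). Then in the flat model $\ell_{S_a}(p)=x$, $\ell_{T_{a,b}}(p)=x$, $\ell_{S_b}(p)=1-x$, and---as you correctly anticipate in your ``obstacle'' paragraph---the only guaranteed lower bound for $\ell_{T_{b,a}}(p)$ is $x-\tfrac12$ (the neighbour in $\pe_a$ may sit right at the bisector). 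The four charges sum to $2x+\tfrac12$, while $\ell_{\Tcr}(p)$ can be as large as $\max(\|pa\|,\|pb\|)=x$; as $x\to 1$ the ratio tends to $\tfrac52$, so equal weights yield at best $|\Talg|\ge\tfrac58|\Tcr|$.

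The fix is to weight the two $T$-trees more heavily. The paper shows
\[
\frac{\ell_{S_a}(p)+2\,\ell_{T_{a,b}}(p)+2\,\ell_{T_{b,a}}(p)+\ell_{S_b}(p)}{6}\ \ge\ \frac23\,\ell_{\Tcr}(p)
\]
for every $p\neq a,b$, using only $\ell_{\Tcr}(p)\le\max(\|pa\|,\|pb\|)$ (valid in the flat model regardless of where $\Tcr$ routes $p$) together with the triangle inequality $\|pb\|+\|pp'\|\ge\|p'b\|=\|pa\|$, where $p'$ is the reflection of $p$ across the bisector. With the weights $(1,2,2,1)/6$ this inequality is an identity in the flat limit, so there is no slack to redistribute. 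Your four-group decomposition by ``side $\times$ routing'' is therefore not needed: the single bound $\ell_{\Tcr}(p)\le\max(\|pa\|,\|pb\|)$ already absorbs the routing information.
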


\begin{proof}
Since $\OPTcr \geq \OPT$, it suffices to prove the first inequality.
Recall that in a flat point set, the $y$-coordinates of the points
are essentially negligible. Thus,
let $a$ be the leftmost point and $b$ be the rightmost point of $\pe$,
and let $m = (a + b)/2$ be the midpoint between $a$ and $b$.
Let $\Tcr$ be 
the tree on $\pe$ where
all points to the right of $m$ are connected to $a$, and
all points to the left of $m$ are connected to 
$b$.\footnote{In particular, $a$ is connected to $b$. If $m$ happens to
belong to $\pe$, it can be connected to either $a$ or $b$.}
Then, $\Tcr$ is a tree that achieves $\OPTcr$,
since
for any edge $pq$ that is not in the $\Tcr$,
	the \emph{fundamental} cycle that $pq$ makes with $\Tcr$ has $pq$ as a
shortest edge.
Now, to analyze the quality of the approximation,
we consider the four trees $S_a$, $T_{a,b}$, $T_{b,a}$, and $S_b$.
Our goal is to show that
\begin{equation}
\label{equ:maxflatbound}
\max \{|S_a|,|T_{a,b}|, |T_{b,a}|, |S_b|\} \geq \frac{2}{3} \cdot \OPTcr,
\end{equation}
which implies the desired result.
For this, we define a probability distribution $\mathcal{D}$
on $\{S_a,T_{a,b}, T_{b,a}, S_b\}$, and we show
that 
\begin{equation}
\label{equ:expflatbound}
\mathbf{E}_{T \sim \mathcal{D}}[|T|] \geq \frac{2}{3} \cdot \OPTcr,
\end{equation}
which immediately implies 
(\ref{equ:maxflatbound}).
For (\ref{equ:expflatbound}), we use the distribution $\mathcal{D}$ with
$\Pr_\mathcal{D}[S_a] = \Pr_\mathcal{D}[S_b] = 1/6$ and
$\Pr_\mathcal{D}[T_{a,b}] = \Pr_\mathcal{D}[T_{b,a}] = 1/3$.
Our strategy is to consider the edges individually and to use 
linearity of expectation to achieve the total bound. 
In particular, for $p \in \pe \setminus \{a \}$ and 
$T \in \{S_a,T_{a,b}, T_{b,a}, S_b, \Tcr\}$, let $\ell_T(p)$
be the length of the first edge on the path from $p$ to
$a$ in $T$. By the definition of $\Tcr$,
we have
\begin{equation}
\label{equ:Tcrell}
  \ell_{\Tcr}(p) = \max\{\|pa\|, \|pb\|\},
\end{equation}
for all $p \in \pe \setminus \{a\}$.
Next, we analyze 
$\mathbf{E}_{T \sim \mathcal{D}}[\ell_T(p)]$, for $p \in \pe \setminus \{a\}$.
Suppose first that $p$ lies to the right of $m$.
Since $S_a$ and $S_b$ are stars, we have $\ell_{S_a}(p) = \|pa\|$ and 
$\ell_{S_b}(p) = \|pb\|$. Furthermore, by our assumption on
$p$, we have $\ell_{T_{a,b}}(p) = \|pa\|$, and
$\ell_{T_{b,a}}(p) \geq \|mp\|$. Now, since $m$ is the
midpoint between $a$ and $b$, we
have $\|am\| = \|mb\|$, and hence
\begin{equation}
\label{equ:papb}
\|pa\| = \|am\| + \|mp\| = \|mb\| + \|mp\| = \|mp\| + \|pb\|  + \|mp\|
= 2\|mp\| + \|pb\|.
\end{equation}
Altogether, this gives
\begin{align*}
\mathbf{E}_{T \sim \mathcal{D}}[\ell_T(p)] &=
\frac{1}{6} \cdot \left(
\ell_{S_a}(p)  + 
2\, \ell_{T_{a,b}}(p) + 
2\, \ell_{T_{b,a}}(p) + 
\ell_{S_b}(p)\right) & \text{(by the definition of $\mathcal{D}$)}  \\
& \geq
\frac{1}{6} \cdot \left(
3\, \|pa\| + 2\, \|mp\| +  \|pb\| \right)
& \text{(by the discussion above)} 
\\
&=
\frac{1}{6} \cdot \left(
3\, \|pa\| + \|pa\| \right) & \text{(by (\ref{equ:papb}))} \\
&= \frac{2}{3}\cdot \|pa\|,
\end{align*}
Similarly, if $p$ lies to the left of $m$, we get
$\mathbf{E}_{T \sim \mathcal{D}}[\ell_T(p)] \geq (2/3) \cdot \|pb\|$, 
Altogether, we use 
(\ref{equ:Tcrell}) to obtain
\begin{equation}
\label{equ:ellbound}
\mathbf{E}_{T \sim \mathcal{D}}[\ell_T(p)] \geq
\frac{2}{3} \cdot \max\{\|pa\|, \|pb\|\} = 
\frac{2}{3} \cdot \ell_{\Tcr}(p),
\end{equation}
for all $p \in \pe \setminus \{ a \}$, 
Now, we can put everything together:
\begin{align*}
\mathbf{E}_{T \sim \mathcal{D}}[|T|] 
&= 
\mathbf{E}_{T \sim \mathcal{D}}\left[\sum_{p \in \pe \setminus \{a \}} 
\ell_T(p)\right]
& \text{(by the definition of $\ell_T(p)$)}
\\
&= 
\sum_{p \in \pe \setminus \{a \} }
\mathbf{E}_{T \sim \mathcal{D}}\left[\ell_T(p)\right]
& \text{(by linearity of expectation)} \\
&\geq
\sum_{p \in \pe \setminus \{a \} }
\frac{2}{3} \cdot 
\ell_{\Tcr}(p) 
= 
\frac{2}{3} \OPTcr. & \text{(by (\ref{equ:ellbound}) and
the definition of $\ell_{Tcr}(p)$)}
\end{align*}
Thus, we have proved 
(\ref{equ:expflatbound}), and the theorem follows.
\end{proof}

In fact, one can find an example where the constant $2/3$ 
is asymptotically tight when comparing to $\OPTcr$:

\begin{lemma}\label{lem:tight}
There is a sequence of point sets $\pe_1,\pe_2, \dots$ with 
$|\pe_n| = n + 1$ and
\[
\lim_{n \to \infty} \frac{\OPT(\pe_n)}
{\OPTcr(\pe_n)} \leq \frac{2}{3}.
\]
\end{lemma}

\begin{proof}
For $n \geq 1$, the set $\pe_n = \{p_0, p_1, \dots, p_{n}\}$ 
is the flat point set where the $x$-coordinate of $p_i$ is
$i$, for $i = 0, \dots, n$, and the $p_i$ are in upper convex
position,
see \cref{fig:thin-cor}.

First, we
argue by induction on $n$ that the star $S_{p_0}$ rooted at $p_0$ is 
a longest plane spanning tree for $\pe_n$, for any $n \geq 1$.
For $n = 1$, this is clear, since $S_{p_0}$ is the only spanning tree for
$\pe_1$.
\begin{figure}[tb]
	\centering
	\includegraphics[page=1]{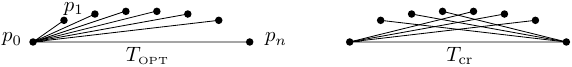}
	\caption{The point set $\pe_n$ consisting of $n + 1$ points with 
	equally spaced $x$-coordinates $0,\dots,n$, with 
	its best plane and crossing spanning trees.}
	\label{fig:thin-cor}
\end{figure}
Now, suppose that $n \geq 2$.
We observe that any longest plane spanning tree
for $\pe_n$ must contain the edge $p_0p_{n}$,
because $p_0p_{n}$ is the longest 
edge on $\pe_n$, and it is not crossed by any 
other edge on $\pe_n$.
Furthermore, we claim that in any longest plane spanning tree for $\pe_n$,
either $p_0$ or $p_{n}$ must be a leaf. Indeed, 
consider a plane spanning tree $T$ on $\pe_n$ that contains
the edge $p_0p_{n}$ and that has the additional 
edges $p_0p_i$ and $p_jp_n$, for $0 < i, j < n$.
Since $T$ is plane, we must have $i < j$, and we can 
strictly increase the length of $T$ by 
replacing $p_0p_i$ with $p_ip_{n}$ (if $i < n/2 $) or 
by replacing $p_jp_{n}$ by $p_0 p_j$ (if $j > n/2$). 
By symmetry, we can thus conclude that there is a longest
plane spanning tree for $\pe_n$ that contains the edge $p_0p_n$ and
in which $p_n$ is a leaf. Since $p_0p_n$ does not cross any
other edge on $\pe_n$, it follows that we can obtain a longest
spanning tree for $\pe_n$ by taking a longest spanning tree for
$\pe_{n-1}$ and by adding the edge $p_0p_n$. By the inductive hypothesis,
the star $S_{p_0}'$ on the point set $\pe_{n-1}$ is a longest plane
spanning tree for $\pe_{n-1}$. Hence, it follows that $S_{p_0}$
is a longest plane spanning tree for $\pe_n$. Thus, we get
\[
  \OPT(\pe_n) = |S_{p_0}| = \sum_{i=1}^{n} i = \frac{n^2}{2} + \frac{n}{2}.
\]
Consider now the tree $\Tcr$ where all points $p_i$ with 
$0 \leq i \leq \lfloor n/2 \rfloor$ are 
connected to $p_n$ and all points $p_i$ with 
$\lfloor n/2 \rfloor + 1 \leq i \leq n$ are connected
to $p_0$ (in particular, $\Tcr$ contains the edge $p_0p_n$). 
This tree is illustrated in \cref{fig:thin-cor} on the right.
Similar as in the proof of \cref{thm:approximation-flat}, the 
tree $\Tcr$ is a longest (crossing) spanning tree for $\pe_n$,
because every edge $p_ip_j$ that does not appear in $\Tcr$ is a shortest
edge 
in the fundamental cycle that $p_ip_j$ makes with $\Tcr$.
Thus, a straightforward summation gives a lower bound on $\OPTcr(\pe_n)$:
\begin{align*}
\OPTcr(\pe_n) &= \sum_{i = 1}^{\lfloor n/2 \rfloor} (n - i) + 
\sum_{i = \lfloor n/2 \rfloor + 1}^n i
& \text{(by the structure of $\Tcr$)}\\
&= 
\sum_{i = 1}^{\lfloor n/2 \rfloor} 
\left(\left\lceil \frac{n}{2} \right\rceil - 1 + i\right) + 
\sum_{i = \lfloor n/2 \rfloor + 1}^n i 
& \text{(index transformation 
$i \mapsto \left\lfloor \frac{n}{2} \right\rfloor + 1 - i$)}
\\
&= 
\left\lfloor \frac{n}{2} \right\rfloor \cdot 
\left(\left\lceil \frac{n}{2} \right\rceil - 1\right) + 
\sum_{i = 1}^{n}  i 
& \text{(rearranging terms)}\\
\\
&= 
\frac{n^2}{2} +
\left\lfloor \frac{n}{2} \right\rfloor \cdot 
\left\lceil \frac{n}{2} \right\rceil
+ \frac{n}{2} - 
\left\lfloor \frac{n}{2} \right\rfloor
& \text{(evaluating the sum, rearranging terms)}\\
\\
&\geq \frac{3n^2}{4},
\end{align*} 
where the last inequality is due to a case distinction:
if $n$ is even, we have
\begin{align*}
\frac{n^2}{2} +
\left\lfloor \frac{n}{2} \right\rfloor \cdot 
\left\lceil \frac{n}{2} \right\rceil
+ \frac{n}{2} - 
\left\lfloor \frac{n}{2} \right\rfloor
&=
\frac{n^2}{2} +
 \frac{n}{2}  \cdot 
 \frac{n}{2} 
+ \frac{n}{2} - 
 \frac{n}{2}\\
&= \frac{3n^2}{4},
\end{align*}
and if $n$ is odd, we have
\begin{align*}
\frac{n^2}{2} +
\left\lfloor \frac{n}{2} \right\rfloor \cdot 
\left\lceil \frac{n}{2} \right\rceil
+ \frac{n}{2} - 
\left\lfloor \frac{n}{2} \right\rfloor
&=
\frac{n^2}{2} +
 \left(\frac{n}{2} - \frac{1}{2}\right)  \cdot 
\left( \frac{n}{2}  + \frac{1}{2} \right)
+ \frac{n}{2} - 
 \left(\frac{n}{2} - \frac{1}{2}\right)\\
 &= 
\frac{n^2}{2} +
\frac{n^2}{4} -
\frac{1}{4} +
\frac{1}{2}\\
&= \frac{3n^2}{4} + \frac{1}{4}.
\end{align*}
Combining the bounds for $\OPT(\pe_n)$ and $\OPTcr(\pe_n)$, 
we get the desired result:
\begin{equation*}
\lim_{n\to\infty} \frac{\OPT(\pe_n)}{\OPTcr(\pe_n)} \leq 
\lim_{n\to\infty}  \frac{n^2/2+n/2}{3n^2/4} = \frac{2}{3}. \qedhere
\end{equation*} 
\end{proof}

Now we use a similar approach to show the main theorem of this section:

\thmapproximation*
\begin{proof} 
We outline the proof strategy, referring to lemmas that 
will be proved later in this section.
Without loss of generality, suppose that $\pe$ has diameter 2, 
and let $x, y \in \pe$ be two points realizing the diameter of $\pe$. 
Next, fix a longest plane spanning tree $\Topt$ for $\pe$,
and consider a longest edge $ab$ of $\Topt$. 
Note that $ab$ does not 
necessarily realize the diameter of $\pe$, and thus $a$, $b$ and 
$x$, $y$ may differ.
Write 
$\| ab \| = 2d$. By our assumption, we have $d \leq 1$.
Now, if $2d \leq 1/f$, we can use
a technique of Alon, Rajagopalan, and
Suri~\cite{AlonRS95} to show that one of $S_x$ 
or $S_y$ is long enough for the desired
approximation (see \cref{lem:short}).
Thus, we will assume from now on that $2df > 1$.

\begin{figure}[ht]
\center
\includegraphics{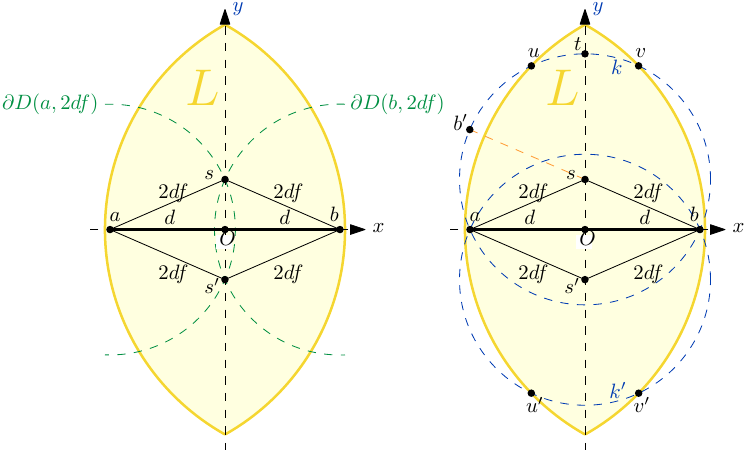}
	\caption{(left) The points $a$ and $b$ define a longest edge of $\Topt$. The points
	$s$ and $s'$ are at distance $2df$ from both $a$ and $b$. The lens $L$ is defined
	by $a$ and $b$ with radius $2$. (right) The points $u$, $v$, $u'$, and $v'$ 
	are at the intersection of the circles with radius $2df$ around $s$ and $s'$ and
	the boundary of $L$.}
\label{fig:lens-first}
\end{figure}

We define several regions in the plane that
we use to analyze $\pe$ and to derive the
desired approximation, refer to Figure~\ref{fig:lens-first}
for an illustration.
First, we choose a coordinate system such that $a$ and $b$ lie on the
$x$-axis and are symmetric with respect to the $y$-axis,
with $a$ on the left. In other words, since $\| ab \| = 2d$, we choose
the coordinates such that $a = (-d, 0)$ and $b = (d, 0)$. 
Since $d \leq 1$ and $2df > 1$, the two circles
$\partial D(a, 2df)$ and $\partial D(b, 2df)$
have exactly two interesction points.
By symmetry, these intersection points lie on the $y$-axis, one above
and one below the $x$-axis.
We let $s$ be the intersection point that is above the $x$-axis,
and $s'$ the intersection point below the $x$-axis.
By definition, $s$ and $s'$ 
have distance $2df$ from both $a$ and $b$. 

Now, we define the \emph{lens}
$L = D(a, 2) \cap D(b, 2)$.
Since $\pe$ has diameter $2$, we have
$\pe \subset L$, and since $2df \leq 2$ (by our choice of $f$),
we have $s, s' \in L$.
Next, consider the circles 
$k = \partial D(s, 2df)$ and 
$k' = \partial D(s', 2df)$.
We argue that $k$ intersects the boundary $\partial L$
of the lens $L$.
Indeed, the point $b'$ that is symmetric to $b$ with
respect to $s$ lies outside $L$, because 
$\|bb'\| = 2 \cdot 2df > 2$. 
On the other hand, the point $t$ on $k$ that is vertically
above $s$ on the $y$-axis lies inside $L$, because it has $y$-coordinate
\begin{align*}
  \sqrt{4d^2f^2 - d^2} + 2df &\leq
  \sqrt{4f^2 - 1}  + 2f\\ 
  &\leq 
  \sqrt{4\cdot 0.55^2 - 1}  + 2\cdot 0.55 \\
  &\leq 1.56 \leq \sqrt{3} \leq 
  \sqrt{4 - d^2},
\end{align*}
as $d \leq 1$ and by our choice of $f \leq 0.55$.
Thus, along the boundary of $k$, there has to 
be exactly one intersction point between $k$ and
$\partial L$ that lies clockwise between $b'$ and $t$.
We call this intersection point $u$.
Symmetrically, we define
$v$ as the analogous intersection between $k$ and $\partial L$
in the upper right quadrant, and $u'$, $v'$ as the analgous
intersections between $k'$ and $\partial L$ in the halfplane
below the $x$-axis.
Now, we claim that $u$ lies to the left of the
$y$-axis and to the right of $a$. Indeed,
we argued that $u$ lies between the points $b'$ and
$t$, as defined above. Since $t$ lies on the $y$-axis,
the first part of the claim follows. 
Furthermore, 
since $b'$ is symmetric to $b$ with respect to $s$, which
lies on the $y$-axis, and since $a$ is symmetric to $b$
with respect to the $y$-axis, it follows that $b'$ lies vertically
above $a$, and hence $u$ also lies to the right of $a$.
By symmetry, analogous statements hold for $u'$, $v$, and
$v'$, and we have
\begin{equation}
\label{equ:au0vb}
a_x \leq u_x = u'_x \leq 0 \leq v_x = v'_x  \leq b_x.
\end{equation}

We use the points $u$, $v$, $u'$ and $v'$ to partition
the lens into two regions, the \emph{far region} and
the \emph{truncated lens}. 
The truncated lens is the region of $L$ that lies
below the arc of 
$k$ between $u$ and $v$ in clockwise direction and above
the arc of $k'$ between $u'$ and $v'$ 
in counter-clockwise direction.
The remaining parts of $L$ constitute the far region, 
see \cref{fig:approx-factor}.

\begin{figure}[ht]
\center
\includegraphics{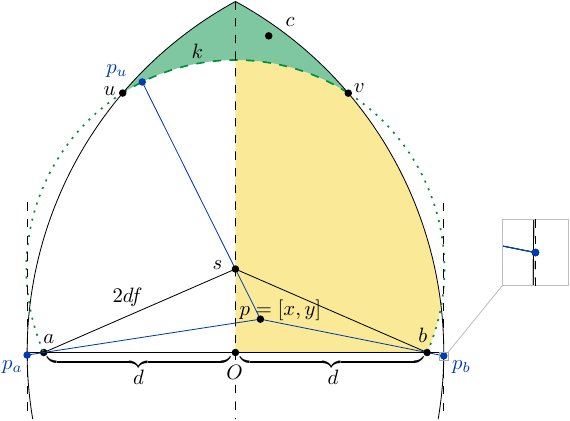}
\caption{The lens $L$ is split into the far region (green) and the 
truncated lens.}
\label{fig:approx-factor}
\end{figure}

In \cref{lem:far}, we show that for any point $c$ in 
the far region, the triangle $abc$ has acute angles and 
circumradius $R \geq 2df$. 
This allows us to argue that if $\pe$ contains a point $c$ 
in the far region, then one of the three stars 
$S_a$, $S_b$, or $S_c$ is long enough for the desired
approximation (\cref{lem:far}). 

It remains to consider the situation that
$\pe$ lies in the truncated lens. 
We claim that in this case, one of the 
trees $S_a$, $T_{a,b}$, $T_{b,a}$, or $S_b$ 
yields the desired approximation.  
For this, we proceed as in the proof of \cref{thm:approximation-flat}.
Our goal is to show
\begin{equation}
\max \{|S_a|,|T_{a,b}|, |T_{b,a}|, |S_b|\} \geq f \cdot \OPT.
\end{equation}
To do this, we fix a parameter $\beta \in (0, 1/2)$, and we 
define a probability distribution $\mathcal{D}_\beta$ on
$\{S_a, T_{a,b}, T_{b,a}, S_b\}$ by
\[
\Pr_{\mathcal{D}_\beta}[S_a] = 
\Pr_{\mathcal{D}_\beta}[S_b] = 1/2 - \beta \text{ and  }
\Pr_{\mathcal{D}_\beta}[T_{a,b}] = 
\Pr_{\mathcal{D}_\beta}[T_{b,a}] = \beta.
\]
Now it suffices to show that there exists a choice of $\beta$ such
that
\begin{equation}
\mathbf{E}_{T \sim \mathcal{D}_\beta}\left[|T|\right] \geq f \cdot \OPT.
\end{equation}
For this, we consider the edges individually and use
linearity of expectation. For $p \in \pe \setminus \{a\}$ and
$T \in \{S_a,T_{a,b}, T_{b,a}, S_b, \Topt\}$, let
$\ell_T(p)$ be the length of the first edge on the path from
$p$ to $a$ in $T$.
Then, we have
\begin{equation}
\mathbf{E}_{T \sim \mathcal{D}_\beta}\left[|T|\right]=
\mathbf{E}_{T \sim \mathcal{D}_\beta}
\left[\sum_{p \in \pe \setminus \{a\}} \ell_T(p)\right]
= \sum_{p \in \pe \setminus \{a\}}
\mathbf{E}_{T \sim \mathcal{D}_\beta} \left[ \ell_T(p)\right].
\end{equation}
To finish the argument, we show that
that for any $p \in \pe \setminus \{a\}$, we have 
\begin{equation} \label{eqn:key}
\mathbf{E}_{T \sim \mathcal{D}_\beta} \left[ \ell_T(p)\right] \geq f \cdot
\ell_{\Topt}(p).
\end{equation}
In contrast to the proof of \cref{thm:approximation-flat}, 
this now requires much more work. Below, we will see that 
$\beta \doteq 0.1604$ gives the desired result. 
Then, we get
\begin{equation} 
\sum_{p \in \pe \setminus \{a\}}
\mathbf{E}_{T \sim \mathcal{D}_\beta} \left[ \ell_T(p)\right] \geq 
\sum_{p \in \pe \setminus \{a\}}
f \cdot  \ell_{\Topt}(p)
=  f \cdot |\Topt| =  f \cdot \OPT.
\end{equation}
Thus at least one of the four trees $S_a$, $T_{a,b}$, $T_{b,a}$, $S_b$ 
has total length at least $f \cdot \OPT$.

We outline the proof of (\ref{eqn:key}), again referring
to lemmas that will be stated and proved below.
If $p = b$, then (\ref{eqn:key}) is immediate,
because the five trees
$S_a$, $T_{a,b}$, $T_{b,a}$, $S_b$, $\Topt$ all contain
the edge $ab$.
Thus, let $p \in \pe \setminus \{a, b\}$, and assume without
loss of generality that $p$ is in the upper right
quadrant, i.e., that $p = (x, y)$ with $x, y \geq 0$. 
First, we establish an upper bound on 
$\ell_{\Topt}(p)$, by estimating the length of a longest possible
line segment that (i) has $p$ as an endpoint, (ii) lies in $L$,
and (iii) does not cross the edge $ab$.
For this, we identify three possible candidates for $p_a$, 
$p_b$, and $p_u$ of the other endpoint
(see \cref{fig:approx-factor} 
for an illustration).
The first two points $p_a$ and $p_b$ provide upper bounds
for the case that the other endpoint of the longest possible edge
lies below the $x$-axis:
the point $p_a$ is the point with $x$-coordinate $-(2 - d)$ on the ray $pa$; 
and for $p_b$, there are two cases: if $x < d$, 
then $p_b$ is the point with $x$-coordinate $2 - d$ on the 
ray $pb$; if $x \geq d$, the ray $pb$ does not intersect the vertical line 
$x = 2 - d$, and we set $p_b = b$.
The third point $p_u$ is the furthest point from $p$ on the part of 
the boundary of the far region that is on the circle 
$k = \partial D(s, 2df)$. 
The proof now proceeds in the following steps:
\begin{enumerate}
\item In \cref{lem:topt},
we establish the upper bound 
\begin{equation}
\label{equ:loptupper}
\ell_{\Topt}(p) \leq \min\big\{2d,\, \max\{\| pp_a\|, 
\| pp_b\|, \| pp_u\| \}\bigr\}, 
\end{equation}
estimating the length of the longest possible line segment in the lens $L$
with endpoint $p$ that does not cross $ab$, and using the fact
that all edges in $\Topt$ have length at most $2d$.
Thus, it suffices to compare 
$\mathbf{E}_{T \sim \mathcal{D}_\beta} \left[ \ell_T(p)\right]$
with $\| pp_a\|$, $\| pp_b\|$, and $\| pp_u\|$.
\item In \cref{lem:right}, we show that $\|pp_a\|$ can
never be smaller than $\|pp_a\|$, and thus the term 
$\| pp_b\| $ in (\ref{equ:loptupper})
can be dropped.
That is, it suffices to compare 
$\mathbf{E}_{T \sim \mathcal{D}_\beta} \left[ \ell_T(p)\right]$
with $\| pp_a\|$ and $\| pp_u\|$.
\item In \cref{lem:ab-construction}, we establish a lower bound on 
$\mathbf{E}_{T \sim \mathcal{D}_\beta} \left[ \ell_T(p)\right]$,
by analyzing how $p$ is connected in the trees $S_a$, $T_{a,b}$,
$T_{b,a}$, and $S_b$.
\item In \cref{lem:left,lem:up}, we relate the lower bound 
from \cref{lem:ab-construction} to 
$\|pp_a\|$ and $\|p p_u\|$.
In this way, we identify constraints on $\beta$ such that
\begin{align*}
\mathbf{E}_{T \sim \mathcal{D}_\beta} \left[ \ell_T(p)\right] &\geq 
f \cdot \min\{2d,\| pp_a\|\} \text{ and }\\
\mathbf{E}_{T \sim \mathcal{D}_\beta} \left[ \ell_T(p)\right] &\geq 
f \cdot \min\{2d,\| pp_u\|\}.
\end{align*}
\end{enumerate}
It remains to find a $\beta \in (0, 1/2)$ that satisfies 
all constraints from \cref{lem:left,lem:up}. 
The most stringent constraints turn out to be those from \cref{lem:up}, 
stating that
\begin{equation}\label{eqn:f-constraints}
   \frac{2f-1}{2\sqrt{5-8f}-1} \leq \beta \leq 1-f\sqrt{4f^2-1}-2f^2.
\end{equation}
In \cref{lem:algebra}, we use straightforward algebra 
to show that 
our choice of $f$ makes the left-hand side and the right-hand side of 
(\ref{eqn:f-constraints}) equal. Hence, 
we set $\beta\doteq 0.1604$, their joint value. 
(The value $f = 5/8$ also makes both sides of 
(\ref{eqn:f-constraints}) equal, but it would give
$\beta < 0$.) 

To summarize, for the approximation guarantee $f \doteq 0.5467$ 
given in the theorem, the value
$\beta =1 -f\sqrt{4f^2-1}-2f^2 \doteq 0.1604$, and 
for every point $p \in \pe \setminus \{a\}$, we have
\begin{align*}
\mathbf{E}_{T \sim \mathcal{D}_\beta} \left[ \ell_T(p)\right] &\geq 
f \cdot \max\big\{ \min\{2d,\| pp_a\|\},\, 
\min\{2d,\| pp_u\| \}\big\} \\
&\geq f\cdot \min\big\{ 2d,\, \max\{\|pp_a\|, \| pp_u\|\}\big\} \\
&\geq f \cdot \ell_{\Topt}(p),
\end{align*}
and the result follows.
\end{proof}

It remains to prove Lemmas~\ref{lem:short} to \ref{lem:algebra}.
Their statements rely on the notation 
introduced in the proof outline of \cref{thm:approximation}, so 
we recommend to first consult the paragraphs above.

\begin{lemma}\label{lem:short} 
Let $x, y \in \pe$ be two
points that realize the diameter of $\pe$. Suppose that $\|xy\| = 2$ and
that all edges of the optimal tree $\Topt$ have length at most $1/f$. 
Then, we have $\max\{|S_x|,|S_y|\} \geq f \cdot \OPT$.
\end{lemma}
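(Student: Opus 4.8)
The plan is to bound both the optimal tree $\Topt$ from above and the two stars $S_u,S_v$ from below, both in terms of per‑point edge lengths, and then compare them pointwise. Since $u$ and $v$ realize the diameter, every point $p\in\pe$ lies in the lens $D(u,2)\cap D(v,2)$, so $\|pu\|\le 2$ and $\|pv\|\le 2$. For the upper bound on $\Topt$, I would direct all edges of $\Topt$ toward (say) $u$ and assign to each point $p\ne u$ its unique outgoing edge of length $\ell_{\Topt}(p)$; by hypothesis $\ell_{\Topt}(p)\le 1/f$ for every such $p$, and trivially $\ell_{\Topt}(p)\le\max\{\|pu\|,\|pv\|\}$ is not quite what I want — instead I will simply use $|\Topt|=\sum_{p\ne u}\ell_{\Topt}(p)\le n'/f$ where $n'=|\pe|-1$, i.e.\ $|\Topt|\le(n-1)/f$. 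Actually the cleaner route: $|\Topt|\le (n-1)\cdot\tfrac1f$ is too lossy; the key inequality I want is $|S_u|+|S_v|\ge f\cdot 2|\Topt|$ obtained pointwise.

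Concretely, I would argue: for each $p\ne u,v$ we have $\|pu\|+\|pv\|\ge \|uv\| = 2$ by the triangle inequality, so $\max\{\|pu\|,\|pv\|\}\ge 1$. On the other hand $\ell_{\Topt}(p)\le 1/f$ by assumption. Hence
\[
\|pu\|+\|pv\| \;\ge\; \max\{\|pu\|,\|pv\|\} + \bigl(2-\max\{\|pu\|,\|pv\|\}\bigr)\;=\;2,
\]
which together with $\max\{\|pu\|,\|pv\|\}\le 2$ and $\ell_{\Topt}(p)\le 1/f$ gives
\[
\|pu\|+\|pv\| \;\ge\; 2 \;\ge\; 2f\cdot\frac1f \;\ge\; 2f\cdot \ell_{\Topt}(p)
\]
only in the worst case $\ell_{\Topt}(p)=1/f$; for smaller $\ell_{\Topt}(p)$ one needs $\|pu\|+\|pv\|\ge 2f\,\ell_{\Topt}(p)$, which follows since the left side is always $\ge 2$ and the right side is $\le 2$ whenever $\ell_{\Topt}(p)\le 1/f$. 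Summing over all $p\ne u,v$ and adding $\|uv\|$ once to each star (each of $S_u,S_v$ contains the edge $uv$, contributing $\|uv\|=2$ to each, i.e.\ $4$ total, while $\Topt$ contains $uv$ contributing at most $2f\cdot 2$ worth... I will instead fold the edge $uv$ into the point $u$'s or $v$'s contribution, noting $\ell_{S_u}(v)=\ell_{S_v}(u)=\|uv\|=2\ge 2f\cdot\ell_{\Topt}(\text{that endpoint})$ since $\ell_{\Topt}\le 1/f$). This yields $|S_u|+|S_v|\ge 2f\cdot|\Topt|$, hence $\max\{|S_u|,|S_v|\}\ge f\cdot|\Topt|$.

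The one place requiring care is bookkeeping of the edge $uv$ and of which point "owns" it in each tree, so that the pointwise inequalities sum correctly to the global one; I would handle this exactly as in the proof of \cref{thm:approximation-flat}, where $ab$ is left undirected and every other point contributes one edge. The main obstacle — and the reason the constant $1/f$ rather than something larger appears — is that the per‑point bound $\|pu\|+\|pv\|\ge 2f\cdot\ell_{\Topt}(p)$ can only be salvaged from $\|pu\|+\|pv\|\ge 2$ when $\ell_{\Topt}(p)\le 1/f$; this is precisely the hypothesis, so there is no real obstacle, just the need to observe that $2\ge 2f\cdot(1/f)=2$ is tight and that the inequality degrades gracefully for shorter edges. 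I expect the whole argument to be two or three lines once the pointwise comparison is set up.
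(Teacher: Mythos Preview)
Your proposal is correct and uses essentially the same idea as the paper: the triangle inequality gives $\|pu\|+\|pv\|\ge\|uv\|=2$ for every $p$, while the hypothesis gives each edge of $\Topt$ length at most $1/f$, and combining these yields $|S_u|+|S_v|\ge 2f\,|\Topt|$. The paper just phrases it globally rather than pointwise---it observes $|S_u|+|S_v|=\sum_{p\in\pe}(\|pu\|+\|pv\|)\ge 2n$ and $f\cdot|\Topt|\le f\cdot(n-1)\cdot\tfrac1f\le n$---which sidesteps all the edge-ownership bookkeeping you worried about and makes the proof three lines.
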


\begin{proof} 
By the triangle inequality, for any point $p \in \pe$, 
we have $\| xp\| +\| yp\| \geq \| xy\| = 2$. Hence,
the total length of the stars $S_x$ and $S_y$ is
\[
|S_x| + |S_y| = 
\sum_{p \in \pe} \left( \| xp\| + \| yp\|\right) \geq 
n \cdot \| xy\| = 2n.
\]
On the other hand, since each of the $n - 1$ edges in $\Topt$ has 
length at most $1/f$, we get
\[
\max\{|S_x|,|S_y|\}
\geq 
\frac{|S_x| + |S_y|}{2}
\geq 
n
\geq
f \cdot (n - 1) \cdot \frac{1}{f}
\geq 
f \cdot \OPT,
\]
and we are done.
\end{proof}

\begin{lemma}\label{lem:far} 
Let $ab$ (with $\| ab\| = 2d$) be a longest edge of $\Topt$. 
If $\pe$ contains a point $c$ in the far region, then 
$\max\{|S_a|,|S_b|, |S_c|\} \geq f \cdot \OPT$.
\end{lemma}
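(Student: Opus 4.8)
First I would normalise exactly as in the proof of \cref{thm:approximation}: scale so that $\diam(\pe)=2$, put $a=(-d,0)$, $b=(d,0)$ with $\|ab\|=2d\le 2$, and invoke \cref{lem:short} to assume $2df>1$. The far region is symmetric about the $x$-axis, so I may assume that $c$ has non-negative $y$-coordinate. The only structural fact about the far region I would use is the one already recorded before \cref{lem:far}: the triangle $abc$ is acute and its circumradius $R$ satisfies $R\ge 2df$. Since $abc$ is acute, its circumscribed circle is also the smallest enclosing circle of $\{a,b,c\}$, so no point can be within distance $R$ of all of $a,b,c$; that is, $\max\{\|pa\|,\|pb\|,\|pc\|\}\ge R\ge 2df$ for every point $p$. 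I would also use that every edge of $\Topt$ has length at most $\|ab\|=2d$.

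The plan is then to prove $|S_a|+|S_b|+|S_c|\ge 3f\,|\Topt|$, from which picking the longest of the three stars finishes the lemma. Here $|S_a|=\|ab\|+\sum_{p\ne a,b}\|pa\|$, similarly for $S_b$, and $|S_c|=\|ca\|+\|cb\|+\sum_{p\ne a,b,c}\|pc\|\ge\|ab\|+\sum_{p\ne a,b,c}\|pc\|$ because $\|ca\|+\|cb\|\ge\|ab\|$. Comparing with $|\Topt|=\|ab\|+\ell_{\Topt}(c)+\sum_{p\ne a,b,c}\ell_{\Topt}(p)$ (directing $\Topt$ towards $ab$), the three copies of $\|ab\|$ and the single term $\ell_{\Topt}(c)\le 2d$ are $O(1)$ and wash out for large $n$ (the remaining bounded‑$n$ instances follow already from $\max\{|S_a|,|S_b|\}\ge dn\ge f\cdot 2d(n-1)$). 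So it suffices to establish, for every $p\ne a,b,c$,
\[
 \|pa\|+\|pb\|+\|pc\|\;\ge\;3f\cdot\ell_{\Topt}(p).
\]
The left side is at least the Fermat sum of $abc$ by the triangle inequality, while the right side is at most $3f\cdot 2d$. The delicate regime is when $\ell_{\Topt}(p)$ is close to $2d$: then $p$ has an outgoing edge in $\Topt$ almost as long as $\|ab\|$, both of whose endpoints lie in the diameter‑$2$ set $\pe$, which forces $p$ to sit far from the circumcentre of $abc$ and hence makes $\|pa\|+\|pb\|+\|pc\|$ large enough. The quantitative engine behind the estimates is that $2df>1$ together with $\diam(\pe)=2$, hence $\|ac\|,\|bc\|\le 2$, confines $d$ to $\bigl(\tfrac1{2f},1\bigr]$ and pins down the shape of $abc$; inside this window the displayed inequality should hold with exactly the constant $f$ from the statement (and, if a uniform $\tfrac13$‑weighting of the three stars is slightly too weak, with an $(a\leftrightarrow b)$‑symmetric weighting $(\lambda,1-2\lambda,\lambda)$ optimised in the same spirit as the parameter $\beta$ in the proof of \cref{thm:approximation}).

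The main obstacle is that the constant is tight to within a fraction of a percent. All the geometric tools are elementary --- the triangle inequality, the identity ``smallest enclosing circle $=$ circumcircle'' for an acute triangle, and the median inequality $\|pa\|+\|qa\|\ge 2\|m_{pq}a\|$ --- and a bound like $0.52$ already follows from the crude estimate $\max\{\|pa\|,\|pb\|,\|pc\|\}\ge 2df$ alone. Squeezing the margin up to $f\doteq 0.5467$ requires the exact position of the boundary circle $k$, the pinned shape of $abc$, and an optimised weighting of the three stars, which is precisely where the polynomial $P(x)$ and its fourth smallest root come from. I would treat separately the degenerate configurations (few points, or $\pe_a=\{a\}$ so that some of the algorithm's trees coincide) and points $c$ lying near where $k$ meets the boundary of the lens.
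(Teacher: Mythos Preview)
Your plan reduces to the per-point inequality $\|pa\|+\|pb\|+\|pc\|\ge 3f\cdot\ell_{\Topt}(p)$ for every $p\ne a,b,c$, and this inequality is false in general. Take $d=1$ and place $c$ at the corner $u$ of the upper far region where $k$ meets $\partial D(b,2)$; one finds $c\approx(-0.35,1.48)$, giving side lengths $\|ab\|=\|bc\|=2$, $\|ac\|\approx 1.62$ and area $K\approx 1.48$. The Fermat sum of this triangle is
\[
\min_{p}\bigl(\|pa\|+\|pb\|+\|pc\|\bigr)=\sqrt{\tfrac12\bigl(\|ab\|^2+\|bc\|^2+\|ac\|^2\bigr)+2\sqrt{3}\,K}\;\approx\;3.23,
\]
while $3f\cdot 2d=6f\approx 3.28$. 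The Fermat point lies inside the triangle and hence in the interior of the lens, so nothing forbids a point $p\in\pe$ there with $\ell_{\Topt}(p)$ close to $2d$; the displayed inequality then misses by about $1.5\%$. You acknowledge the tightness and propose replacing the uniform weights $(\tfrac13,\tfrac13,\tfrac13)$ by some $(\lambda,1-2\lambda,\lambda)$, but no such combination is analysed, and there is no reason to expect a \emph{fixed} convex combination to work uniformly over all admissible positions of $c$ in the far region. (The side remark that ``$\max\{\|pa\|,\|pb\|,\|pc\|\}\ge 2df$ alone'' already gives $0.52$ is likewise unsubstantiated: that bound on the maximum contributes only $2df$, not $6df$, to the sum.)

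The paper's argument is much shorter and avoids per-point bounds altogether. Let $g$ be the centroid of $\pe_0=\pe\setminus\{a,b,c\}$. Since $abc$ is acute, its circumcircle is its smallest enclosing circle, so some vertex $v\in\{a,b,c\}$ satisfies $\|vg\|\ge R$. Convexity of the norm gives $\sum_{p\in\pe_0}\|vp\|\ge|\pe_0|\cdot\|vg\|\ge(n-3)R$; together with $\|va\|+\|vb\|+\|vc\|\ge 2R$ (valid for any vertex of an acute triangle) this yields
\[
|S_v|\;\ge\;(n-1)R\;\ge\;(n-1)\cdot 2df\;\ge\; f\cdot|\Topt|,
\]
using only $R\ge 2df$ and $|\Topt|\le(n-1)\cdot 2d$. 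The key idea you are missing is to choose the root $v$ \emph{adaptively} from $\{a,b,c\}$, depending on where the centroid of the remaining points lies; averaging the three stars with any fixed weights cannot exploit this and is exactly what runs into the Fermat-point obstruction above.
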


\begin{proof} 
First, we argue that the triangle  $abc$ 
is acute:
by (\ref{equ:au0vb}), the far
region lies vertically between $a$ and $b$, and
thus 
the angles at $a$ and $b$
in the triangle $abc$ are at most $\pi/2$.
Furthermore, 
the upper half of the disk $D((0,0), d)$ 
is completely contained in $k$ (\cref{fig:green}). 
Thus, since $c$ lies vertically above $k$, and hence vertically above 
$D((0,0), d)$,
Thales's theorem implies 
that the angle at $c$ is also at most $\pi/2$. It follows that
all three angles are at most $\pi/2$ and 
the triangle $abc$ is acute.

Next, we argue that the circumradius
$R$ of the triangle $abc$ statisfies $R \geq 2df$:
Indeed, since the center of the circumcircle of a triangle
lies at the intersection of the perpendicular bisectors 
of its sides,
the center of the circumcircle of
$abc$ lies on the $y$-axis.
Consider the point $c'$ that lies vertically below
$c$ on $k$. The triangle $abc'$ is acute and has
$k$ as its circumcircle, i.e., its circumradius 
is $2df$. Now, if we move $c'$ vertically towards $c$,
the intersection of the perpendicular bisector of $ac$ with the $y$-axis
moves upward, and hence its distance to $a$ increases.
Thus, thus circumradius $R$ of $abc$ is as least as large as
the circumradius  of $abc'$, and we have
$R \geq 2df$.

Now, set 
$\pe_0 = \pe \setminus \{a, b, c\}$, and let
$g = (1/|\pe_0|) \cdot \sum_{p \in \pe_0} p$ be the corresponding 
center of mass.\footnote{Note that $\pe_0$t may have
points below and above the $x$-axis, but that does not affect the argument.}
By the definition of $g$, 
we have that 
$\sum_{p\in\pe_0} \ds{vp} = |\pe_0|\cdot \ds{vg}$, for every 
point $v \in \R^2$, where $\ds{ab}$ denotes the two-dimensional vector
that is defined by $v$ and $p$.

\begin{figure}[ht]
\center
\includegraphics[page=2]{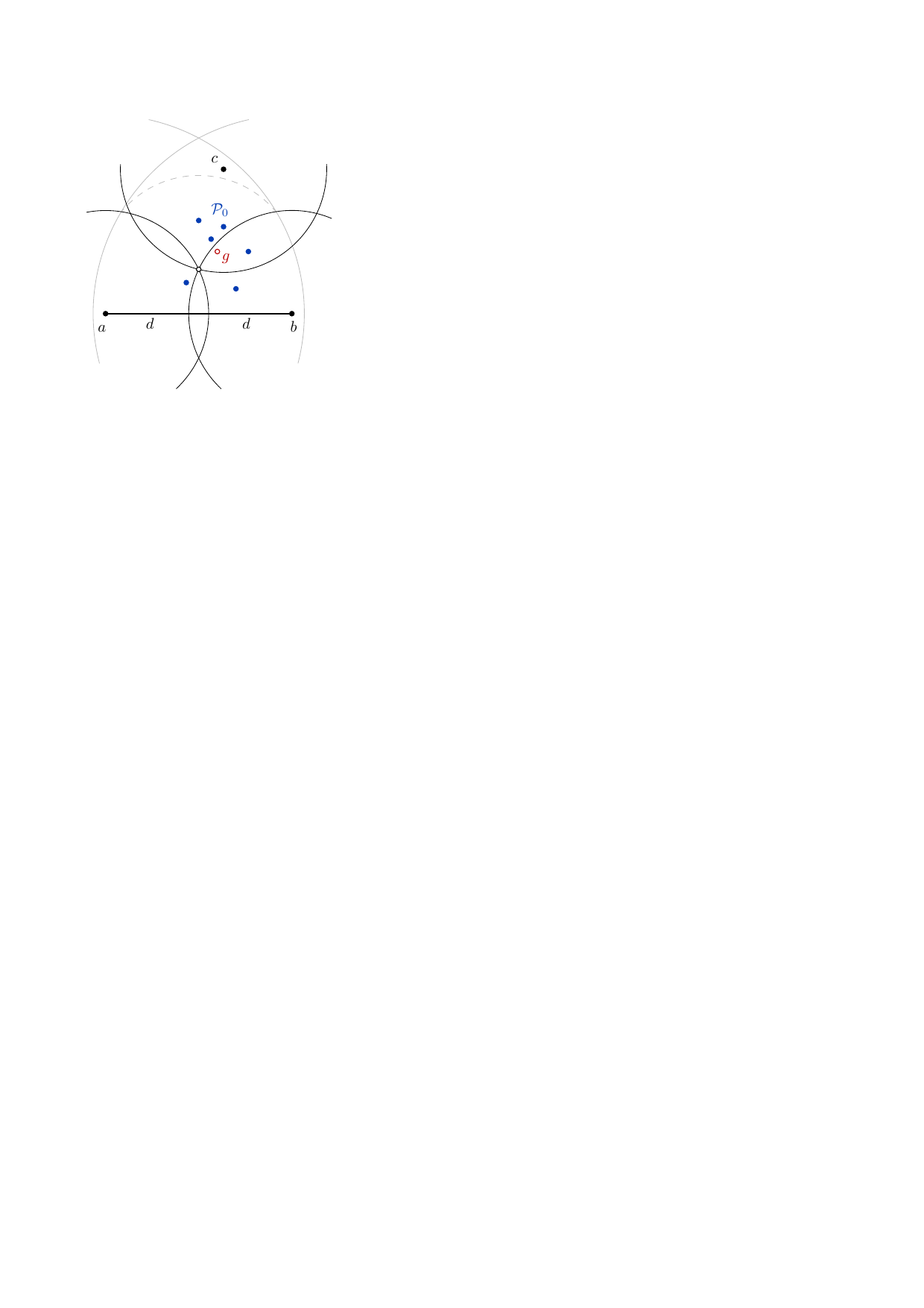}
\caption{We can chose $v=a$.
The orange disk does not intersect the far region above the $x$-axis.
The circumradius of the triangle $abc$ is marked red.}
\label{fig:green}
\end{figure}

By the triangle inequality, this imples that for every $p \in \R^2$,
we have
\[
\sum_{p\in\pe_0} \|vp\| \geq |\pe_0| \cdot \|vg\|
= (n - 3)\, \|vg\|.
\]
Since the triangle $abc$ is acute, Thales's theorem implies that the 
circumcenter $q$ of $abc$ lies in its
interior. Thus, if we consider the rays $qa$, $qb$, and $qc$ from the circumcenter
$q$ through $a$, $b$, and $c$, the angle between two consecutive rays is as most $\pi$,
and for each point $v \in \{a, b, c\}$, the disk $D(v, R)$ does not intersect
the region spanned by the rays through from $q$ through the other two points.
Thus, since the center of mass $g$ lies in one of these regions, 
there exists a vertex $v \in \{a,b,c\}$ 
such that $\| vg\| \geq R$.
The acuteness of $abc$ also implies that
$\|va\| + \|vb\| + \|vc\| \geq 2R$. Thus, 
we get 
\begin{align*}
	|S_v| &= 
\sum_{p \in \pe} \|vp\|\\ 
	&= 
\sum_{p \in \pe_0} \|vp\| + 
	\|va\| + \|vb\| + \|vc\| \\
	& \geq 
(n - 3) \cdot  \|vg\| + 2R\\
	&\geq (n - 3) \cdot R + 2R\\
	&= (n - 1) \cdot R\\
	&\geq (n - 1 )\cdot 2df \\
	&\geq f \cdot \OPT,
\end{align*}
where in the last inequality we used that every edge of $\Topt$ has length at most $2d$.
\end{proof}

Recall that $ab$ is the longest edge of $\Topt$
and that we assumed  $a =(-d, 0)$, $b = (d, 0)$ and $2df > 1$. 
Furthermore, recall that The point $s$ has coordinates $s = (0, \sqrt{(2df)^2 - d^2})$. 
We also noted that the circle $k = D(s, \| sa \|)$ always intersects the lens 
$D(a, 2) \cap D(b, 2)$, and we used this to define the intersection points
$u$ and $v$, as in \cref{fig:lens-first}.

For each point $p = (x,y )$ in the truncated lens with $x,y \geq 0$, recall 
that we have defined the following points:
\begin{itemize}
\item $p_a$: the point on the ray $pa$ whose $x$-coordinate equals $-(2 - d)$;
\item $p_b$: the point on the ray $pb$ whose $x$-coordinate equals $2 - d$, if $x(p) < d$,
	and $p_b = b$, if $x(p) \geq d$; and
\item $p_u$: the point on the arc of $k$ from $u$ to $v$ that is furthest from $p$. Thus,
	if ray $ps$ intersects the arc of $k$ from $u$ to $v$, then $p_u$ is that intersection 
	point, and therwise $p_u = u$.
\end{itemize}

\begin{figure}
\centering
\includegraphics{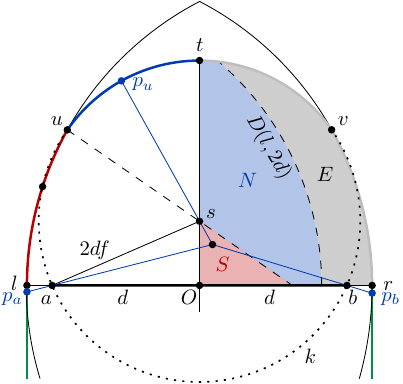}
\caption{
For $p\in E$, we have $\ell_{\Topt}(p) \leq 2d$, for $p \in N$, 
we have $\ell_{\Topt}(p) \leq \max\{\| pp_a\|,\| pp_u\| \}$, 
and for $p \in S$, we have $\ell_{\Topt}(p) \leq \max\{\| pp_a\|,\| pp_u\|\}$.}
\label{fig:topt}
\end{figure}

Now, we show that these three special points $p_a$, $p_b$, and $p_u$ suffice 
to obtain an upper bound for $\ell_{\Topt}(p)$.
\begin{lemma}\label{lem:topt}
For every point $p = (x, y)$ in the truncated lens with $x, y\geq0$, 
we have
\[
 \ell_{\Topt}(p) \leq \min\left\{2d, \max\{\| pp_a\|,\| pp_b\|,\| pp_u\|\}\right\}.
\]
\end{lemma}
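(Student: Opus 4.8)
I would prove the claim as the conjunction of two inequalities, $\ell_{\Topt}(p)\le 2d$ and $\ell_{\Topt}(p)\le\max\{\Vert pp_a\Vert,\Vert pp_b\Vert,\Vert pp_u\Vert\}$, established separately. The first is immediate: $\ell_{\Topt}(p)$ is the length of an edge of $\Topt$, and since $ab$ is a longest edge of $\Topt$, this is at most $\Vert ab\Vert=2d$.

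For the second inequality, write $\ell_{\Topt}(p)=\Vert pq\Vert$, where $q\in\pe$ is the other endpoint of the edge assigned to $p$; since $q$ is a point of $\pe$, it lies in the truncated lens. The decisive structural fact is that $\Topt$ is plane and $ab$ is an edge of $\Topt$, so the edge $pq$ does not cross the segment $ab$. Combined with $y(p)\ge 0$, this confines $q$: either $q$, together with the whole segment $pq$, lies in the closed upper half-plane, or $q$ lies strictly below the $x$-axis and $pq$ meets the $x$-axis at a point whose $x$-coordinate is outside $[-d,d]$, so that $q$ lies in the lower-left or the lower-right part of the truncated lens.

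To finish, I would bound $\Vert pq\Vert$ in each case. Partition the first-quadrant part of the truncated lens into the regions $E$, $N$, $S$ of \cref{fig:topt}: $E$ is the part close to $b$, where $\Vert pp_a\Vert\ge 2d$, so the minimum in the statement equals $2d$ and the trivial bound already suffices; $N$ and $S$ are the rest, separated by whether the line through $p$ and $s$ misses the arc of $k$ from $u$ to $v$ (so that $p_u=u$; region $N$) or meets it (region $S$). For $p\in N\cup S$ I would argue the following. When $q$ lies in the closed upper half-plane, $q$ is a point of the upper half of the truncated lens; its farthest point from $p$ is the corner $u$ if $p\in N$ and an interior point of the arc of $k$ if $p\in S$, so $\Vert pq\Vert\le\max\{\Vert pp_a\Vert,\Vert pu\Vert\}$ in the former case and $\Vert pq\Vert\le\max\{\Vert pp_a\Vert,\Vert pp_u\Vert\}$ in the latter, the ``westmost'' part of that region being controlled by $p_a$ (which sits on the ray $\overrightarrow{pa}$ on the line $x=-(2-d)$ bounding the lens) and its ``topmost'' part by $p_u$. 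When $q$ lies in the lower-left, then, since $p_a$ itself lies on the ray $\overrightarrow{pa}$ on or below the $x$-axis, $\Vert pq\Vert\le\Vert pp_a\Vert$. And when $q$ lies in the lower-right, $p$ is far to the left of that thin part of the lens, so $\Vert pq\Vert\le 2d$ and in fact $\Vert pq\Vert\le\max\{\Vert pp_a\Vert,\Vert pp_b\Vert\}$. In every case $\Vert pq\Vert\le\max\{\Vert pp_a\Vert,\Vert pp_b\Vert,\Vert pp_u\Vert\}$, which together with $\ell_{\Topt}(p)\le 2d$ yields the claimed bound.

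The main obstacle is the planar-geometry bookkeeping hidden in the previous paragraph: delimiting $E$, $N$, $S$ precisely and, for each, locating the point of the truncated lens farthest from $p$ among the admissible positions of $q$, then verifying the corresponding inequality. The truncated lens is non-convex --- it is the lens with two circular caps removed --- so its farthest point from $p$ can be a boundary corner (the point $u$) for some positions of $p$ and an interior point of the arc of $k$ for others, and this is exactly the $N$-versus-$S$ dichotomy; pinning down where the transition occurs, and checking that in each regime the appropriate one of $\Vert pp_a\Vert$, $\Vert pu\Vert$, $\Vert pp_u\Vert$, $2d$ dominates, is the delicate step. It is also essential to invoke the planarity restriction of the second paragraph: without it the farthest point of the truncated lens from $p$ could be a ``lower'' corner, farther from $p$ than $\Vert pp_a\Vert$, yet such a corner cannot be the far endpoint of the edge assigned to $p$, since the connecting segment would cross $ab$. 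Each individual inequality is otherwise elementary --- the triangle inequality, together with the fact that the farthest point of a circle from a point inside it is the antipode of the nearest one.
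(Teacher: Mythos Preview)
Your approach differs from the paper's in one key respect: the paper bounds $\ell_{\Topt}(p)$ by the distance from $p$ to the farthest point $p_f$ of the \emph{entire} truncated lens and then cases on the quadrant containing $p_f$; planarity of $\Topt$ is never invoked. You instead use planarity from the outset: since $ab\in\Topt$, the edge $pq$ cannot cross $ab$, so when $q$ lies below the $x$-axis it is forced into a thin sliver near $l$ (crossing the $x$-axis left of $a$) or near $r$ (crossing right of $b$).

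This is a genuine and useful difference. The paper's lower-half cases are stated tersely---``$p_f$ lies in the third quadrant: Then $\Vert pp_f\Vert\le\Vert pp_a\Vert$'' and similarly for the fourth---and, read literally, appear to be false: the reflection $u'$ of the corner $u$ about the $x$-axis sits in the third quadrant of the truncated lens, and for suitable $p\in N\cup S$ one has $\Vert pu'\Vert>\Vert pp_a\Vert$. Your planarity restriction is exactly what rules such positions out, since the segment from $p$ to a point near $u'$ would cross $ab$. So your remark that planarity is essential is on the mark, and your lower-left/lower-right analysis (the sliver near $l$ bounded by the line $pa$, giving $\Vert pq\Vert\le\Vert pp_a\Vert$, and symmetrically near $r$ giving $\Vert pq\Vert\le\Vert pp_b\Vert$) cleanly fills this gap. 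For $q$ in the upper half your argument and the paper's coincide: the farthest point is governed by the arc $ul$ (controlled by $p_a$) or the arc of $k$ from $u$ to $v$ (controlled by $p_u$), which is precisely the paper's cases 3 and 4.
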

\begin{proof}
As $p$ lies in the upper right quadrant, we have to consider only 
the truncated lens in this quadrant. 
Let $l$ and $r$ be the left- and rightmost points of $D(a,2) \cap D(b,2)$, 
i.e., let $r = (2 - d, 0)$ and $l = (d - 2, 0)$.

We further subdivide the truncated lens into regions; see \cref{fig:topt} for an illustration: 
(i) the region $E$ lies inside the truncated lens, but outside of $D(l, 2d)$ by $E$; 
(ii) the region $N$ lies in the intersection of the truncated lens and $D(l, 2d)$, and
above the line through $u$ and $s$; and  (iii) the region
$S$ lies in the intersection of the truncated lens and $D(l, 2d)$,
and below the line though $u$ and $s$.

First, consider the case that $p \in E$. Then, we have $\| pp_a\| \geq \| pl\| \geq 2d$. 
Hence, it follows that
\[
	\max\{\| pp_a\|,\| pp_b\|,\| pp_u\| \leq 2d, 
\]
and thus
\[
	\min \left\{2d, \max \{\| pp_a\|,\| pp_b\|,\| pp_u\|\} \right\} = 2d.
\]
Since by our definition of $d$, we have
$\ell_{\Topt}(p) \leq 2d$ for every $p \in \pe$, 
the claim now follows for the case $p \in E$.

\begin{figure}[ht]
\center
\includegraphics[page=2]{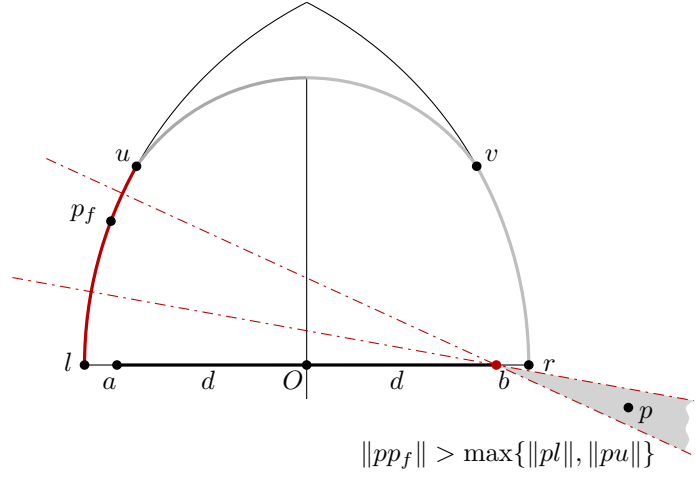}
\caption{The point $p_f$ lies on the arc $ul$. The set of points $p$ 
	such that $\| pp_f\|> \max\{ \| pl\|,\| pu\| \}$ 
	forms a convex wedge with vertex $b$ fully contained in the fourth quadrant.
}
\label{fig:topt_case5}
\end{figure}

Next, consider the case that $p \in N\cup S$. Now, since
$p$ must be to the left of $b$, it follows that $x(p_b)=2 - d$.
Let $p_f$ be the point within the truncated lens that is furthest from $p$ and can 
be connected to $p$ by a line segment that does not intersect $ab$.
We assume that there is no point from $\pe$ in the far region, so we must
have $\ell_{\Topt}(p) \leq \|pp_f\|.$
Since the truncated lens is bounded, $p_f$ lies on the boundary of the truncated lens.
Since $\ell_{\Topt}(p) \leq 2d$ by our definition of $d$, it 
suffices to show that $\| pp_f\| \leq \max\{\| pp_a\|,\| pp_b\|,\| pp_u\|\}$.
We distinguish four cases, depending on the quadrant containing $p_f$:
\begin{enumerate}
\item  $p_f$ lies in the upper right quadrant: consider the reflection $p_f'$ 
	of $p_f$ about the $y$-axis. Since $x \geq 0$, 
	we have $\| pp_f'\| \geq \| pp_f\|$, and the inequality is strict for 
	$x \geq 0$. Thus, if $x > 0$, this case cannot occur, and if $x = 0$, 
	it is handled in the second case
\item 
$p_f$ lies in the left quadrant: 
let $t$ be the top-most point of the truncated lens.
		We distinguish two subcases:
\begin{enumerate}
	\item $p_f$ lies on the arc $tu$ (\cref{fig:topt}):
if $p$ lies in $N$, then for any $q$ on the arc $tu$, we have 
$\| pq\| \leq \| pu\| = \| pp_u\|$, thus also $\| pp_f\| \leq \| pp_u\|$.
If $p$ lies in $S$, then, by the triangle inequality, 
we get $\| pp_f\| \leq \| ps\| + \| sp_f\|=\| ps\| + 2df = \| pp_u\|$. Thus, in
either situation, we have $\|pp_f\| \leq \|p p_u\|$.
\item $p_f$ lies on the arc $ul$: we claim that now, 
it holds that $\| pp_f\| \leq \max\{ \| pl\|,\| pu\| \}$. Indeed, 
since $l$, $p_f$, and $u$ all lie on $\partial D(b, 2)$, 
the perpendicular bisectors of the segments $p_fl$ and $p_fu$ intersect at $b$, and 
thus the points $q$ for which $\| qp_f\| \leq \max\{ \| ql\|,\| qu\| \}$ 
lie in a convex cone with vertex $b$ that is fully contained in the lower right quadrant, 
see \cref{fig:topt_case5}. Now, since $\| pl\| \leq \| pp_a\|$ and 
$\| pu\| \leq \| pp_u\|$, we get $\| pp_f\| \leq \max\{\| pp_a\|, \| pp_u\|\}$ as desired.
\end{enumerate}
\item  $p_f$ lies in the lower left quadrant: then, we have $\| pp_f\| \leq \| pp_a\|$,
	by the definition of $p_a$.
\item  $p_f$ lies in the lower right quadrant: then, we have  $\| pp_f\|\leq \| pp_b\|$,
	by the definition of $p_b$.\qedhere
\end{enumerate}
\end{proof}

Now we show, that if $p$ lies in the first quadrant, the point $p_b$ can be ignored.
\begin{lemma}\label{lem:right} 
  For every point $p = (x, y)$ with $x, y \geq 0$ in the truncated lens,
  we have that if $\| pp_a\| \leq 2d$, then $\| pp_b\| \leq \| pp_a\|$.
\end{lemma}
\begin{proof} 
We present a direct analytic proof 
(we are not aware of an argument that uses
only elementary geometry).
First, if $x = 0$, the claim  follows from symmetry, since then we have 
$\| pp_b\| = \| pp_a\|$. 
Thus, we consider only the case $x > 0$.
Since $\| pp_a\| \leq 2d$, we must have $x < d$, and therefore $p_b$ has $x$-coordinate $2 - d$.
See \cref{fig:right}.

\begin{figure}[ht]
\center
\includegraphics[scale=0.9]{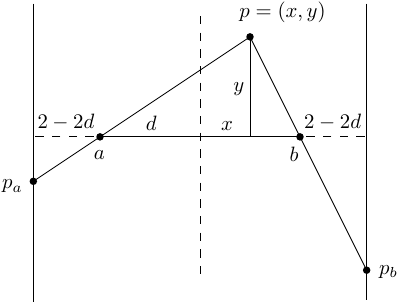}
\caption{The situation for \cref{lem:right}.
}
\label{fig:right}
\end{figure}

Using similar triangles, we have 
$\| pp_a\|=\| pa\| \cdot \frac{2 - d + x}{d + x}$,
and using the Pythagorean theorem, we have $\| pa\|^2 = y^2 + (d + x)^2$.
Thus, we have
\begin{equation}
	\label{equ:ppa}
\| p p_a \| ^2 = (y^2 + (d + x)^2) \cdot \frac{(2 - d + x)^2}{(d + x)^2}.
\end{equation}
Similarly, using 
$\| pp_b\|=\| pb\|\cdot \frac{2 -d +x}{d - x}$ and
$\| pb\|^2 = y^2 + (d - x)^2$,
we express $\| pp_b\|^2$ as
	\[
	\| pp_b\|^2 = (y^2 + (d - x)^2) \cdot \frac{(2 - d - x)^2}{(d - x)^2}.
\]
Thus, our goal $\| pp_b\|^2 \leq \| pp_a\|^2$ 
can be stated as  
\[
(y^2 + (d - x)^2)\cdot\frac{(2 - d - x)^2}{(d - x)^2}   \leq   
	(y^2+(d+x)^2)\cdot\frac{(2-d+x)^2}{(d+x)^2}.
\]
Separating the terms involving $y^2$ from the rest, this becomes
	\begin{equation}
		\label{equ:ysquare}
y^2\cdot \frac{(2-d-x)^2(d+x)^2 - (2-d+x)^2(d-x)^2}{(d+x)^2(d-x)^2} \leq (2-d+x)^2 - (2-d-x)^2.  
	\end{equation}
	For the right-hand side of (\ref{equ:ysquare}), we have
\begin{align*}
	(2 - d + x)^2 - (2 -d -x)^2 &= (2 - d)^2 + x^2 + 2(2 - d)x - (2 - d)^2 - x^2 + 2(2 - d) x\\
	&= 4(2 - d) x.
\end{align*}
	For the numerator on the left-hand side of (\ref{equ:ysquare}), we get
\begin{align*}
	&(2-d-x)^2(d+x)^2 - (2-d+x)^2(d-x)^2 \\
	&=	((2-d)^2-2(2-d)x + x^2)(d^2+2xd + x^2) - ((2-d)^2+2(2-d)x) + x^2)(d^2 - 2xd + x^2)\\
	&= 2(2 - d)^2 2xd - 4(2 - d)xd^2 - 4(2 - d)x^3 + 2x^2 2xd\\
	&= 4(2 - d)^2 xd - 4(2 - d)xd^2 - 4(2 - d)x^3 + 4x^3d\\
	&= 4x \left[(2 - d)^2d - (2 - d)d^2 - (2 - d)x^2 + x^2d\right]\\
	&= 4x \left[4d -4d^2 + d^3 - 2d^2 + d^3 - 2x^2 + dx^2  + dx^2\right]\\
	&= 4x \left[4d -4d^2 + d^3 - 2d^2 + d^3 - 2x^2 + 2dx^2\right]\\
	&= 4x \left[4d - 2d^2  - 2x^2 - 4d^2 + 2d^3 + 2dx^2\right]\\
	&= 4x \cdot 2 (1-d)(2d - d^2 - x^2).
\end{align*}
Thus, by plugging back into (\ref{equ:ysquare}) and dividing by $4x > 0$,
the goal
$\| pp_b\|^2 \leq \| pp_a\|^2$ becomes
\begin{equation}
	\label{equ:goal}
	y^2\cdot \frac{2(1-d)(2d-d^2-x^2)}{(d+x)^2(d-x)^2} \leq 2-d.
\end{equation}
Using (\ref{equ:ppa}), the assumption $\| pp_a\|^2 \leq (2d)^2$ can be equivalently rewritten as
\[
(y^2 + (d + x)^2)\cdot\frac{(2 - d + x)^2}{(d + x)^2} \leq 4d^2. 
\]
Solving for $y^2$, this gives
\begin{equation}
\label{equ:ppa2}
y^2 \leq \frac{(4d^2-(2-d+x)^2)(d+x)^2}{(2-d+x)^2}. 
\end{equation}
We plug in the upper bound on $y^2$ from (\ref{equ:ppa2}) into
(\ref{equ:goal}), 
cancel the term $(d + x)^2$, and clear the denominators.
This leaves us with proving
\[ 
(4d^2-(2-d+x)^2) \cdot 2(1-d)(2d-d^2-x^2) \le (2-d+x)^2 \cdot (2-d) (d-x)^2
\]
which, upon expanding the parentheses and collecting the terms, becomes 
\begin{equation}\label{eqn:pol}
 0\le (16 d - 32 d^2 + 8 d^3 + 16 d^4 - 7 d^5) + x^2(- 8 d+ 16 d^2 - 10 d^3) + d x^4.
\end{equation}
For any fixed $d > 0$, the right-hand side $Q(d,x^2)$ is 
a quadratic function of $x^2$ with positive coefficient $d > 0$ 
for the leading term $(x^2)^2$. Hence, the minimum of $Q(d,x^2)$ is attained when
\[ x^2=\frac{8d-16d^2+10d^3}{2d} = 4-8d+5d^2.
\]
Plugging $x^2=4-8d+5d^2$ into (\ref{eqn:pol}) and 
expanding the parentheses for one last time, we are left to prove
\[
0 \leq 32d^2 - 96d^3 + 96d^4 - 32d^5 = 32d^2 ( 1 - 3d + 3d^2 - d^3) = 
32d^2(1 - d)^3,
\]
which is true since $d \leq 1$.
\end{proof}

Now we give a general lower bound on  $\mathbf{E}_{T \sim \mathcal{D}_\beta}[\ell_T(p)]$ 
that we will then use in \cref{lem:up,lem:left}.
\begin{lemma}\label{lem:ab-construction} 
  Let $p = (x, y)$ be any point in the plane with $x,y \geq 0$ 
  and let $\beta \in (0, 1/2)$ be a real number. Then
  \[
    \mathbf{E}_{T \sim \mathcal{D}_\beta}[\ell_T(p)]
    \geq \frac{d\cdot (1-\beta)+x\cdot 2\beta}{d+x}\cdot \| pa\|.
\]
\end{lemma}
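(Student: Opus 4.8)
The plan is to evaluate $\avg(p,\beta)$ one tree at a time. Each of $S_a$, $T_{a,b}$, $T_{b,a}$, $S_b$ contains the edge $ab$, so $\ell_T(p)$ is simply the length of the edge that $p$ sends toward $ab$ in $T$; so I would first identify those four edges. In $S_a$ and $S_b$ they are $pa$ and $pb$, giving $\ell_{S_a}(p)=\|pa\|$ and $\ell_{S_b}(p)=\|pb\|$. For $T_{a,b}$: since $x\ge0$ we have $\|pa\|^2-\|pb\|^2=4dx\ge0$, so $p$ is not strictly closer to $a$ than to $b$ and therefore belongs to the part that is joined directly to the hub $a$, whence $\ell_{T_{a,b}}(p)=\|pa\|$. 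For $T_{b,a}$: $p$ is joined either to the hub $b$ (possible only when $\|pa\|=\|pb\|$, i.e.\ $x=0$, in which case $\|pb\|=\|pa\|\ge x$) or to some point $q\in\pe$ with $\|qa\|\le\|qb\|$, hence lying in the half-plane $\{x\le0\}$, so that $\ell_{T_{b,a}}(p)=\|pq\|\ge x(p)-x(q)\ge x$. (Equivalently $\ell_{T_{b,a}}(p)\ge\|pp'\|/2=x$, where $p'$ is the reflection of $p$ in the perpendicular bisector of $ab$, as in the proof of \cref{thm:approximation-flat}.) Collecting the two copies of $\|pa\|$ then gives
\[
\avg(p,\beta)\ \ge\ \tfrac12\|pa\|+\beta x+\bigl(\tfrac12-\beta\bigr)\|pb\| .
\]

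Next I would remove $\|pb\|$ via the triangle inequality in the triangle $p\,p'\,b$. The reflection sends $a\mapsto b$, so $\|p'b\|=\|pa\|$, and $\|pp'\|=2x$; hence $\|pb\|\ge\|pa\|-2x$, which (as $\tfrac12-\beta>0$) can be substituted above. Writing $r=\|pa\|$, the right-hand side becomes $\tfrac12 r+\beta x+(\tfrac12-\beta)(r-2x)=(1-\beta)r+(3\beta-1)x$, so the lemma reduces to
\[
(1-\beta)r+(3\beta-1)x\ \ge\ \frac{d(1-\beta)+2\beta x}{d+x}\,r .
\]
Multiplying by $d+x>0$ and using $(1-\beta)(d+x)-d(1-\beta)-2\beta x=x(1-3\beta)$, this is equivalent to
\[
x\,(1-3\beta)\,\bigl(r-(d+x)\bigr)\ \ge\ 0 ,
\]
which holds because $x\ge0$, because $r=\sqrt{(x+d)^2+y^2}\ge x+d$, and because $1-3\beta\ge0$ for the value $\beta\doteq0.1604$ that is used. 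This finishes the argument.

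The only genuinely delicate point is the first step, i.e.\ justifying $\ell_{T_{a,b}}(p)=\|pa\|$ and $\ell_{T_{b,a}}(p)\ge x$: one has to recall that the bipartitions defining $T_{a,b}$ and $T_{b,a}$ are precisely by the side of the perpendicular bisector of $ab$ (the $y$-axis), trace where $p$ lands and which point it is attached to, and treat the boundary case $x=0$ where $p$ is equidistant from $a$ and $b$. After that the proof is one triangle inequality and elementary algebra; note that the coefficient $\frac{d(1-\beta)+2\beta x}{d+x}$ is shaped exactly so that the last inequality collapses to the manifestly true $x(1-3\beta)(r-d-x)\ge0$.
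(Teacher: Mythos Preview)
Your proof is correct and follows essentially the same route as the paper: identify the four contributions (using $\ell_{T_{a,b}}(p)=\|pa\|$ and $\ell_{T_{b,a}}(p)\ge x$ exactly as the paper does), then eliminate $\|pb\|$ via the reflection $p'$ and the triangle inequality. The only cosmetic difference is in the final step: the paper first trades part of $\beta x$ for $\tfrac12\beta(\|pa\|-\|pb\|)$ and then applies the sharper geometric bound $\|pb\|\ge\frac{d-x}{d+x}\|pa\|$, whereas you apply the full triangle inequality $\|pb\|\ge\|pa\|-2x$ at once and verify the resulting inequality collapses to $x(1-3\beta)(r-d-x)\ge 0$. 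Your variant is arguably cleaner, and note that the paper's substitution also tacitly needs $\tfrac12-\tfrac32\beta\ge 0$, i.e.\ $\beta\le\tfrac13$, so your explicit remark that $1-3\beta\ge 0$ for the relevant $\beta\doteq 0.1604$ is exactly the same restriction the paper's argument carries (despite the stated hypothesis $\beta\in(0,1/2)$).
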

\begin{proof} Expanding the definition, we have
\[
\mathbf{E}_{T \sim \mathcal{D}_\beta}[\ell_T(p)]
\geq (1/2 - \beta)\cdot \| pa\| + \beta\cdot \| pa\| + \beta\cdot x + (1/2-\beta)\cdot \| pb\|.
\]

\begin{figure}[ht]
\center
\includegraphics{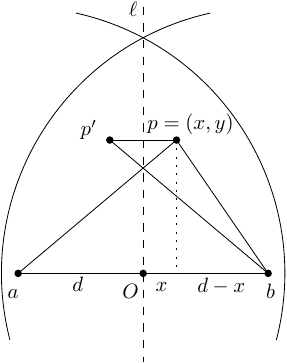}
\caption{Mirroring $p$ along the $y$-axis in \cref{lem:ab-construction}.
}
\label{fig:ab}
\end{figure}
Let $p' = (-x, y)$ be the reflection of $p$ at the $y$-axis (see \cref{fig:ab}).
By the triangle inequality, $\| p'p\|+\| pb\|\ge \| p'b\| =\| pa\|$. This leads to
	$x = \frac{1}{2} \| p'p\| \geq \frac{1}{2} (\| pa\| - \| pb\|)$,
and we obtain
\[
\mathbf{E}_{T \sim \mathcal{D}_\beta}[\ell_T(p)]
	\geq \frac{1}{2}\cdot \| pa\| + \frac12\beta\cdot \| pa\| + 
	\left(\frac12-\frac32\beta\right)\cdot \| pb\|.
\]
Next, we claim that $\| pb\| \geq \frac{d-x}{d+x} \cdot \| pa\|$: Indeed, 
upon squaring, using the Pythagorean theorem and clearing the denominators 
this becomes $y^2\cdot 4dx\ge 0$ which is true. 
Using this bound on the term containing $\| pb\|$, we finally get the desired
\[
\mathbf{E}_{T \sim \mathcal{D}_\beta}[\ell_T(p)]
	\geq \frac{(1+\beta)(d+x) + (1-3\beta)(d-x)}{2(d+x)} \cdot \| pa\| = \frac{(1-\beta)\cdot d + 2\beta\cdot x}{d+x} \cdot \| pa\|. \qedhere
\]
\end{proof}

\begin{lemma}\label{lem:left} Let $p=(x,y)$ be any point in the truncated lens with $x,y\ge 0$. Then, if
\[
\frac{2f-1}{5-8f} \le \beta \le \frac12\cdot f,
\]
we have
$\mathbf{E}_{T \sim \mathcal{D}_\beta}[\ell_T(p)]
\geq f\cdot \min\{2d,\| pp_a\|\}$.
\end{lemma}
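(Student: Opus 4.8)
The plan is to combine the lower bound of \cref{lem:ab-construction} with an explicit formula for $\Vert pp_a\Vert$, and then to split the analysis according to which of the two quantities realizes $\min\{2d,\Vert pp_a\Vert\}$. I would first record two elementary facts about $p=(x,y)$ with $x,y\ge 0$: since $d\le 1$, the point $a=(-d,0)$ lies between $p$ and $p_a$ on their common line, so similar triangles give $\Vert pp_a\Vert=\tfrac{2-d+x}{d+x}\Vert pa\Vert$; and $\Vert pa\Vert=\sqrt{(d+x)^2+y^2}\ge d+x$. Plugging the first identity into \cref{lem:ab-construction} gives
\[
 \avg(p,\beta)\ \ge\ \frac{d(1-\beta)+2\beta x}{d+x}\cdot \Vert pa\Vert .
\]
Throughout, I would use from the proof of \cref{thm:approximation} that $2df>1$ (hence $d>\tfrac1{2f}>\tfrac45$) and that $d\le 1$.

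The key object is the linear inequality $(\star)\colon\ d(1-\beta)+2\beta x\ge f(2-d+x)$. If $(\star)$ holds at $p$, then the displayed bound immediately yields $\avg(p,\beta)\ge f\cdot\tfrac{2-d+x}{d+x}\Vert pa\Vert=f\Vert pp_a\Vert\ge f\min\{2d,\Vert pp_a\Vert\}$, and we are done. Because $\beta\le f/2$ we have $f-2\beta\ge 0$; setting aside the degenerate subcase $f=2\beta$ (where $(\star)$ holds for every $x$ since $d>\tfrac1{2f}\ge\tfrac{4f}{2+f}$, the last inequality being equivalent to the non-emptiness $\tfrac{2f-1}{5-8f}\le\tfrac f2$ of the hypothesized $\beta$-interval), we may assume $f>2\beta$, and then $(\star)$ is equivalent to $x\le x_1$ for the threshold $x_1=\tfrac{d(1-\beta)-f(2-d)}{f-2\beta}$. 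So the only remaining case is $x>x_1$, and there it suffices to prove the weaker statement $\avg(p,\beta)\ge 2df$, since $f\min\{2d,\Vert pp_a\Vert\}\le 2df$ anyway.

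To treat $x>x_1$, I would establish $x_0\le 3d-2\le x_1$, where $x_0=\tfrac{d(2f-1+\beta)}{2\beta}$ is characterized by $d(1-\beta)+2\beta x_0=2df$. Clearing denominators and using $5d-4>0$, both inequalities reduce to the single condition $\beta\ge\tfrac{d(2f-1)}{5d-4}$; and since $\tfrac{d(2f-1)}{5d-4}$ is decreasing in $d$, on $(\tfrac1{2f},1]$ it stays below its value $\tfrac{2f-1}{5-8f}$ at $d=\tfrac1{2f}$, which is $\le\beta$ by hypothesis. Given $3d-2\le x_1$, every $x>x_1$ satisfies $x>3d-2$, hence $\Vert pp_a\Vert=\tfrac{2-d+x}{d+x}\Vert pa\Vert\ge 2-d+x>2d$, so the minimum equals $2d$. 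Given $x_0\le 3d-2$, we have $x>x_1\ge 3d-2\ge x_0$, so by monotonicity $d(1-\beta)+2\beta x>2df$, and therefore $\avg(p,\beta)\ge\tfrac{d(1-\beta)+2\beta x}{d+x}\Vert pa\Vert\ge d(1-\beta)+2\beta x>2df$, where the second step uses $\Vert pa\Vert\ge d+x$. This finishes the proof.

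The step I expect to be the main obstacle is recognizing that one must \emph{not} try to prove $(\star)$ for every admissible $x$: near the right tip of the lens the minimum equals $2d$, which is strictly smaller than $f\Vert pp_a\Vert$, so there $(\star)$ is far too strong and would in fact force $\beta>f/2$, contradicting the hypothesis. The resolution is to play two lower bounds for $\avg(p,\beta)$ against each other — the sharp factor $\tfrac{d(1-\beta)+2\beta x}{d+x}\Vert pa\Vert$, effective when $(\star)$ holds, and the cruder $d(1-\beta)+2\beta x$ coming from $\Vert pa\Vert\ge d+x$, effective for large $x$ — and to verify that their regimes of validity overlap, which happens exactly because $x_0\le 3d-2\le x_1$. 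It is this matching at $x=3d-2$ that pins down the lower bound $\tfrac{2f-1}{5-8f}$ on $\beta$ appearing in the statement.
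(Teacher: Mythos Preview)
Your proof is correct and follows essentially the same approach as the paper: both rely on \cref{lem:ab-construction}, the identity $\Vert pp_a\Vert=\tfrac{2-d+x}{d+x}\Vert pa\Vert$, the bound $\Vert pa\Vert\ge d+x$, and both reduce the analysis to the single inequality $\beta(5d-4)\ge d(2f-1)$, which is then derived from the lower hypothesis on $\beta$ together with $2df>1$. The only organizational difference is that the paper splits directly at $x=3d-2$ (Case~1: $x\ge 3d-2\Rightarrow \avg\ge 2df$; Case~2: $x\le 3d-2\Rightarrow \avg\ge f\Vert pp_a\Vert$), whereas you split at the exact threshold $x_1$ where $(\star)$ becomes tight and then sandwich $x_0\le 3d-2\le x_1$; both decompositions collapse to verifying the same inequality at $x=3d-2$.
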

\begin{proof}
It suffices to show that:
\begin{enumerate}
\item If $x\ge 3d-2$, then $\mathbf{E}_{T \sim \mathcal{D}_\beta}[\ell_T(p)] \geq f\cdot 2d$.
\item If $x\le 3d-2$, then $\mathbf{E}_{T \sim \mathcal{D}_\beta}[\ell_T(p)] \geq f\cdot \| pp_a\|$.
\end{enumerate}
We consider those two cases independently.
\begin{enumerate}
\item\label{item:first} Using Lemma~\ref{lem:ab-construction} and the inequalities $\| pa\|\ge d+x$ and $x\ge 3d-2$, we rewrite
\[\mathbf{E}_{T \sim \mathcal{D}_\beta}[\ell_T(p)] \geq \frac{d\cdot (1-\beta)+x\cdot 2\beta}{d+x}\cdot \| pa\| \ge d-d\beta +(3d-2)\cdot2\beta = \beta(5d-4)+d.
\]
Hence it suffices to prove $\beta\cdot (5d-4)\ge d(2f-1)$.
Using the lower bound on $\beta$ and  $4\le 8df$ we get
\[\beta\cdot (5d-4) \ge \beta\cdot (5d-8df) \ge\frac{2f-1}{5-8f}\cdot d\cdot (5-8f) = d(2f-1)\]
as desired. Note that $2f-1$ and $5-8f$ are both positive.

\item We have $\| pp_a\|=\| pa\|\cdot \frac{2-d+x}{d+x}$. Using Lemma~\ref{lem:ab-construction} it suffices to prove
\begin{align*}
d(1-\beta)+x\cdot 2\beta &\ge f(2-d+x) \\
d(1-\beta) -f(2-d) &\ge x(f-2\beta).
\end{align*}
Since $f\ge 2\beta$ by assumption, the right-hand side is increasing in $x$ and we can plug in $3d-2$ for $x$.
This leaves us with proving the inequality 
\begin{align*}
d(1-\beta) -f(2-d) &\ge (3d-2)(f-2\beta)\\
\beta\cdot (5d-4)  &\ge d(2f-1),
\end{align*}
which is the same inequality as in the first case.\qedhere
\end{enumerate}
\end{proof}

\begin{lemma}\label{lem:up} 
Let $p=(x,y)$ be any point in the truncated lens with $x,y\ge 0$. 
Suppose that $\beta< \frac{151}{304}\cdot f$, that $\frac12 \le f\le \frac{19}{32}$, and that
\begin{equation*}
\frac{2f-1}{2\sqrt{5-8f}-1} \le \beta \le 1-f\sqrt{4f^2-1}-2f^2.
\end{equation*}
Then 
$\mathbf{E}_{T \sim \mathcal{D}_\beta}[\ell_T(p)] \ge f\cdot \min\{2d,\| pp_u\|\}\label{eq:lemup}$.
\end{lemma}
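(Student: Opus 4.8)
The plan is to play the lower bound $\avg(p,\beta)\ge \frac{d(1-\beta)+2\beta x}{d+x}\cdot\Vert pa\Vert$ of \cref{lem:ab-construction} against an upper bound on $\Vert pp_u\Vert$ coming from one geometric observation: the point $p_u$ lies on the circle $k=\partial D(s,2df)$, and, because $\Vert sa\Vert=\Vert sb\Vert=2df$ by the choice of $s$, this circle passes through $a$ and $b$ as well. Since every point of $k$ is at distance $2df$ from the centre $s$, the triangle inequality yields the uniform bound $\Vert pp_u\Vert\le \Vert ps\Vert+2df$, with equality exactly when the point of $k$ farthest from $p$ — which lies on the line $ps$, beyond $s$ — belongs to the arc $uv$. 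This is the same dichotomy that appeared in the proof of \cref{lem:topt}: for $p$ in the region $N$ one has $p_u=u$, so $\Vert pp_u\Vert=\Vert pu\Vert$, while for $p$ in the region $S$ one has $\Vert pp_u\Vert=\Vert ps\Vert+2df$. I would organise the argument along these regions (together with the far-right region $E$, where $\Vert pa\Vert$ is so large that nothing is at stake), and inside each along the two branches of the minimum $\min\{2d,\Vert pp_u\Vert\}$.

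The branch $\Vert pp_u\Vert\ge 2d$ is routine: the goal reduces to $\avg(p,\beta)\ge 2df$, and \cref{lem:ab-construction} together with $\Vert pa\Vert\ge d+x$ (keeping the $y$-dependence where $x$ alone is too small) settles it, using the explicit coordinates $s=(0,d\sqrt{4f^2-1})$ and of $u$ to turn the hypothesis $\Vert pp_u\Vert\ge 2d$ into a usable inequality. The substance of the lemma is the branch $\Vert pp_u\Vert< 2d$, where we must show $\avg(p,\beta)\ge f\Vert pp_u\Vert$.

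For $p\in S$ I would substitute $\Vert pp_u\Vert=\Vert ps\Vert+2df$, square, write $\Vert ps\Vert^2=x^2+(y-h)^2$ with $h=d\sqrt{4f^2-1}$ and $\Vert pa\Vert^2=(d+x)^2+y^2$, and use \cref{lem:ab-construction} to turn the target into an algebraic inequality in $d,x,y,\beta,f$. The dependence on $y^2$ is monotone once $\beta<\tfrac{151}{304}f$ — this is exactly where that hypothesis is consumed — so the binding configuration turns out to be $p=(0,0)$, the midpoint of $ab$; there $\avg(p,\beta)\ge (1-\beta)d$ while $\Vert pp_u\Vert=h+2df=d\bigl(\sqrt{4f^2-1}+2f\bigr)$, and the inequality $(1-\beta)d\ge f\,d\bigl(\sqrt{4f^2-1}+2f\bigr)$ is precisely the upper constraint $\beta\le 1-f\sqrt{4f^2-1}-2f^2$. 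For $p\in N$ the target is $\avg(p,\beta)\ge f\Vert pu\Vert$ with $u$ a fixed point — an intersection of $k$ with the boundary of the lens — so the explicit coordinates of $u$ now enter; the worst case sits on the portion of $\partial N$ farthest from $u$ (part of the line $us$), and carrying the algebra through — with $5-8f>0$ guaranteed by $f\le\tfrac{19}{32}$, and the square root of a quantity of the form $5-8f$ produced by the coordinates of $u$ — gives the lower constraint $\beta\ge\frac{2f-1}{2\sqrt{5-8f}-1}$.

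I expect the region-$N$ case to be the real obstacle: $\Vert pu\Vert$ must be controlled through the not-very-clean circle–circle intersection $u$, and, unlike in region $S$, the inequality does not collapse to a single extremal point but has to be verified over all of $N$, which is what forces the auxiliary hypothesis $\beta<\tfrac{151}{304}f$ in order to keep the $y$-dependence in check. Once both one-sided constraints on $\beta$ are in place the lemma is proved; \cref{lem:algebra} will then show that for the chosen value of $f$ the two constraints coincide, so that their common value $\beta\doteq 0.1604$ indeed satisfies the hypotheses.
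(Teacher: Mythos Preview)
Your overall scheme---play the lower bound of \cref{lem:ab-construction} against an upper bound for $\Vert pp_u\Vert$, locate the binding configurations, and read off the two constraints on $\beta$---is the right one. But the attribution of the two constraints to the regions is backwards, and this hides a genuine gap.

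In the paper both $\beta$-constraints arise from the \emph{same} case, namely $y\le y(u)$ reduced to the $x$-axis. The reduction is painless: the lower bound $\lambda=\frac{d(1-\beta)+2\beta x}{d+x}\Vert pa\Vert$ is increasing in $y$, while $\min\{2d,\Vert pp_u\Vert\}$ is decreasing in $y$ (for $y\le y(u)$), so one may set $y=0$ outright---no squaring, no monotonicity in $y^2$. On the $x$-axis the target becomes $d(1-\beta)+2\beta x\ge f\bigl(\sqrt{x^2+d^2(4f^2-1)}+2df\bigr)$, which after squaring is a concave quadratic in $x$ (the leading coefficient is $f^2-4\beta^2>0$; this and the positivity needed to square are precisely what $\beta<\tfrac{151}{304}f$ buys). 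Hence one only checks the two endpoints $x=0$ and $x=q_x$, where $q=(q_x,0)$ is the point with $\Vert qs\Vert=2d(1-f)$, i.e.\ $q_x=d\sqrt{5-8f}$. The endpoint $x=0$ gives the upper constraint $\beta\le 1-f\sqrt{4f^2-1}-2f^2$; the endpoint $x=q_x$ gives the lower constraint $\beta\ge\frac{2f-1}{2\sqrt{5-8f}-1}$. So the quantity $\sqrt{5-8f}$ comes from the point $q$ on the $x$-axis where $\Vert pp_u\Vert$ first drops to $2d$, not from the coordinates of $u$.

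Conversely, the case $y>y(u)$---roughly your region $N$---is the \emph{easy} one, not the obstacle. There $p_u=u$ and, because $p$ lies above the horizontal line through $u$, one has $\Vert pa\Vert\ge\Vert pu\Vert=\Vert pp_u\Vert$. It therefore suffices that the coefficient $\frac{d(1-\beta)+2\beta x}{d+x}$ be at least $f$, and since $x\le d$ in this region that rearranges to the loose condition $\beta\ge 2f-1$, already implied by the lower constraint. No coordinates of $u$, no circle--circle intersection, are ever needed. Your plan to extract the lower constraint from region $N$ via the explicit location of $u$ would not produce $\frac{2f-1}{2\sqrt{5-8f}-1}$; and your region-$S$ analysis, stopping at the single point $(0,0)$, misses the second binding endpoint $x=q_x$ entirely.
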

\begin{proof} 
Because of the lower bound for
$\mathbf{E}_{T \sim \mathcal{D}_\beta}[\ell_T(p)]$ in \cref{lem:ab-construction}, to show the statement it suffices to show that
\begin{equation}
\frac{d\cdot (1-\beta)+x\cdot 2\beta}{d+x}\cdot \| pa\| \geq f\cdot  \min\{2d, \| pp_u \|\}. \label{eq:lemup2}
\end{equation}
We will consider the two cases $y\leq y(u)$ and $y > y(u)$ separately.

\paragraph*{Case 1: $y\leq y(u)$}
Define $\lambda =\frac{d\cdot (1-\beta)+x\cdot 2\beta}{d+x} \cdot \| pa \|$.
We have to show that $\lambda \geq f\cdot  \min\{2d, \| pp_u \|\}$. 
Note that $\lambda$ is an increasing function in $y$ because $\| pa \|$ is also increasing in $y$. 
On the other hand, $\| pp_u \|$ is a decreasing function in $y$ for $y \le y(u)$: 
If $p_u\ne u$, then $\| pp_u\|=\| ps\|+2df$ is decreasing in $y$, and if 
$p_u=u$ then $\| pp_u\| = \| pu\|$ decreases when $y$ increases (for $y\le y(u)$). 
It follows that $\min\{2d, \| pp_u \|\}$ is a decreasing function in $y$ for $y\le y(u)$.

As $\min\{2d, \| pp_u \|\}$ is decreasing and $\lambda$ is increasing in $y$, for $y \leq y(u)$, 
to handle our current case it suffices to show (\ref{eq:lemup2}) for $y = 0$.
Now we have $\| pa\| = d + x$, so we can rewrite $\lambda$ as 
	$\lambda_0 =d \cdot (1-\beta) + x \cdot 2 \beta$, which is positive.

Let $q=(q_x,0), q_x\ge0$ be the point on the $x$-axis such that $\| qs\|=2d(1-f)$.
This means that for $p=q$ we have $\| pp_u\|\le \| ps\|+\| sp_u\|=2d$ (see Figure~\ref{fig:up}).
We consider subcases depending on whether $0\leq x \leq q_x$ or $x> q_x$.

\begin{figure}[ht]
\center
\includegraphics{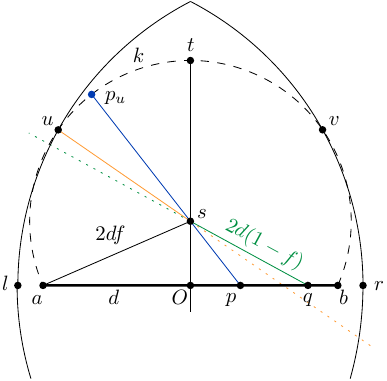}
\caption{Lemma~\ref{lem:up}.
}
\label{fig:up}
\end{figure}

\begin{description}
\item[Case 1a:] $0\leq x \leq q_x$.
We will show that in this case $\lambda_0 \geq f\cdot \| pp_u\|$.
The Pythagorean theorem gives
\[f\cdot \| pp_u\|\le f\cdot\| ps\| + f\cdot \| sp_u\|=f\sqrt{x^2+(2df)^2-d^2} + 2df^2
\]
Therefore, substituting $\lambda_0$, it suffices to show
 \begin{align}
 d \cdot (1-\beta) + x \cdot 2 \beta &\geq f\sqrt{x^2+(2df)^2-d^2} + 2df^2. \label{eq:case1a}
 \end{align}
When $\beta< \frac{151}{304}\cdot f$ and $f\le \frac{19}{32}$, the term $d \cdot (1-\beta) - 2df^2$ is positive and the last inequality is equivalent to 
 \begin{align*}
 0 &\geq f^2 \cdot (x^2+(2df)^2-d^2) - (d \cdot (1-\beta) + x \cdot 2 \beta - 2df^2)^2
 \end{align*}
The right-hand side is a quadratic function of $x$ with positive coefficient $f^2-4\beta^2$ by the leading term, hence it suffices to check the inequality (\ref{eq:case1a}) for $x\in\{0,q_x\}$.

For $x=0$, we need to check that $d(1-\beta)\ge f(\sqrt{d^2(4f^2-1)} + 2df^2)$,
which reduces precisely to the assumption
\begin{equation*}
\beta\le 1- f\sqrt{4f^2-1} - 2f^2.
\end{equation*}

For $x=q_x$, the Pythagorean theorem gives $(q_x)^2=(2d(1-f))^2 + d^2 - (2df)^2 = d^2\cdot (5-8f)$, hence $q_x=d\sqrt{5-8f}$. The point $q$ has been selected so that $\| ps\|+\| sp_u\|=2d$, and therefore
the right side of (\ref{eq:case1a}) is $2df$.
We thus have to verify that
\begin{align}
d(1-\beta)+d\sqrt{5-8f}\cdot 2\beta &\ge f\cdot 2d \label{eq:case1aa}\\
\beta\cdot (2\sqrt{5-8f} -1) &\ge 2f-1.\nonumber
\end{align}
Since $f\le 19/32$, the term in the parentheses on the left-hand side is positive, and after dividing we obtain precisely the assumption. 

\item[Case 1b:] $q_x <x$.
In this case we show that $\lambda_0 \geq f\cdot 2d$.
Since the term $\lambda_0$ is increasing in $x$ and $2d$ is constant, we only need to show that
$\lambda_0 \geq f\cdot 2d$ for $x=q_x$.
However, this was already shown in the previous case; see (\ref{eq:case1aa}).
\end{description}

\paragraph*{Case 2: $y > y(p_u)$}
In this case we have $\| pp_u\| = \| pu\| \leq \| uv\|$. 
Furthermore, because $4df > 2d$, the intersection of the line supporting 
$bs$ with the circle $k$ is outside the lens. This intersection point 
has $x$-coordinate $-d$ because of symmetry with respect to $s$, and since $u$ is 
above it, we have $x(u) \geq -d$ as well as $x \leq x(v) \leq d$. 
So we have $\min\{2d, \| p p_u\|\} = \| pp_u\|$. This means that to show 
(\ref{eq:lemup2}), we have to show
\begin{align*}
\frac{d(1-\beta)+2x\beta}{d+x} \cdot \| pa\| & \geq f\cdot\| pp_u\|.
\end{align*}
As $p$ lies above the horizontal line through $u$, 
we get $\| pa\| \geq \| pp_u\|$, thus it suffices to show
\begin{align*}
\frac{d(1-\beta)+2x\beta}{d+x}  & \geq f\\
d-d\beta + 2x\beta & \geq fd + fx\\
d(1-(\beta+f)) & \geq x(f-2\beta)\\
x & \leq d\cdot \frac{1-(\beta+f)}{f-2\beta} ,
\end{align*}
we we have used that $f-2\beta>0$.
As we know $x\leq d$, this is true for
\begin{align*}
\frac{1-(\beta+f)}{f-2\beta} & \geq 1\\
1-(\beta +f) & \geq f-2\beta\\
\beta &\geq 2f-1,
\end{align*}
where we have again used that $f-2\beta>0$.
The last condition is a looser bound than the left side of the statement of the lemma as $f\geq \frac{1}{2}$ and therefore $2\cdot \sqrt{5-8f}-1 \leq 1$. \qedhere
\end{proof}

\begin{lemma}\label{lem:algebra}
	The positive solutions of
\[
		\frac{2x-1}{2\sqrt{5-8x}-1} = 1-x\sqrt{4x^2-1}-2x^2
\]
	are $x=\tfrac 58$ and the fourth smallest root of
\[
		-80 + 128 x + 504 x^2 - 768 x^3 - 845 x^4 + 1096 x^5 + 256 x^6.
\]
\end{lemma}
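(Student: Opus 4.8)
The plan is to clear all radicals by repeated isolation-and-squaring, reducing the given transcendental equation to a single polynomial identity, and then to identify which roots of that polynomial actually satisfy the original equation (taking care of the sign conditions that squaring destroys). Write $r_1=\sqrt{5-8x}$ and $r_2=\sqrt{4x^2-1}$; for the equation to make sense over the reals we need $x\le 5/8$ and $|x|\ge 1/2$, so on the positive side $x\in[1/2,5/8]$. On this interval $2r_1-1>0$ (indeed $r_1\ge 1$ at $x=1/2$ and $r_1=0$ only at $x=5/8$; one checks $2r_1-1>0$ on $[1/2,5/8)$), so we may cross-multiply:
\[
 2x-1 = (2r_1-1)\bigl(1-2x^2 - x\,r_2\bigr).
\]

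First I would expand the right-hand side and collect the two radical terms, moving everything not containing a radical to the left. This yields an equation of the shape $A(x) = B(x)\,r_1 + C(x)\,x\,r_2$ for explicit polynomials $A,B,C$ (here $B$ carries the $2$ from $2r_1$ and $C$ the $-x$ from $-x r_2$, each multiplied by the polynomial part $1-2x^2$ or its cofactor). Square both sides once: the cross term is $2BCx\,r_1 r_2$, and $r_1^2=5-8x$, $r_2^2=4x^2-1$ are polynomials, so the squared equation has the form $D(x) = E(x)\, r_1 r_2$ with $D,E$ polynomials. Isolating and squaring a second time uses $(r_1 r_2)^2=(5-8x)(4x^2-1)$, which is again a polynomial, and produces a genuine polynomial equation $D(x)^2 = E(x)^2 (5-8x)(4x^2-1)$. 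After expanding and cancelling common factors — the degree bookkeeping ($r_1 r_2$ has ``degree $3/2$'', so $D$ has degree at most $3$ and the final identity degree at most $6$) matches the stated sextic — one is left with $(8x-5)\cdot$(the displayed degree-$6$ polynomial)$\,=0$, or possibly the displayed polynomial alone after dividing out spurious factors; this is the routine computation I would not grind through here, but it is just polynomial arithmetic.

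The real content, and the main obstacle, is the reverse implication: squaring twice can introduce extraneous roots (from the wrong choice of sign of $r_1$, $r_2$, or $r_1r_2$, and from points where the isolated side changed sign), so after computing the polynomial I must go back and verify which of its roots in $[1/2,5/8]$ actually solve the original equation. I expect exactly two survivors. The value $x=5/8$ makes $r_1=0$, hence makes the left side $0/(-1)$; checking, the right side at $x=5/8$ is $1 - \tfrac58\sqrt{4\cdot\frac{25}{64}-1} - 2\cdot\frac{25}{64} = 1-\tfrac58\cdot\tfrac34-\tfrac{25}{32}=0$, so $x=5/8$ is a genuine solution — and since $8x-5$ is a factor of the product form, this accounts for it cleanly. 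For the sextic, I would locate its roots numerically, discard those outside $[1/2,5/8]$ or complex, and evaluate both sides of the original equation at the remaining candidate(s); the excerpt's own statement (and the surrounding use of $f\doteq 0.5467$ as ``the fourth smallest real root'') tells us the surviving root is the fourth smallest, which I would confirm by a sign-change count of the sextic on the real line (e.g.\ via Sturm sequences or just evaluating at a handful of sample points) together with the membership-in-$[1/2,5/8]$ check. Thus the positive solution set is exactly $\{5/8\}$ together with that fourth-smallest root, as claimed.
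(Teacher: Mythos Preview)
Your approach is essentially the paper's: cross-multiply, square twice to eliminate the two radicals, obtain the degree-seven polynomial $8(x-\tfrac58)\cdot(\text{sextic})$, and then numerically sort out which of the seven roots survive the sign conditions destroyed by squaring.

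One small slip worth flagging: after cross-multiplying you do not get the shape $A(x)=B(x)\,r_1+C(x)\,x\,r_2$. Expanding $(2r_1-1)(1-2x^2-xr_2)$ produces a term $-2x\,r_1r_2$ as well, so there are \emph{three} radical species $r_1$, $r_2$, $r_1r_2$ on the right. If you squared with all three on one side, the cross terms would regenerate all three species and you would not terminate in two squarings. The paper's fix (and the natural one) is to move the $r_1r_2$ term over to sit with the polynomial part before the first squaring:
\[
 2x-2x^2+2x\sqrt{q_1q_2}\;=\;x\sqrt{q_1}-(2-4x^2)\sqrt{q_2},
\]
so that squaring the left contributes only $\sqrt{q_1q_2}$, and the right-hand cross term is also a multiple of $\sqrt{q_1q_2}$; after collecting, a single radical remains and one more squaring finishes. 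With that rearrangement your outline goes through exactly as the paper does it, including the final numerical check that only $x\doteq 0.5467$ and $x=\tfrac58$ among the seven roots satisfy the original equation.
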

\begin{proof}
	We provide a sketch of how to solve it ``by hand''.
	One can also use advanced software for algebraic manipulation.
	Setting the polynomials $q_1(x)=4x^2-1$ and $q_2(x)=5-8x$, and multiplying both sides of the equation by the denominator on the left-hand side, we are left with the equation
	
	\begin{align*}
	 2x-2x^2 &= x \sqrt{q_1(x)} + (2-4x^2) \sqrt{q_2(x)} - 2x\sqrt{q_1(x)q_2(x)}\\
	 2x-2x^2 + 2x\sqrt{q_1(x)q_2(x)} &= x \sqrt{q_1(x)} + (2-4x^2) \sqrt{q_2(x)}.
	\end{align*}
	Squaring both sides, which may introduce additional roots, we get, for some 
	polynomials $q_3(\,),\dots, q_6(\,)$, the equation
	
	\begin{align*}
	 q_3(x) + q_4(x) \sqrt{q_1(x)q_2(x)} &= q_5(x) + q_6(x) \sqrt{q_1(x)q_2(x)}\\
	 q_3(x) -q_5(x) &= \bigl( q_6(x)-q_4(x)\bigr) \sqrt{q_1(x)q_2(x)}.
	\end{align*}
	Squaring both sides gain, which may introduce additional solutions,
	we get the polynomial 
\[		8 (x-\tfrac 58) (-80 + 128 x + 504 x^2 - 768 x^3 - 845 x^4 + 1096 x^5 + 256 x^6) =0
\]
	This polynomial has 7 real roots that can be approximated numerically.
	The smallest three roots of this polynomial are 
	negative ($x\doteq -4.82037$, $x\doteq -0.657898$ and $x\doteq -0.523446$).
	The fourth smallest root, $x\doteq 0.546723$, is a solution to the original equation. 
	The fifth and sixth roots are $x\doteq 0.577526$ and $x\doteq 0.596211$, which
	are not solutions to the original equation.
	The largest root of the polynomial is $x=\tfrac 58 $, 
	which is also a solution to the original equation.	
\end{proof}

\section{Convex and flat convex point sets}\label{sec:caterpillars}

In this section we present two results for convex point sets.
First, we show that if $\pe$ is a convex point set then the longest plane tree is a caterpillar (see~\cref{thm:convex}) and that any caterpillar could be the unique longest plane tree (see~\cref{thm:caterpillars}).
Second, by looking at suitable flat convex sets we prove upper bounds on the approximation factor $\bd(d)$ achieved by the longest plane tree among those with diameter at most $d$.

\subsection{Convex sets and caterpillars}\label{sec:convex}

A tree $C$ is called \emph{caterpillar} if it contains a path $P$ 
such that every node in $C \setminus P$ is adjacent to a node on $P$.
Equivalently, a tree is a caterpillar if its edges can be listed in such an order that every two consecutive edges in the list share an endpoint.

\begin{figure}[ht]
\center
\includegraphics{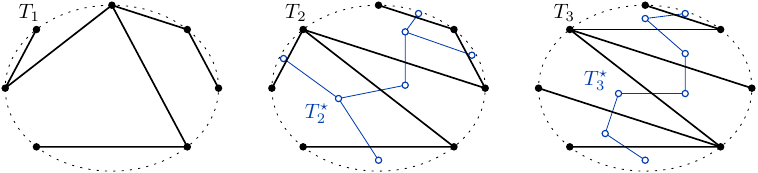}
\caption{Left: A tree $T_1$ that spans $\pe$ but it is not a caterpillar.
Middle: A tree $T_2$ that is a caterpillar but it is not zigzagging.
Right: A tree $T_3$ that is a zigzagging caterpillar -- the dual tree $T_3^\star$ is a path.
}
\label{fig:zigzag}
\end{figure}

Throughout this section we consider trees that span a given convex point set $\pe$.
We say that (a drawing of) such a tree $T$ is a \textit{zigzagging caterpillar} if $T$ is a caterpillar and the dual graph $T^\star$ of $T$ is a path.
Here a \textit{dual graph} $T^\star$ is defined as follows:
Let $C$ be a smooth closed curve passing through all points of $\pe$.
The curve bounds a convex region and the $n-1$ edges of $T$ split that region into $n$ subregions.
The graph $T^\star$ has a node for each such subregion and two nodes are connected if their subregions share an edge of $T$ (see~\cref{fig:zigzag}).

First we prove that the longest plane tree of any convex set is a zigzagging caterpillar.
\thmconvex*
\begin{proof}
Let $\Topt$ be a longest plane tree. We prove that $\Toptd$ 
is a path. Suppose not, and consider a node in $\Toptd$ of degree at least 3. Let $ab$, $bc$, $cd$ 
be three corresponding edges of $\Topt$ (see~\cref{fig:convex}).

\begin{figure}[h]
\center
\includegraphics{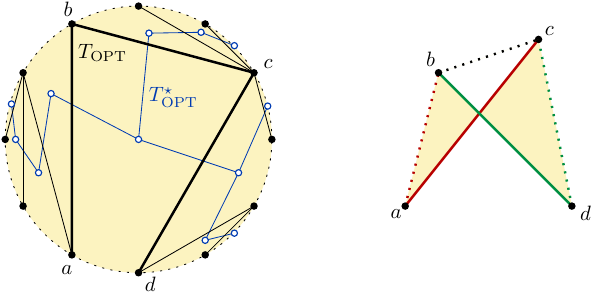}
\caption{Left: A convex point set with a (fake) longest plane tree $\Topt$ (black) and the dual graph $\Toptd$ (blue). Right: If $\Toptd$ has a node of degree 3 or more, a longer tree can be constructed by replacing one dotted edge of $\Topt$ by a longer edge (of the same color).
}
\label{fig:convex}
\end{figure}

As  $abcd$ is a convex quadrilateral, the triangle inequality gives 
$\Vert ab\Vert +\Vert cd\Vert <\Vert ac\Vert +\Vert bd\Vert $, so
$\Vert ab\Vert <\Vert ac\Vert $ or $\Vert cd\Vert <\Vert bd\Vert $ (or both). 
Now, $T_1=\Topt\cup ac\setminus ab$ and 
$T_2=\Topt\cup bd\setminus cd$ are plane trees, and at least one of them 
is longer than $\Topt$, a contradiction. \qedhere
\end{proof}

\def\Cover{\operatorname{Cov}}
Conversely, for each caterpillar $C$ we will construct a convex set $\pe_C$ whose longest plane tree is isomorphic to $C$.
In fact, $\pe_C$ will be not only convex but also a \emph{flat arc}.
Formally, we say that a set $\pe$ of $n=m+1$ points $a_i=(x_i,y_i)$ satisfying $x_1<\dots<x_{m+1}$ is a \textit{flat arc} if it is flat (that is, the absolute values of all $y$-coordinates are negligible) and the points $a_1,\dots,a_{m+1}$ all lie on the convex hull of $\pe$ in this order.

We call the sequence $G(\pe_C)=\{g_i\}_{i=1}^m=\{| x_{i+1}-x_i|  \}_{i=1}^m$ the (horizontal) \textit{gap sequence} of \(\pe_C\).
Lastly, given a tree $T$ spanning a flat arc $\pe$, we define its \textit{cover sequence} $\Cover(T)=\{c_i\}_{i=1}^m$ to be a list storing the number of times each gap is ``covered''. Formally, $c_i$ is the number of edges of $T$ that contain a point whose $x$-coordinate lies within the open interval $(x_i,x_{i+1})$, see~\cref{fig:flat-convex-2}.
Note that the gap sequence and the cover sequence determine the length of the tree because $|T|= \sum_{i=1}^m c_i\cdot g_i$.
Recall here, that in a flat point set, the \(y\)-coordinates can be chosen arbitrarily small and thus we ignore them in our considerations.
\begin{figure}[h]
\center
\includegraphics[scale=0.8]{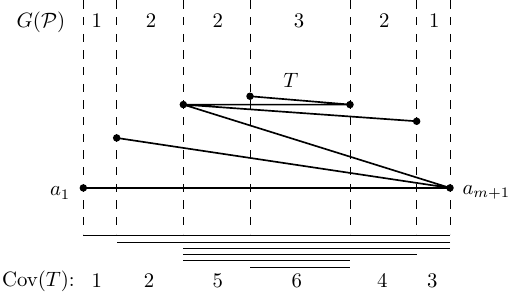}
\caption{Given a flat arc $\pe$ of $n=m+1$ points, we denote by $G(\pe)=\{g_i\}_{i=1}^m$ its gap sequence. Given a tree $T$ spanning $\pe$, we denote by $\Cover(T)=\{c_i\}_{i=1}^m$ its cover sequence. Later we show that the longest plane tree $\Topt$ of a flat arc $\pe$ is a zigzagging caterpillar containing the edge $a_1a_{m+1}$ and that its cover sequence $\Cover(\Topt)$ is a unimodal (single-peaked) permutation of $\{1,2,\dots,m\}$.}
\label{fig:flat-convex-2}
\end{figure}

Before we prove that any caterpillar can be the longest tree of a flat arc, we first show that the cover sequences of zigzagging caterpillars are unimodal (single-peaked) permutations.
\begin{lemma}\label{lem:unimodal}
Consider a flat arc $\pe_C=\{a_1,\dots,a_{m+1}\}$ and a zigzagging caterpillar $T$ containing the edge $a_1a_{m+1}$.
Then the cover sequence $\Cover(T)$ of $T$ is a unimodal permutation of $\{1,2,\dots,m\}$.
\end{lemma}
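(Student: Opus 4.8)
The plan is to analyze how the cover sequence changes as we walk along the dual path $T^\star$, which is a path precisely because $T$ is a zigzagging caterpillar containing the long edge $a_1a_{m+1}$. First I would set up the correspondence between the $m$ faces (subregions) determined by $T$ inside the convex hull and the $m$ gaps $g_1,\dots,g_m$: each gap $(x_i,x_{i+1})$ has a unique face lying immediately ``below'' it (i.e. the face incident to the boundary arc of $\pe$ between $a_i$ and $a_{i+1}$). This gives a bijection between gaps and nodes of $T^\star$, hence an ordering of the nodes of the dual path $T^\star$ reading left to right along the point set; call them $F_1,\dots,F_m$ in the left-to-right gap order, and note $T^\star$ is a path but \emph{not} necessarily in this order.

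Next I would compute the cover value $c_i$ directly from the geometry of $T$. The key observation is that an edge of $T$ covers gap $g_i$ if and only if it ``spans over'' the arc between $a_i$ and $a_{i+1}$, i.e. one endpoint lies in $\{a_1,\dots,a_i\}$ and the other in $\{a_{i+1},\dots,a_{m+1}\}$. So $c_i$ equals the number of edges of $T$ crossing the vertical line just after $a_i$. Since $T$ is a plane tree and $a_1a_{m+1}$ is an edge, I would argue that these ``crossing edges'' form a nested family, and in fact that $c_i$ equals (roughly) the length, in the dual path $T^\star$, of the sub-path from the dual endpoint corresponding to the outer face down to $F_i$ — more precisely, walking along the dual path from one end corresponds to repeatedly ``peeling off'' a triangular face bounded by one boundary edge $a_ja_{j+1}$ and two tree edges, which changes exactly one gap's cover count. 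I would make this precise by induction on $m$: removing an ear (a degree-1 node of $T^\star$, which is a triangle $a_{j-1}a_ja_{j+1}$ with $a_{j-1}a_{j+1}\in T$) decreases $m$ by one, decreases exactly the cover count of the gap $g_{j}$ or $g_{j-1}$ under the removed ear down to $1$ (since that gap is covered only by the two ``short'' edges... wait, better: the removed triangle's outer-arc gap is covered by exactly the edges of $T$ that cross it, and after removal that gap disappears while every other gap's cover count drops by exactly $0$ or $1$ depending on whether the removed short edge crossed it).

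Then the structural claim is that in $\Cover(T)=\{c_i\}$ the value $m$ (the maximum) is attained at the gap directly below the edge $a_1a_{m+1}$ — since \emph{every} edge of $T$ lies on one side, and $a_1a_{m+1}$ together with all $m-1$ other edges... hmm, more carefully: the gap covered most is the one crossed by the most edges; the edge $a_1a_{m+1}$ crosses every gap, so every $c_i\ge 1$, and the maximum is where the nesting is deepest. From the ear-removal induction, each removal step strips the current maximum off one end, so the sequence of maxima is $m, m-1,\dots,1$ each appearing once — giving that $\Cover(T)$ is a permutation of $\{1,\dots,m\}$ — and unimodality follows because removing an ear from either end of the dual path removes a \emph{boundary} gap of the current interval and decrements a contiguous block, so by induction the cover sequence is increasing then decreasing. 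Formally: I would phrase the induction hypothesis as ``for a zigzagging caterpillar on a flat arc $a_j,\dots,a_k$ with edge $a_ja_k$, the cover sequence restricted to gaps $g_j,\dots,g_{k-1}$ is a unimodal permutation of $\{1,\dots,k-j\}$,'' peel an ear, and check the two cases (ear at the left end vs.\ right end of the arc) — in each case the removed gap had cover value $1$ and sat at the appropriate end, and all other cover values were exactly one more than in the smaller instance along a prefix/suffix and unchanged on the complementary part, which preserves unimodality and shifts the permutation by one on the affected block.

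The main obstacle I expect is pinning down exactly which gap loses which edge when an ear is removed, and verifying that the decrement pattern is ``$-1$ on a contiguous initial (or final) block of gaps, $0$ elsewhere'' rather than something more complicated — this is where the planarity of $T$ and the presence of the edge $a_1a_{m+1}$ (forcing the dual to be a path and the crossing-edge families to be nested) must be used carefully. Once that decrement structure is established, both the ``permutation of $\{1,\dots,m\}$'' and the ``unimodal'' conclusions fall out of the induction simultaneously.
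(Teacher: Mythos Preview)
Your overall approach---induction on $m$ by peeling the outermost edge $a_1a_{m+1}$ together with the resulting isolated endpoint---is exactly what the paper does, though the paper's execution is far cleaner than your proposal suggests. The paper simply observes that since $T$ is zigzagging and contains $a_1a_{m+1}$, the next edge inward is either $a_1a_m$ or $a_2a_{m+1}$; in the former case $a_{m+1}$ is a leaf, and $T\setminus\{a_1a_{m+1}\}$ is a zigzagging caterpillar on $a_1,\dots,a_m$ containing $a_1a_m$. By induction its cover sequence is a unimodal permutation of $\{1,\dots,m-1\}$, and restoring $a_1a_{m+1}$ adds $1$ to every entry and appends a $1$ at the right end, producing a unimodal permutation of $\{1,\dots,m\}$. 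That is the entire argument.

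Your proposal wanders through several candidate mechanisms (inner-ear removal, nesting depth in the dual path) before converging on essentially this one, and the detour creates a concrete error in your final description of the transformation. You claim the surviving cover values are ``exactly one more than in the smaller instance along a prefix/suffix and unchanged on the complementary part.'' But the edge $a_1a_{m+1}$ covers \emph{every} gap $g_1,\dots,g_m$, so removing it decreases \emph{all} $m-1$ surviving cover values by $1$, not just a contiguous block; the ``unchanged on the complementary part'' clause is false. Once this is corrected, the argument becomes the paper's two-line induction, and the dual-path and ear-decomposition machinery you set up is not needed.
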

\begin{proof}
We show this lemma by induction on $m$. The case $m=1$ is clear.
Fix $m\ge 2$. By the definition of a zigzagging caterpillar, the dual graph $T^\star$ of $T$ is a path. Since, by the assumption of the lemma, $a_1a_{m+1}$ is an edge of $T$, either $a_1a_m$ or $a_2a_{m+1}$ is an edge of $T$ too. Without loss of generality assume $a_1a_m$ is an edge of $T$. Then $T\setminus\{a_1a_{m+1}\}$ is a zigzagging caterpillar on $m$ points $a_1,\dots,a_m$ containing the edge $a_1a_m$, hence by induction its cover sequence is a unimodal permutation of $\{1,2,\dots,m-1\}$. Adding the omitted edge $a_1a_{m+1}$ adds 1 to each of the $m-1$ elements and appends a 1 to the list, giving rise to a unimodal permutation of $\{1,2,\dots,m\}$. This completes the proof.
\end{proof}

Now we are ready to prove that any caterpillar, including a path, can be the longest plane tree of some flat arc.
\thmcaterpillars*
\begin{proof}
  We will define a flat arc $\pe_C$ such that the unique longest tree $\Topt$ of $\pe_C$ is isomorphic to $C$.
  Consider a flat arc $\pe=\{a_1,\dots,a_{m+1}\}$, with a yet unspecified gap sequence $\{g_i\}_{i=1}^m$.
Take the caterpillar $C$ and let $T$ be its drawing onto $\pe$ that contains the edge $a_1a_{m+1}$ and is zigzagging (it is easy to see that such a drawing always exists).
By~\cref{lem:unimodal} its cover sequence $\Cover(T)=\{c_i\}_{i=1}^m$ is a unimodal permutation of $\{1,2,\dots,m\}$.
The total length $|T|$ of $T$ can be expressed as $|T|=\sum_{i=1}^m c_i\cdot g_i$.

Now we specify the gap sequence: for $i=1,\dots,m$ set $g_i=c_i$. 
It is easy to see that this sequence defines a plane tree $T$; see \cref{fig:flat-convex-2}. 
It suffices to show that $T$ constitutes the longest plane tree $\Topt$ of $\pe$.

By~\cref{thm:convex}, $\Topt$ is a zigzagging caterpillar.
Also, $a_1a_{m+1}$ is an edge of $\Topt$: suppose not. Since $a_1a_{m+1}$ 
does not cross any other edge, adding it to $\Topt$ produces a plane graph 
with a single cycle $C$. All edges of $\Topt$ are shorter than $a_1a_{m+1}$, 
so omitting any other edge from $C$ yields a longer plane tree, 
a contradiction.

We can thus apply~\cref{lem:unimodal} to see that 
$\Cover(\Topt)$ is a unimodal permutation $\pi$ of $\{1,2,\dots,m\}$ 
and that 
$|\Topt|=\sum_{i=1}^m \pi_i\cdot g_i$. As $c_i$ and $g_i$ match and
as $c$, $g$, and $\pi$ are permuations, 
the Cauchy-Schwarz inequality gives
\begin{equation}
 |\Topt|=\sum_{i=1}^m \pi_i\cdot g_i 
 \le \sqrt{\sum_{i=1}^m \pi_i^2\cdot \sum_{i=1}^m g_i^2} 
 = \sum_{i=1}^m c_i^2 = 
 |T|,
\text{with equality iff $\pi_i=c_i$, for all $i$.}\label{eq:rearrangement}
\end{equation}
Therefore, $\Topt$ is unique and $\Topt=T$ as desired.
\end{proof}

\subsection{Upper bounds on \texorpdfstring{$\bd(d)$}{bd(d)}}

The algorithms for approximating $|\Topt|$ often produce trees with small diameter.
Given an integer $d\ge 2$ and a point set $\pe$, let \( \Td{d}(\pe)\)
be a longest plane tree spanning $\pe$ among those whose diameter is at most $d$.
One can then ask what is the approximation ratio
\[\bd(d) = \inf_{\pe}  \frac{ |\Td{d}(\pe)| }{ |\Topt (\pe)|}
\]
achieved by such a tree. As before, we drop the dependency on $\pe$ in the notation and just use $\Td{d}$ and $\Topt$.

When $d=2$, this reduces to asking about the performance of stars. A result due to Alon, Rajagopalan and Suri~\cite[Theorem 4.1]{AlonRS95} can be restated as $\bd(2)=1/2$.
Below we show a crude upper bound on $\bd(d)$ for general $d$ and then a specific upper bound tailored to the case $d=3$. Note that \cref{thm:dp} shows that $|\Td{3}|$ can be computed in polynomial time.
Our proofs in this section use the notions of flat arc, gap sequence and cover sequence defined in~\cref{sec:convex}.

\thmdiameterbound*
\begin{proof} 
Consider a flat arc on $d+2$ points with gap sequence $G=(1,3,5,\dots,d+1,\dots,6,4,2)$.
Since $G$ is unimodal, we can argue as in the proof of \cref{thm:caterpillars} 
to see that $\Topt$ is the zigzagging caterpillar whose 
cover sequence is $G$, i.e., a path with 
$d+1$ edges (and diameter $d+1$). 
Moreover, this path is the only optimal plane tree spanning the flat arc
because of \cref{thm:convex} and the Cauchy-Schwarz inequality; see the argument leading to (\ref{eq:rearrangement}) 
in the proof of \cref{thm:caterpillars}. 
Therefore, any other plane spanning tree $T\neq\Topt$, be it a zigzagging caterpillar or not, 
must have a length that is an integer less than $|\Topt|$. 
Using that $|\Topt|=\sum_{i=1}^{d+1} i^2= \frac16(d+1)(d+2)(2d+3)= \frac13d^3+o(d^3)$
we obtain
\[ \bd(d)\le  \frac{|\Topt| - 1}{|\Topt|} = 1- \frac{6}{(d+1)(d+2)(2d+3)}. \qedhere
\]
\end{proof}

For $d=3$,~\cref{thm:diameter-bound} gives $\bd(3)\le 29/30$. By tailoring the point set size, the gap sequence $\{g_i\}_{i=1}^m$, and by considering flat convex sets that are not arcs we improve the bound to $\bd(3)\le 5/6$.

\thmdiameterthreebound*
\begin{proof} Consider a flat point set $\pe_{4k+2}$ consisting of two flat arcs that are symmetric with respect to a horizontal line \(\ell'\), each with a gap sequence
\[(\underbrace{1,\dots,1}_{k\times}, \ 2k+1,\ \underbrace{1,\dots,1}_{k\times}).
\]
In other words, $\pe_{4k+2}$ consists of two diametrically opposite points, four unit-spaced arcs of $k$ points each, and a large gap of length $2k+1$ in the middle (see~\cref{fig:diameter-3}).

\begin{figure}[h]
\center
\includegraphics{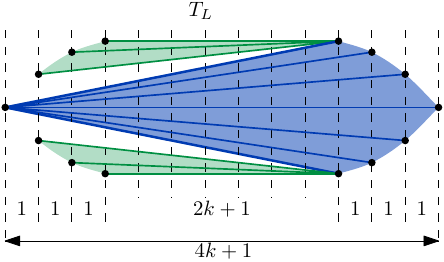}
\caption{An illustration of the point set $\pe_{4k+2}$ when $k=3$ with a tree $T_L$ (blue and green).}
\label{fig:diameter-3}
\end{figure}

On one hand, straightforward counting gives $|\Topt|\ge |T_L|=12k^2+6k+1$, where $T_L$ is the tree depicted in~\cref{fig:diameter-3}.
On the other hand, we claim that any tree $T$ with diameter at most $3$ has length at most $10k^2+6k+1$. Thus
\[ \bd(3)\le \frac{10k^2+6k+1}{12k^2+6k+1},
\]
which tends to $5/6$ as $k\to\infty$.

In the rest of this proof we will show that the longest tree \(T\) among those with diameter at most \(3\) on $\pe_{4k+2}$ has length at most \(10k^2+6k+1\).
First, note that as \(T\) has diameter at most \(3\) it is either a star or it has a cut edge \(ab\) whose removal decomposes \(T\) into a star rooted at \(a\) and a star rooted at \(b\). 

To add the lengths of the edges of the tree, we will often split the edges across the large gap into two parts: one part contained within the left or the right part and with maximum length $k$, and one part going across the large gap, with a length between $2k+1$ and $2k+1+k=3k+1$.

If $T$ is a star, then without loss of generality we can assume that its root $a$ is to the left of the large gap. Denote the distance from the large gap to $a$ by $i$ (we have $0\le i\le k$). Straightforward algebra (see~\cref{fig:flatstar}) then gives

\begin{align*}
|S_a| &= (k-i)^2+i(i+1) \ +\   (2k+1)\cdot (i+2k+1) \ +\ k^2 \\
  &\le k(k+1) \ +\  (2k+1)(3k+1)\  +\  k^2 \\
  &= 8k^2 + 6k +1.
  \end{align*}

\begin{figure}[h]
\center
\includegraphics[page=2]{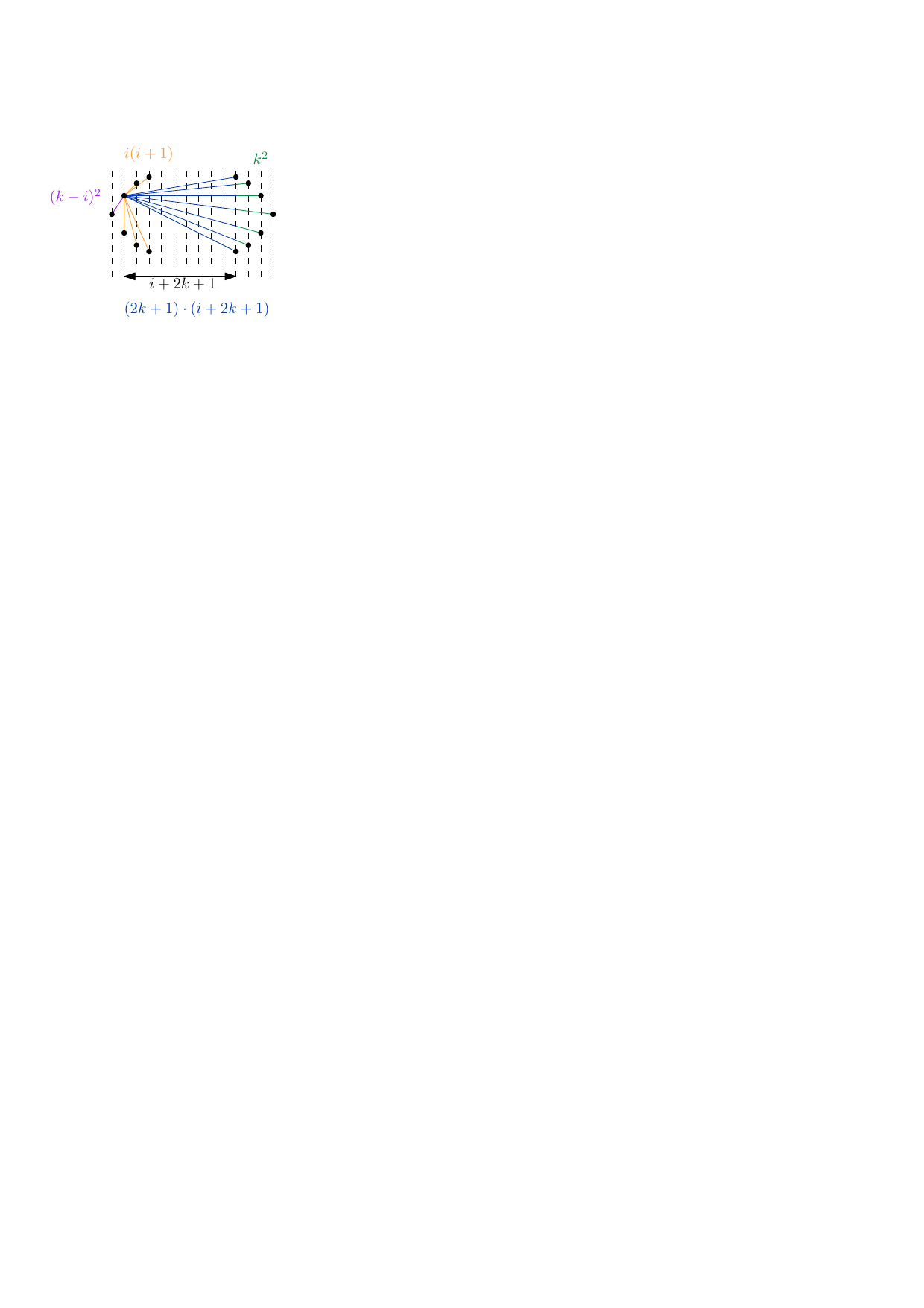}
\caption{The longest star for \(\pe_{4k+2}\) is short.
}
\label{fig:flatstar}
\end{figure}

Now suppose $T$ has diameter $3$ and denote its cut edge by $ab$. If $ab$ is vertical then $|T|=|S_a|$ and the above bound applies.
In the remaining cases, without loss of generality we can assume that $a$ is to the left of $b$. The line $ab$ then splits the remaining points of $\pe_{4k+2}$ into two arcs -- one above $ab$ and the other one below it.

We claim that in the longest tree with diameter $3$, the points of one arc are either all connected to $a$ or all to $b$:
for the sake of contradiction, fix an arc and suppose \(c\) is the rightmost point connected to \(a\). See~\cref{fig:flatunif2}. Then by the choice of \(c\) everything right of it is connected to \(b\) and by convexity everything to the left of \(c\) is connected to \(a\). Let \(\ell\) be the vertical line through the midpoint of \(ab\). If \(c\) lies to the left of \(\ell\), replacing the edge \(ac\) by \(bc\) increases the length of the tree, while keeping the tree plane. Otherwise, connecting the clockwise neighbor of $c$ to $a$ and removing its connection to $b$, increases the length of the tree and keeps it plane. Thus we either have \(c=a\) or \(c=b\) as claimed.

\begin{figure}
\center
\includegraphics{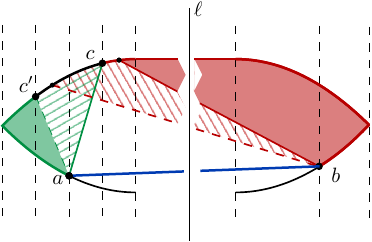}
\caption{Points on one side of $ab$ are either all connected to $a$ or all to $b$.
}
\label{fig:flatunif2}
\end{figure}

Since $\pe_{4k+2}$ is centrally symmetric and we have already dealt with the case when $T$ is a star, we can without loss of generality assume that all points above $ab$ are connected to $a$ and all points below $ab$ are connected to $b$.
Now suppose that $a$ is below the diameter line $\ell'$ of $\pe_{4k+2}$ and consider the reflection $a'\in\pe_{4k+2}$ of $a$ about $\ell'$.
Then the tree $T'$ with cut edge $a'b$ is longer than $T$, since in $T'$ the points to the left of $aa'$ are connected to $b$ rather than to $a$ and all the other edges have the same length (see~\cref{fig:diam3bound:first}). Hence we can assume that $a$ lies above $\ell'$ (or on it). Similarly, $b$ lies below $\ell'$ or on it.

\begin{figure}
\centering
\includegraphics[page=1]{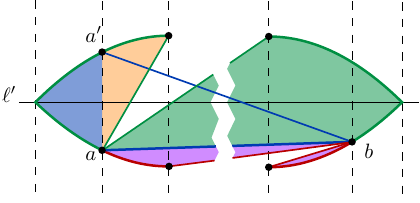}%
\includegraphics[page=2]{figures/flat-abelow.pdf}
\caption{Flipping \(a\) to \(a'\) increases the total length.}
\label{fig:diam3bound:first}
\end{figure}

It remains to distinguish two cases based on whether $a$ and $b$ lie on different sides of the large gap or not.
Either way, denote the distance from the gap to $a$ and $b$ by $i$ and $j$, respectively.
In the first case, considering the subdivision of the edges as sketched in \cref{fig:flatbound}, straightforward algebra gives:

\begin{figure}
\centering
\includegraphics{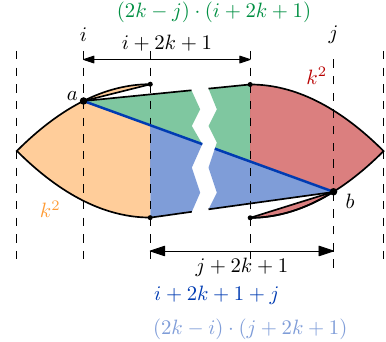}\hfill
\includegraphics[page=2]{figures/flat-situations.pdf}
\caption{Decomposition of the edges for the upper bounds, case 1 is seen left, case 2 is to the right}
\label{fig:flatbound}
\end{figure}

\begin{align*}
|T| &= k^2 \ +\ (2k-j)(i+2k+1) \ +\ (i+2k+1+j) \ +\ (2k-i)(2k+1+j) \ +\ k^2  \\
 &= 10k^2+6k+1 -2ij \le 10k^2+6k+1
\end{align*}
with equality if and only if either $a$ or $b$ (or both) lie on the boundary of the large gap.
In the second case, since $a$ is to the left of $b$ we have $i>j$ and similar algebra, see \cref{fig:flatbound}, gives
\[|T| =  (k^2-2kj+i+2ij) \ +\ (2k+1)(2k+1+i) \ +\  k^2.
\]
Since the right-hand side is increasing in $i$ and $i\le k$, we get
\[|T|\le (k^2+k) + (2k+1)(3k+1) + k^2 = 8k^2 + 6k+1
\]
which is less than the claimed upper bound $10k^2+6k+1$ by a margin.
\end{proof}

\section{Polynomial time algorithms for small diameters}\label{sec:dp}

In \cref{subsec:dpbistar} we show that the longest plane tree of diameter at most $3$ can be computed in
polynomial time using dynamic programming. Our approach bears some resemblance to the polynomial time plane matching algorithm of Aloupis et al.~\cite{aloupis_matching_2010}. The main challenge in our case is the efficient implementation of the dynamic program.
Such a tree may be relevant in providing an approximation algorithm with a better approximation factor.

In \cref{sec:diam4} we show how to compute in polynomial time a longest plane tree among those of the following form: all the points are connected to three distinguished points on the boundary of the convex hull. 
Again, such a tree may play an important role in designing future approximation algorithms, as intuitively it seems to be better than the three stars considered in our previous approximation algorithm, when there is a point in the far region.

\subsection{Finding the longest tree of diameter 3}\label{subsec:dpbistar}
For any two points \(a,b\) of $\pe$, a \emph{bistar rooted} at \(a\) and \(b\) is a tree that contains 
the edge $ab$ and where each point in \(\pe \setminus \{ a,b\} \) is connected to either $a$ or $b$. 
Note that stars are also bistars.
In this section we prove the following theorem.

\thmdp*
\begin{proof}
Note that each bistar has diameter at most 3.
Conversely, each tree with diameter three has one edge \(uv\) where each point \(p\in \pe\) has distance at most \(1\) to either \(u\) or \(v\). It follows that all trees with diameter at most three are bistars.
In \cref{lem:bestbistar} we show that for any two fixed points \(a\), \(b\), the longest plane bistar rooted at  \(a\) and \(b\) can be computed in time \(\bigO(n^2)\).
By iterating over all possible pairs of roots and taking the longest such plane bistar, we find the longest plane spanning tree with diameter at most \(3\) in time \(\bigO(n^4)\).
\end{proof}

We now describe the algorithm used to find the longest plane bistar rooted at two fixed points \(a\) and \(b\).
Without loss of generality, we can assume that the points \(a\) and \(b\) lie on a horizontal line with $a$ to the left of $b$. Furthermore, as we can compute the best plane bistar above and below the line through \(a\) and \(b\) independently, we can assume that all other points lie above this horizontal line. 
In order to solve this problem by dynamic programming, we consider a suitable subproblem.

\begin{figure}
	\centering
	\includegraphics[page=1]{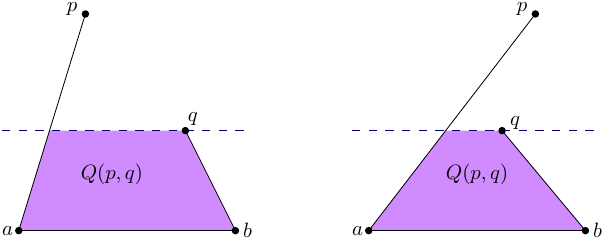}
	\caption{Two examples of valid pairs $p,q$ with their quadrilaterals $Q(p,q)$ shaded.}
	\label{fig:dp_bistar_subproblem}
\end{figure}

The subproblems considered in the dynamic program are indexed by an ordered pair $p,q$ of different points of $\pe$
such that the edges $ap$ and $bq$ do not cross. 
A pair $p,q$ satisfying these condition is a \emph{valid pair}. 
For each valid pair $p,q$, note that the segments $ap$, $pq$, $qb$ and $ba$ form a simple (possibly non-convex) quadrilateral.
Let $Q(p,q)$ be the (convex) portion of this quadrilateral below the horizontal line $y=\min \{y(p),y(q)\}$, as shown in
\cref{fig:dp_bistar_subproblem}.
We define the value \(Z(p,q)\) to be the length of the longest plane bistar rooted at \(a\) and \(b\) on the points in the interior of $Q(p,q)$, without counting $\Vert ab\Vert$.
If there are no points of $\pe$ within the quadrilateral $Q(p,q)$, we set \(Z(p,q) = 0\).

If the quadrilateral $Q(p,q)$ contains some points from $\pe$, we let \(k_{p,q}\) be the highest point of $\pe$ inside of $Q(p,q)$. Then we might connect \(k_{p,q}\) either to \(a\) or to \(b\).
By connecting it to \(a\), we force all points in the triangle \(L_{p,q}\) defined by the edges \(ap\) and \(ak_{p,q}\) 
and the line $y=y(k_{p,q})$, to be connected to \(a\). 
Similarly, when connecting \(k_{p,q}\) to \(b\), we enforce the triangle \(R_{p,q}\) defined by \(bq\), \(bk_{p,q}\) and the line $y=y(k_{p,q})$. See \cref{fig:dp_bistar_recurrence} for an illustration.
In the former case we are left with the subproblem defined by the valid pair $k_{p,q},q$, while in the latter case
we are left with the subproblem defined by the valid pair $p,k_{p,q}$.
Formally, for each valid pair $p,q$ we have the following recurrence:
\begin{align*}
Z(p,q)=\begin{cases} 
0 & \text{if no point of $\pe$ is in $Q(p,q)$,}\\
\max \begin{cases}
Z(k_{p,q},q)+ \Vert ak_{p,q}\Vert + \sum_{l\in L_{p,q}} \Vert al \Vert\\
Z(p,k_{p,q})+ \Vert bk_{p,q}\Vert + \sum_{r\in R_{p,q}} \Vert br \Vert
\end{cases} & \text{otherwise.}
\end{cases}
\end{align*}

\begin{figure}
	\centering 
	\includegraphics[page=2]{figures/dp_bistar.pdf}
	\caption{The highest point \(k_{p,k}\) defines two triangular regions and in one of them the edges are forced.}
	\label{fig:dp_bistar_recurrence}
\end{figure}

\begin{lemma}\label{lem:bistarcorrect}
Using \(Z(p,q)\) for all valid $p,q$ we can find a best plane bistar rooted at \(a\) and \(b\).
\end{lemma}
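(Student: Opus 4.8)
The plan is to prove two things: (a) the displayed recurrence correctly computes $Z(p,q)$, i.e.\ $Z(p,q)$ equals the length of a longest plane bistar rooted at $a,b$ on the points of $\pe$ interior to $Q(p,q)$; and (b) from the table $\{Z(p,q)\}$ one can extract, and by tracing back the maximising choices reconstruct, a longest plane bistar rooted at $a$ and $b$ on all of $\pe$. For (b) it is convenient to augment the point set by two far-away auxiliary points $a^{\uparrow}$ and $b^{\uparrow}$, placed high above and to the left of $a$ and high above and to the right of $b$, so that $(a^{\uparrow},b^{\uparrow})$ is a valid pair whose quadrilateral $Q(a^{\uparrow},b^{\uparrow})$ contains exactly the points of $\pe$ lying above the line $ab$; then $Z(a^{\uparrow},b^{\uparrow})$ is the length of a longest plane bistar on those points. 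Performing the symmetric construction below $ab$ and adding $\|ab\|$ gives the length of a longest plane bistar rooted at $a,b$, and the enclosing \cref{thm:dp} then iterates over all pairs $a,b$. (This enlarges the index set only by a constant, so it does not affect the running time.)

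The heart of the matter is (a), which I would prove by induction on the number of points of $\pe$ interior to $Q(p,q)$. The base case is the definition $Z(p,q)=0$. For the step, fix a valid pair $p,q$ with $k:=k_{p,q}$ interior to $Q(p,q)$. In any plane bistar rooted at $a,b$ on the interior points of $Q(p,q)$ the point $k$ is joined to $a$ or to $b$; by symmetry consider the case ``$k$ joined to $a$''. The three claims to verify are: \emph{(i) necessity} --- every point of $\pe$ in the triangle $L_{p,q}$ is then forced to be joined to $a$, since a spoke from such a point to $b$ would have to cross the segment $ak$ (one uses that $L_{p,q}$ lies on the side of the line through $a$ and $k$ opposite to $b$, and a height comparison places the crossing on the segment itself); \emph{(ii) exhaustive decomposition} --- since $k$ is the highest point of $\pe$ interior to $Q(p,q)$, every other such point lies below $y(k)$ and hence falls either in $L_{p,q}$ or in the interior of $Q(k,q)$, so the points still to be assigned are exactly those interior to $Q(k,q)$; and \emph{(iii) independence} --- the edge $ak$, the forced spokes from $L_{p,q}$ to $a$, and \emph{any} plane bistar on the interior of $Q(k,q)$ together form a plane graph, because all of the former edges are incident to $a$ while the line through $a$ and $k$ separates $L_{p,q}$ from $Q(k,q)$, with $b$ on the $Q(k,q)$ side. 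Since $Q(k,q)$ has strictly fewer interior points, the induction hypothesis applies, and a longest such bistar has length $\|ak\|+\sum_{l\in L_{p,q}}\|al\|+Z(k,q)$; the symmetric case ``$k$ joined to $b$'' gives $\|bk\|+\sum_{r\in R_{p,q}}\|br\|+Z(p,k)$, and the maximum of the two is precisely the stated recurrence. Evaluating it in order of increasing number of interior points (equivalently, by memoised recursion) then fills the table correctly.

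I expect the main obstacle to be the planarity bookkeeping in (i)--(iii): one must check carefully that the triangle $L_{p,q}$ is genuinely forced, that the split of the interior of $Q(p,q)$ into $L_{p,q}$ and $Q(k,q)$ is both exhaustive and ``independent'' --- so that gluing optimal sub-solutions is simultaneously feasible and optimal, and no plane bistar is overlooked --- and that the auxiliary corners $a^{\uparrow},b^{\uparrow}$ really make $Q(a^{\uparrow},b^{\uparrow})$ capture precisely the upper point set. Each of these reduces to a ``slide a segment until it hits something'' argument that follows from $a$ and $b$ lying on a common horizontal line with $a$ to the left of $b$, together with the fact that after the horizontal cut $Q(p,q)$ is a simple quadrilateral with vertices $a$, $b$ and two points on the rays $\overrightarrow{ap}$, $\overrightarrow{bq}$; everything else --- in particular bounding the number of valid pairs and the per-cell work --- is routine.
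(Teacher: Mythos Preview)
Your proposal is correct but takes a genuinely different route from the paper in part (b). The paper does not add auxiliary points; instead it analyses an optimal plane bistar directly: letting $q^*$ be the highest point connected to $b$ and $p^*$ the point above $q^*$ that is angularly closest to $ab$ around $a$, it observes that $(p^*,q^*)$ is a valid pair and that the optimal length equals
\[
\Bigl(\sum_{l\in L_{p^*}}\Vert al\Vert\Bigr)+\Bigl(\sum_{r\in R_{q^*}}\Vert br\Vert\Bigr)+\Vert ap^*\Vert+\Vert bq^*\Vert+\Vert ab\Vert+Z(p^*,q^*),
\]
where $L_p$, $R_q$ are the wedges left of $ap$ and below--right of $bq$. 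It then maximises this expression over all valid pairs $(p,q)$ with $y(p)>y(q)$, takes the symmetric maximum for $y(q)>y(p)$, and also compares with $|S_a|$ and $|S_b|$ to cover the degenerate star case. Your dummy-point trick is cleaner and keeps everything inside the DP---indeed the paper itself uses the same device later for tristars---while the paper's approach avoids perturbing the point set at the cost of an extra outer maximisation and a separate handling of the star case. You also supply part (a), the inductive correctness proof of the recurrence for $Z(p,q)$, which the paper states but does not argue; your three claims (necessity, exhaustive decomposition, independence) are exactly what is needed there.
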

\begin{proof}
Consider a fixed best plane bistar and assume, without loss of generality, 
that the highest point is connected to $a$; the other case is symmetric.
Let $q^*$ be the highest point that is connected to $b$; if $q^*$ does not exist
then the bistar degenerates to a star rooted at $a$. 
This means that all the points above $q^*$ are attached to $a$.
Let $p^*$ be the point above $q^*$ that, circularly around $a$, is closest to $ab$.
See \cref{fig:dp_bistar_p*:a}.
Note that $p^*,q^*$ is a valid pair and all the points above $q^*$ are to the left of $ap^*$.
For $p\in \pe$, denote by $L_p$ the set of points in $\pe$ that, circularly around $a$, are to the left of $ap$.
Similarly, denote by $R_q$ the set of points in $\pe$ below $q$ and to the right of $bq$, when sorted circularly around $b$, see \cref{fig:dp_bistar_p*:b}.
The length of this optimal plane bistar rooted at \(a\) and \(b\) is then
\[
\left(\sum_{l\in L_{p^*}}\Vert al\Vert \right) + \left(\sum_{r\in R_{q^*}}\Vert br\Vert\right) + \Vert ap^*\Vert + \Vert bq^*\Vert + \Vert ab\Vert + Z(p^*,q^*).
\]
On the other hand, each of the values of the form 
\begin{equation}
\left(\sum_{l\in L_p}\Vert al\Vert\right) + \left(\sum_{r\in R_q}\Vert br\Vert\right) + \Vert ap\Vert + \Vert bq\Vert + \Vert ab\Vert + Z(p,q), \label{eq:dp1}
\end{equation}
where $p$, $q$ is a valid pair of points such that $y(p)>y(q)$, is the length of a plane, not necessarily spanning, bistar rooted at \(a\) and \(b\). (It is not spanning, if there is some point above \(q\) and right of  \(ap\).)

Taking the maximum of \(|S_a|\) and equation (\ref{eq:dp1}) over the valid pairs $p,q$ such that 
$y(p)>y(q)$ gives the longest plane bistar for which the highest point is connected to \(a\). A symmetric formula gives the best plane bistar if the highest point is connected to \(b\). Taking the maximum of both cases yields the optimal value.
\begin{figure}
\centering
	\includegraphics[page=3]{figures/dp_bistar.pdf}
	\caption{The point $p^*$ is the point of $\pe$ in the shaded region that is angularly around $a$ closest to $q^*$.}
	\label{fig:dp_bistar_p*:a}
	\end{figure}
	\begin{figure}
	\centering
	\includegraphics[page=4]{figures/dp_bistar.pdf}
	\caption{The regions $L_{p^*}$ and $R_{q^*}$ are shaded. The region with bars is empty.}
	\label{fig:dp_bistar_p*:b}
	\end{figure}

The actual edges of the solution can be backtracked by standard methods.
\end{proof}

\begin{lemma}\label{lem:bestbistar}
The algorithm described in the proof of \cref{lem:bistarcorrect} can be implemented in time \(\bigO(n^2)\).
\end{lemma}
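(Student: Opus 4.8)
The plan is to split the running-time accounting into two parts. The dynamic-programming table has $\bigO(n^2)$ entries $Z(p,q)$, one per valid pair, and the final optimization in \cref{lem:bistarcorrect} ranges over the $\bigO(n^2)$ valid pairs with $y(p)>y(q)$; I would show that after $\bigO(n^2)$ preprocessing each entry and each candidate of the form (\ref{eq:dp1}) can be produced in amortized $\bigO(1)$ time. The preprocessing sorts $\pe\setminus\{a\}$ by angle around $a$ into an order $\sigma_a$ and stores the prefix sums of the lengths $\Vert al\Vert$ along $\sigma_a$, does the symmetric thing around $b$ for an order $\sigma_b$, and records the $y$-rank of every point. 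The geometric fact driving everything is that, up to the horizontal clip, $Q(p,q)$ is the intersection of the wedge at $a$ bounded by the ray $ap$ with the wedge at $b$ bounded by the ray $bq$; consequently each of the sets $L_p$, $R_q$ and the triangular regions $L_{p,q}$, $R_{p,q}$ that appear in the recurrence and in (\ref{eq:dp1}) is a contiguous block of $\sigma_a$ (respectively $\sigma_b$), so every sum $\sum \Vert al\Vert$ or $\sum \Vert br\Vert$ is the difference of two precomputed prefix sums and costs $\bigO(1)$.

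For the table itself I would evaluate the recurrence by memoization. There are $\bigO(n^2)$ distinct subproblems, and each invocation only needs the point $k_{p,q}$ (the highest point of $\pe$ inside $Q(p,q)$), the two edge lengths $\Vert ak_{p,q}\Vert$ and $\Vert bk_{p,q}\Vert$, and two prefix-sum lookups before making at most two recursive calls, so $\bigO(1)$ work per subproblem suffices once all the $k_{p,q}$ are available. I would precompute $k_{p,q}$ for every valid pair before running the recursion: since lying inside $Q(p,q)$ amounts to lying in a prefix of $\sigma_a$, a prefix of $\sigma_b$, and below height $\min\{y(p),y(q)\}$, one can build in $\bigO(n^2)$ time a two-dimensional prefix-maximum table over the $(\sigma_a,\sigma_b)$-rank grid (organized by $y$-levels to respect the clip) that returns the topmost such point in $\bigO(1)$; alternatively, one exploits that for a fixed $q$ the topmost point of $Q(p,q)$ moves monotonically as $p$ rotates away from $ab$, filling a whole row of the $k$-table with a two-pointer sweep in $\bigO(n)$. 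Finally, the outer optimization scans the $\bigO(n^2)$ valid pairs with $y(p)>y(q)$, evaluates (\ref{eq:dp1}) in $\bigO(1)$ each using the $k$-table, the $Z$-table, and the prefix sums, takes the maximum against $|S_a|$, repeats the mirrored computation for the case in which the highest point attaches to $b$, and outputs the larger value; no crossing tests are needed, because the recurrence always partitions $Q(p,q)$ into interior-disjoint pieces, so the bistar it assembles is automatically plane.

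The step I expect to be the main obstacle is exactly the efficient handling of the points $k_{p,q}$: one must first pin down the shape of the clipped region $Q(p,q)$ precisely enough to know that ``left of the ray $ap$'' and ``right of the ray $bq$'' really do cut $\sigma_a$ and $\sigma_b$ into prefixes (which is also what makes the contiguity of $L_{p,q}$ and $R_{p,q}$ go through), and then make the monotonicity argument behind the sweep work without being derailed by the $y<\min\{y(p),y(q)\}$ clip. Everything else is bookkeeping with prefix sums over the two angular orders.
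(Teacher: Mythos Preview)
Your overall framework is sound, and the sums $\sum_{l\in L_p}\Vert al\Vert$ and $\sum_{r\in R_q}\Vert br\Vert$ in (\ref{eq:dp1}) are indeed handled by $1$-dimensional prefix sums along $\sigma_a$ and $\sigma_b$. The gap is in the parallel claim for the triangles $L_{p,q}$ and $R_{p,q}$: these are \emph{not} contiguous blocks of $\sigma_a$ (respectively $\sigma_b$). A point $l'$ with $\sigma_a$-angle strictly between those of $k_{p,q}$ and $p$ but with $y(l')>\min\{y(p),y(q)\}\ge y(k_{p,q})$ lies outside both $Q(p,q)$ and $L_{p,q}$, yet it sits in the middle of the angular interval; since such an $l'$ was never inside $Q(p,q)$ there is no contradiction with $k_{p,q}$ being the highest point, and nothing prevents several such points from being interleaved with points that \emph{are} in $L_{p,q}$. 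So a difference of two prefix sums along $\sigma_a$ does not give $\sum_{l\in L_{p,q}}\Vert al\Vert$. You correctly flag the $y$-clip as the obstacle for $k_{p,q}$, but the very same clip breaks the contiguity you rely on for the sums.

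The paper's remedy is to reverse the grouping: instead of fixing $(p,q)$ and reading off a $1$-D range, it fixes $k$ and processes all triples $(p,q,k_{p,q})$ with this $k$ together. For a fixed $k$, the set $L_{p,q}$ is precisely a prefix of the subsequence $\mathcal L_a^k$ of $\sigma_a$ consisting of the points with $y$-coordinate below $y(k)$; one pass through $\sigma_a$ that maintains a running sum over $\mathcal L_a^k$ produces every needed value $\sum_{l\in L_{p,q}}\Vert al\Vert$ in $\bigO(n)$ time per $k$, hence $\bigO(n^2)$ total. The computation of $k_{p,q}$ is handled by the same idea: fix the lower of $p$ and $q$ (say $p$, with $y(p)<y(q)$), so that the clip height $\min\{y(p),y(q)\}=y(p)$ is constant, and sweep $q$ through $\sigma_b$ while tracking the highest eligible point seen so far. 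Your proposed two-pointer sweep with $q$ fixed and $p$ varying does not enjoy this monotonicity, because moving $p$ changes both the angular bound and the clip height simultaneously. If you prefer to stay closer to your outline, a workable alternative is to replace the $1$-D prefix sums by $2$-D prefix sums over the grid $(\sigma_a\text{-rank})\times(y\text{-rank})$, which is $\bigO(n^2)$ to build and answers each $L_{p,q}$ query in $\bigO(1)$.
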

\begin{proof}
A main complication in implementing the dynamic program and evaluating (\ref{eq:dp1}) efficiently is finding sums of the form $\sum_{l\in \Delta\cap \pe} \Vert al\Vert$ or
$\sum_{r\in \Delta\cap \pe} \Vert br\Vert$, and the highest point in $\Delta\cap \pe$, 
where $\Delta$ is a query triangle (with one vertex at $a$ or $b$).
These type of range searching queries can be handled using standard 
data structures~\cite{ColeY84,Matousek93}: after preprecessing $\pe$ in time \(\bigO(n^2 \polylog n)\),
any such query can be answered in \(\bigO(\polylog n)\) time.
Noting that there are \(\bigO(n^2)\) such queries, 
a running time of \(\bigO(n^2 \polylog n)\) is immediate. 
However exploiting our specific structure and using careful bookkeeping we can get the running time down to \(\bigO(n^2)\).

As a preprocessing step, we first compute two sorted lists \(\mathcal{L}_a\) and \(\mathcal{L}_b\) of \(\pe \setminus \{a,b\}\). The list \(\mathcal{L}_a\) is sorted angular at \(a\) and \(\mathcal{L}_b\) is sorted by the angle at \(b\).

The values $\sum_{l\in L_p}\Vert al\Vert$ and $\sum_{r\in R_q}\Vert br\Vert$ (as depicted in \cref{fig:dp_bistar_p*:b})
for all $p,q\in \pe$ can be trivially computed in \(\bigO(n^2)\) time:
there are \(\bigO(n)\) such values and for each of them we can scan the whole point set to explicitly
get $L_p$ or $R_q$ in \(\bigO(n)\) time. There are faster ways of doing this, but for our result this suffices.

Assuming that the values $Z(p,q)$ are already available for all valid pairs $p$, $q$, we
can evaluate (\ref{eq:dp1}) in constant time.
For any two points $p$, $q$ we check whether they form a valid pair and whether $y(p)>y(q)$ in constant time.
We can then either evaluate (\ref{eq:dp1}) or the symmetric formula again in constant time.
Thus, in \(\bigO(n^2)\) time we obtain the optimal solution.

It remains to compute the values $Z(p,q)$ for all valid pairs $p$, $q$.
First, we explain how to compute all the triples of the form $(p,q,k_{p,q})$ for all valid pairs $p,q$
in \(\bigO(n^2)\) time.
Then we group these triples in a clever way to implement the dynamic program efficiently.

We focus on the triples with $y(p)< y(q)$ and show how to find the triples of the form $(p,\cdot,\cdot)$ for a fixed \(p\).
The case with \(y(p)> y(q)\) and a fixed \(q\) is symmetric.
Let $W$ be the points of $\pe$ to the right of the ray $bp$ above the horizontal line $y=y(p)$. 
Furthermore let $K$ be the points of $\pe$ to the right of $ap$ and with $y$ coordinate between $y(a)$ and $y(p)$.
An illustration of \(W\) and \(K\) can be found in  \cref{fig:dp_bistar_W}.
Any point $q$ with $y(q)>y(p)$ forms a valid pair $p,q$ if and only if $q$ lies in $W$.
The point $k_{p,q}$ must lie in $K$ by its definition.

\begin{figure}
	\centering 
	\includegraphics[page=5]{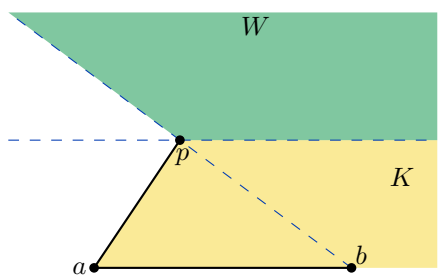}
	\caption{Left: The regions defining $W$ and $K$ for fixed $a$, $b$ and $p$.}
	\label{fig:dp_bistar_W}
\end{figure}

We use \(\mathcal{L}_b\) to find the triples \((p,\cdot, \cdot)\).
We iterate through the list in clockwise order starting at the ray \(ba\) and keep track of the highest \(k^* \in K\) encountered so far.
If the current point lies in \(\pe \setminus (K\cup W)\) we simply skip it.
If the current point lies in \(K\) we update \(k^*\) if necessary.
Finally, if the current point lies in \(W\), we report the triple \((p,q,k^*=k_{p,q})\) with \(q\) being the current point.

For a fixed \(p\) we only iterate \(\mathcal{L}_b\) once.
Thus, for this fixed \(p\) the running time for finding all triples \((p,q,k_{p,q})\) with \(p,q\) forming a valid pair and \(y(p)<y(q)\) is \(\bigO(n)\).
By the procedure we just described and its symmetric procedure on all \(p\in \pe \setminus \{a,b\}\) we find all the triples $(p,q,k_{p,q})$ where $p,q$ is a valid pair in overall \(\bigO(n^2)\) time.

To compute $Z(p,q)$ for all valid pairs using dynamic programing, we also need to compute
the corresponding values $\sum_{l\in L_{p,q}} \Vert al\Vert$ and $\sum_{r\in R_{p,q}} \Vert br\Vert$.
 Refer to \cref{fig:dp_bistar_recurrence} to recall  the definition of \(L_{p,q}\) and \(R_{p,q}\).
For the following procedure we shift the focus to the point \(k\).
For each point $k\in \pe\setminus \{a,b\}$ we collect all triples $(p,q,k=k_{p,q})$,
and compute the sums $\sum_{l\in L_{p,q}} \Vert al\Vert$ and $\sum_{r\in R_{p,q}} \Vert br\Vert$
in linear time for each fixed $k$, as follows. 

We concentrate on the first type of sum, $\sum_{l\in L_{p,q}} \Vert al\Vert$, where $q$
has no role, as the sum is defined by $p$ and $k$.
For the following description we assume that \(\mathcal{L}_a\) is sorted in counterclockwise order.
We create a (sorted) subsequence \(\mathcal{L}_a^k\) of \(\mathcal{L}_a\) containing only the points with \(y\) coordinate below \(y(k)\) and an angle at \(a\) larger than \(\sphericalangle bak\). 
We iterate over the elements of \(\mathcal{L}_a\), starting at the successor of \(k\). While iterating we maintain the last element from \(\mathcal{L}_a^k\) we have seen and the prefix sum \(\sum_{l} \Vert al \Vert\) of all points \(l\) in \(\mathcal{L}_a^k\) we encountered so far.
When advancing to the next point \(p\) in \(\mathcal{L}_a\) there are two possibilities. If \(p\) also lies in \(\mathcal{L}_a^k\) we advance in \(\mathcal{L}_a^k\) and update the prefix sum. Otherwise, we can report \(\sum_{l\in L_{p,q'}}\Vert al \Vert\) to be the current prefix sum for all \(q'\) in a triple \((p,q',k)\).

For any fixed \(k\) this needs at most two iterations through the list \(\mathcal{L}_a\) and can thus be executed in \(\bigO(n)\) time.
A symmetric procedure can be carried out for \(\sum_{r\in R_{p,q}} \Vert br \Vert\) using \(\mathcal{L}_b\).
 As in the case for finding the triples, this results in \(\bigO(n^2)\) overall time to compute all relevant sums $\sum_{l\in L_{p,q}} \Vert al\Vert$ and $\sum_{r\in R_{p,q}} \Vert br\Vert$.

 Now we can implement the dynamic program in the straightforward way. Using the precomputed information we spend \(\bigO(1)\) time for each value \(Z(p,q)\), for a total running time of \(\bigO(n^2)\).
\end{proof}

\subsection{Extending the approach to special trees of diameter 4}
\label{sec:diam4}
Now we show how we can extend the ideas to get a polynomial time algorithm for special trees with diameter \(4\). 
Given three points \(a,b,c\) of $\pe$, a \emph{tristar rooted} at \(a,b\) and \(c\) is a tree
such that each edge has at least one endpoint at $a$, $b$ or $c$. 
We show the following:

\thmtristar*

\begin{proof}
Let \(a,b,c\) be the specified points on the boundary of the convex hull of \(\pe\).
We have to compute the longest plane tristar rooted at \(a,b\) and \(c\).
We assume, without loss of generality, that the edges \(ac\) and \(bc\) are present in the tristar;
the other cases are symmetric. 
We further assume that \(a\) and \(b\) lie on a horizontal line, with $a$ to the left of $b$.

The regions to the left of \(ac\) and to the right of \(bc\), as depicted in \cref{fig:tristarexclude},
can be solved independently of the rest of the instance. Indeed, the presence of the edges \(ac\) and \(bc\)
blocks any edge connecting a point in one of those regions to a point outside the region.
Each one of these regions can be solved as plane bistars, one rooted at $a,c$ and one rooted at $b,c$.
It remains to solve the problem for the points enclosed by $ac$, $cb$ and the portion of the boundary of the convex hull
from $b$ to $a$ (clockwise). 
Let $Q$ be this region. The problem for $Q$ can also be solved independently of the
other problems. We assume for the rest of the argument that all points of $\pe$ are in $Q$.

\begin{figure}
	\centering
	\includegraphics{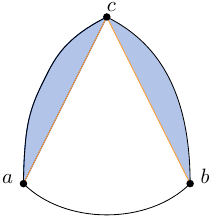}
	\caption{The regions cut off by \(ac\) and \(bc\) can be solved independently as bistars.}
	\label{fig:tristarexclude}
\end{figure}

To solve the problem for $Q$ we will use a variation of the dynamic programming approach for bistars.
For any two points $p,q$ of $\pe$, let us write $p\preceq_c q$ when in the counterclockwise order around 
$c$ we have the horizontal ray through $c$ to the left, then $cp$ and then $cq$
or the segments $cp$ and $cq$ are collinear. 
(Since we assume general position, the latter case occurs when $p=q$.)

The subproblems for the dynamic program are defined by a 5-tuple $(p,p',r,q',q)$ of points of $\pe$
such that 
\begin{itemize}
	\item $p'$ and $q'$ are distinct;
	\item $p\preceq_c p'\preceq_c r \preceq_c q'\preceq_c q$; and
	\item $p'$ and $q'$ are contained in the closed triangle $cpq$.
\end{itemize}
One such five tuple is a \emph{valid tuple},
see \cref{fig:ds-tristar1}.
Note that the first and the second condition imply that $ap$ and $bq$ do not cross.
Some of the points may be equal in the tuple; for example we may have $p=p'$ or $p'=r$ or even $p=p'=r$. 
For each valid tuple $(p,p',r,q',q)$, let $Q(p,r,q)$ be the points of $\pe$ contained in $Q$
below the polygonal path $ap$, $pq$ and $qb$, and below the horizontal line through
the lowest of the points $p,q,r$. Note that the point $r$ is used only to define the horizontal line.

\begin{figure}
	\centering
	\includegraphics[page=1,width=\textwidth]{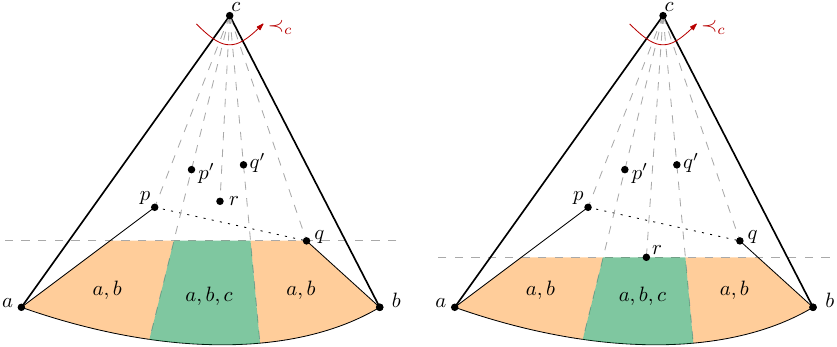}
	\caption{Two examples of valid tuples $(p,p',r,q',q)$ with the region $Q(p,r,q)$ shaded.
		Each $Q(p,r,q)$ is split into 3 regions telling, for each of them, which roots can
		be used for that region.}
	\label{fig:ds-tristar1}
\end{figure}

For each valid tuple $(p,p',r,q',q)$, we define
$Z(p,p',r,q',q)$ as the length of the optimal plane tristar rooted at $a$, $b$, $c$ for the points in the interior of $Q(p,r,q)$, without counting $\Vert ac\Vert+\Vert bc\Vert$, 
and with the additional restriction that a point $k$ can be connected to $c$ only if $p'\preceq_c k\preceq_c q'$.
This last condition is equivalent to telling that no edge incident to \(c\) in the tristar can
cross $ap'$ or $bq'$. 
Thus, we are looking at the length of the longest plane graph
in which each point in the interior of $Q(p,r,q)$ must be connected to either $a$, $b$ or $c$, 
and no edge crosses $ap'$ or $bq'$.

If $Q(p,r,q)$ contains no points, then $Z(p,p',r,q',q)=0$.
If $Q(p,r,q)$ contains some point, we find the highest point \(k= k(p,r,q)\) in \(Q(p,r,q)\) 
and consider how it may attach to the roots and which edges are enforced by each of these choices.
We can always connect \(k \) to \(a\) or \(b\), as \(Q(p,r,q)\) is free of obstacles. 
If $p'\preceq_c k \preceq_c q'$, then we can connect $k$ to $c$. 
If $k$ is connected to $c$ it only lowers the boundary of the region $Q(\cdot,\cdot,\cdot)$
that has to be considered. However if \(k\) is connected to \(a\) or \(b\), it splits off independent regions, some of them can be attached to only one of the roots, some of them are essentially a bistar problem
rooted at $c$ and one of $a,b$. To state the recursive formulas precisely,
for a region $R$ and roots $z,c$, 
let \(\mathrm{BS}_{z,c}(R)\) be the length of the optimal plane bistar rooted at $z,c$ for the points in $R$. 
Such a value can be computed using \cref{lem:bistarcorrect}.

Formally the recurrence for $Z(p,p',r,q',q)$ looks as follows.
If $Q(p,r,q)$ is empty, then $Z(p,p',r,q',q)=0$.
If $Q(p,r,q)$ is not empty, let $k$ be the highest point of $Q(p,r,q)$; 
we have three subcases:
\begin{description}
	\item[Case 1: $k\prec_c p'$.]
		Let $L$ be the points of $Q(p,r,q)$ above $ak$.
		Let \(B\) be the points of \(Q(p,r,q)\) below \(bk\) and
		let \(G\) be the points \(t\) of \(Q(p,r,q)\setminus B\) such that $k\prec_c t \prec_c p'$. Let \(g\) be the point with largest angle at \(b\) among the points in \(G\).
		Then we define \(R'\) to be the set of all points \(t\) in \(Q(p,r,q) \setminus B\) with $p'\prec_c t \prec_c q'$ which are above \(gb\).
		Finally let \(R = Q(p,q,r) \setminus (B\cup R')\),
		as shown in \cref{fig:ds-tristar2}.
		We get the following recurrence in this case:
		\[ 
			Z(p,p',r,q',q) = \max\begin{cases}
				Z(k,p',r,q',q) + \Vert ak \Vert + \sum_{l\in L} \Vert al \Vert\\
				\mathrm{BS}_{a,b}(B)+\Vert bk \Vert + \mathrm{BS}_{b,c}(R') + \sum_{r\in R} \Vert br \Vert
					\end{cases}
		\]
	\item[Case 2: $p' \prec_c k\prec_c q'$.]
		Let $L'$ be the points $t$ of $Q(p,r,q)$ above $ak$ such that $t\prec_c p'$ and
		let $L$ be the points $t$ of $Q(p,r,q)$ above $ak$ such that $p'\prec_c t \prec_c k$.
		Furthermore, let $R$ be the points $t$ of $Q(p,r,q)$ above $bk$ such that $k\prec_c t \prec_c q'$, and 
		let $R'$ be the points $t$ of $Q(p,r,q)$ above $bk$ such that $q' \prec_c t$;
		see \cref{fig:ds-tristar3}.
		We get the following recurrence in this case:
		\[ 
			Z(p,p',r,q',q) = \max\begin{cases}
				Z(p,p',k,q',q) + \Vert ck\Vert \\
				Z(k,k,k,q',q) + \Vert ak \Vert + \mathrm{BS}_{a,c}(L) + \sum_{l\in L'} \Vert al \Vert\\
				Z(p,p',k,k,k) + \Vert bk \Vert + \mathrm{BS}_{b,c}(R) + \sum_{r\in R'} \Vert br \Vert
				\end{cases}
		\]
	\item[Case 3: $q'\prec_c k$.] The case is symmetric to the case $k\prec_c p'$.
\end{description}

\begin{figure}
	\centering
	\includegraphics[page=2,,width=\textwidth]{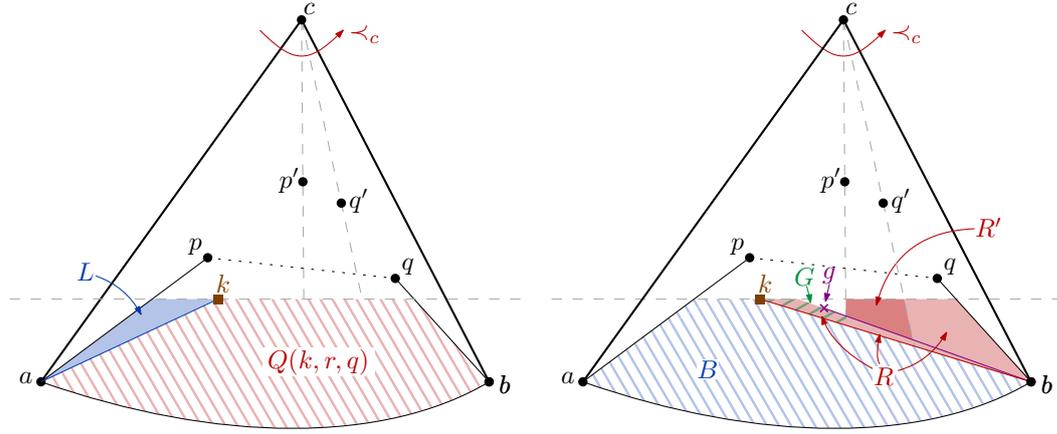}
	\caption{The two cases in the recurrence when $k\prec_c p'$.}
	\label{fig:ds-tristar2}
\end{figure}

\begin{figure}
	\centering
	\includegraphics[page=3]{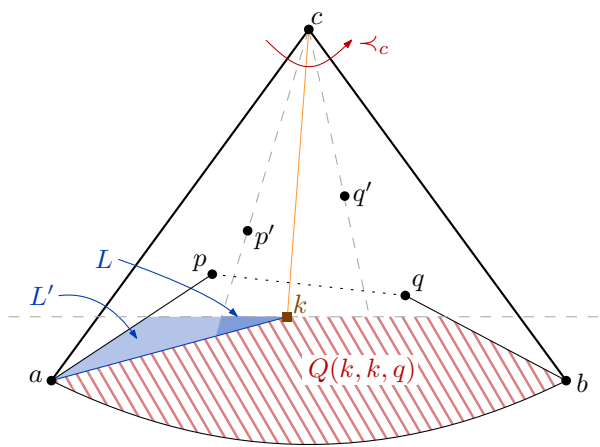}
	\caption{One of the cases in the recurrence when $p' \prec_c k\prec_c q'$.}
	\label{fig:ds-tristar3}
\end{figure}

\begin{figure}
\centering
\includegraphics[page=4]{figures/dp-tristar.pdf}
\caption{Starting case.}
\label{fig:ds-tristar4}
\end{figure}

The values $Z(p,p',r,q',q)$ for all valid tuples $(p,p',r,q',q)$
can be computed using dynamic programming and the formulas described above.
The dynamic program is correct by a simple but tedious inductive argument.

Note that we have to compute a few solutions for bistars. In the recursive calls these have the form $\mathrm{BS}_{z,c}(R)$
 for \(z\in \{a,b\}\) and some region \(R\) of constant size description. More precisely,
each such region is defined by two points of $\pe$ (case of $L$ in \cref{fig:ds-tristar3})
or by four points of $\pe$ (case of $R'$ in \cref{fig:ds-tristar2}, where it is defined by $k,g,q',q$).
Thus, we have $\bigO(n^4)$ different bistar problems, and each of them can be solved in polynomial time
using \cref{lem:bistarcorrect}.

To compute the optimal plane tristar, we add three dummy points to $\pe$
before we start the dynamic programming.
We add a point $c_a$ on the edge $ac$ arbitrarily close to $c$,
a point $c_b$ on the edge $bc$ arbitrarily close to $c$, and a point $c'$
between $c_a$ and $c_b$, see \cref{fig:ds-tristar4}.
We perturb the points $c_a, c'$ and $c_b$ slightly to get them into general position.
Note that \((c_a, c_a, c', c_b, c_b)\) is a valid tuple. We can now compute \(Z(c_a, c_a,c', c_b, c_b)\) by implementing the recurrence with standard dynamic programming techniques in polynomial time. By adding \(\Vert ac \Vert\),  \(\Vert bc \Vert\) and the lengths of the independent bistars from \cref{fig:tristarexclude} to $Z(c_a,c_a,c',c_b,c_b)$, we get the length of the longest plane tristar rooted at \(a,b\) and \(c\).
\end{proof}

Similar to the proof \cref{thm:dp} we can iterate over all possible triples of specified points to find
the best plane tristar rooted at the boundary of the convex hull. 

\section{Local improvements fail}\label{sec:stuck}

One could hope that the longest plane spanning tree problem could 
perhaps be solved by either a greedy approach or by a local search approach~\cite{WilliamsonSh11}.
It is easy to find point sets on as few as 5 points where the obvious greedy algorithm fails to find the longest plane tree.
In this section, we show that the following natural local search algorithm \texttt{AlgLocal($\pe$)} fails too:

\noindent\textbf{Algorithm} \texttt{AlgLocal($\pe$)}:
\begin{enumerate}
\item Construct an arbitrary plane spanning tree $T$ on $\pe$.
\item \textbf{While} there exists a pair of points $a$, $b$ such that $T\cup \{ab\}$ contains an edge $cd$ with $\Vert cd\Vert < \Vert ab\Vert$ and \(T \cup \{ab\} \setminus \{cd\}\) is plane:
  \begin{enumerate}
      \item Set $T\to T\cup \{ab\}\setminus\{cd\}$.\hfill  \texttt{ // tree $T\cup \{ab\}\setminus\{cd\}$ is longer than $T$}
   \end{enumerate}
\item Output $T$.
\end{enumerate}

\begin{lemma}\label{thm:stuck} There are point sets $\pe$ for that the algorithm \texttt{AlgLocal($\pe$)} fails to compute the longest plane tree.
\end{lemma}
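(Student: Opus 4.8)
The plan is to exhibit a single concrete point set $\pe$ together with a specific starting plane tree $T_0$ so that (i) $T_0$ is a local optimum — no single-edge swap $(T_0\setminus\{cd\})\cup\{ab\}$ is both plane and strictly longer — and (ii) $T_0$ is not a longest plane tree on $\pe$. Since \texttt{AlgLocal} may begin from an arbitrary plane spanning tree, it suffices to show that \emph{some} execution gets stuck at $T_0$; choosing the initial tree to be $T_0$ itself makes step 2 terminate immediately with a suboptimal output.

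The first step is to design the instance. A natural candidate is a small flat or near-flat point set where the ``local'' structure forces a caterpillar/star-like tree that is globally beaten by a different caterpillar. Concretely I would take a point set reminiscent of the ones used earlier in the paper (e.g.\ in the proof of \cref{thm:diameter-3-bound} or \cref{thm:caterpillars}): a handful of points, some forming a tight cluster and one or two extreme points far away, chosen so that two combinatorially different plane trees $T_0$ and $T^*$ have lengths satisfying $|T^*|>|T_0|$, yet $T^*$ cannot be reached from $T_0$ by a single plane-preserving improving swap. The key is a ``barrier'': every edge $f$ that one would want to add to $T_0$ either crosses an existing edge of $T_0$ (so the swap is not plane) or is no longer than the edge $e$ it would replace on the induced cycle (so the swap is not improving). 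A handful of points — I would expect around $5$ to $8$ — suffices, and one can even keep them flat so that edge lengths are just $x$-coordinate differences, making the case analysis finite and elementary.

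The second step is the verification, which splits into two parts. For suboptimality, simply compare $|T_0|$ with the length of the exhibited better tree $T^*$ (or invoke \cref{thm:convex}/\cref{thm:caterpillars} if $\pe$ is convex, to pin down the true optimum). For local optimality, enumerate: for each non-tree edge $f=ab$ on $\pe$, identify the unique cycle it creates in $T_0\cup\{f\}$, and check each edge $e$ on that cycle — either removing $e$ leaves a crossing (some other edge of $T_0$ crosses $f$), or $|e|\ge|f|$. Because the point set is tiny and flat, this is a bounded computation that can be presented as a short table or a couple of lines of reasoning per edge class, exploiting the symmetry of the construction to cut down the number of cases.

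The main obstacle is the design step: one must simultaneously arrange that $T_0$ is ``rigid'' under improving plane swaps (which pushes toward having many crossings block the good moves) while keeping a globally longer tree available (which needs the good long edges to exist on $\pe$ at all). Balancing these — and doing so with few enough points that the local-optimality check stays human-checkable — is the delicate part; the rest is routine verification. I would iterate on small convex or flat configurations until both properties hold, then present the final configuration with a figure and the swap-by-swap check.
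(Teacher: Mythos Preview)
Your high-level strategy is exactly what the paper does: exhibit a concrete point set $\pe$ together with a plane spanning tree $T_0$ that is a strict local optimum for single-edge swaps but is not a longest plane tree, and then observe that \texttt{AlgLocal} started at $T_0$ halts immediately with a suboptimal answer. So the plan is sound.

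The gap is that you have not actually produced the instance; you have described how you would search for one. That is the whole content of the lemma, and your proposed search direction---flat or convex configurations on $5$ to $8$ points, leaning on the caterpillar machinery from \cref{thm:convex} and \cref{thm:caterpillars}---is not obviously going to succeed. In particular, on a convex point set the proof of \cref{thm:convex} already shows that any plane tree that is \emph{not} a zigzagging caterpillar admits an improving single-edge swap, so a local optimum would have to be a zigzagging caterpillar; arranging two zigzagging caterpillars where one is locally rigid yet strictly shorter than the other, with flat (hence highly degenerate) edge lengths, is delicate and you give no evidence it can be done.

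The paper's construction is quite different from what you sketch: it uses $9$ points that are neither flat nor in convex position---the vertices of three nested equilateral triangles with common center, radii $r_0$, $\tfrac{2}{3}r_0$, $\tfrac{1}{3}r_0$, and the two inner triangles rotated by $17^\circ$ and $8.5^\circ$ respectively relative to the outer one. The $3$-fold rotational symmetry cuts the case analysis, and the nesting creates the ``barrier'' you describe: the long edges one would like to insert are blocked by crossings with edges not on the induced cycle, while the insertable edges are too short. A second tree is then exhibited whose edges, matched color-by-color, are each at least as long and at least one strictly longer. So the verification half of your plan is carried out essentially as you outline; it is the design half where the paper and your proposal diverge, and where your proposal is currently only a hope rather than an argument.
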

\begin{proof}
We construct a point set \(\pe\) consisting of \(9\) points to show the claim.
The first three points of \(\pe\) are the vertices of an equilateral triangle. We assume that two of these vertices are on a horizontal line and denote by \(x\) the center of this triangle. Let \(r_0\) be the circumradius of the triangle.

Inside of this triangle, we want to place the vertices of two smaller equilateral triangles, where again the smaller is contained in the larger one. Set \(\alpha = 17^\circ, r_1 = \frac{2}{3} r_0\) and \(r_2 =\frac{1}{3} r_0\).
 We then place the first of the additional triangles on the circle \(\partial D(x, r_1)\) in a way that the vertices have an angle of \(\alpha\) to the nearest angular bisector of the outer triangle. We place the second additional triangle on the circle \(\partial D(x, r_2)\) again with an angle to the nearest bisector of the outer triangle. However this time, the angle is \(\alpha/2\). This construction is visualized in \cref{fig:stuckconst}.

 \begin{figure}[ht]

\center \includegraphics[page=1]{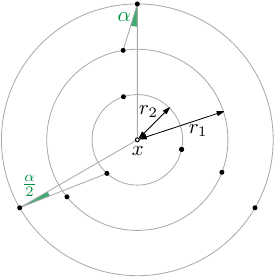}
\caption{Construction of the point set}
\label{fig:stuckconst}

\end{figure}

Now consider the tree on this point set depicted by the solid edges in \cref{fig:stucktree} on the left.
Note that the green, blue and yellow edges are rotational symmetric.
The edges depicted in purple in the tree on the left side of \cref{fig:stucktree} are representatives of the possible edges currently not in the tree, that connect the outer to the inner triangle. However as they are either the smallest edges in the cycle or intersect some edge not in a cycle, the algorithm cannot choose \(ab\) to be one of those edges.

\begin{figure}[ht]
\centering \includegraphics[page=2]{figures/stuck2.pdf}\hfill
\includegraphics[page=3]{figures/stuck2.pdf}
\caption{Left: A tree which cannot be locally improved; Right: A tree where each pair of edges in the same color is at least as long as the matching pair to the left}
\label{fig:stucktree}
\end{figure}

The possible edges connecting the outer to the middle triangle are depicted in red.
The red edge \(pq \) is not in a cycle with the edge it crosses, and thus it is not a possible swap keeping planarity. 
For the second red edge and the possible edges connecting the middle to the inner triangle, it can similarly be seen that they are shorter than any edge of the current tree. Finally, for the edges along the triangles or the edges connecting the interior and the middle triangle, it can easily be verified that there will be no strictly smaller edge in the unique cycle closed by them.

On the other hand, in the tree depicted in \cref{fig:stucktree} on the right each pair of same colored edges is longer than its counterpart to the left. Therefore \texttt{AlgLocal($\pe$)} does not yield a correct result.
\qedhere

\end{proof}

\section{Conclusions}

We leave several open questions:
\begin{enumerate}
\item What is the correct approximation factor of the algorithm \texttt{AlgSimple($\pe$)} presented in 
	\cref{sec:approximation}? 
While each single lemma in \cref{sec:approximation} is tight for some case, it is hard to believe that the whole analysis, leading to the approximation factor $f\doteq 0.5467$, is tight.  We conjecture that the algorithm has a better approximation guarantee. 
\item What is the approximation factor $\bd(3)$ achieved by the polynomial time algorithm that outputs the longest plane tree with diameter 3? By~\cref{thm:diameter-3-bound} it is at most $5/6$ (and by~\cite{AlonRS95} it is at least $1/2$).
\item For a fixed diameter $d\ge 4$, is there a polynomial-time algorithm that outputs the longest plane tree with diameter at most $d$? By~\cref{thm:dp} we know the answer is yes when $d=3$. And \cref{thm:tristar} gives a positive answer for special classes of trees with diameter \(4\).
Note that a hypothetical polynomial-time approximation scheme (PTAS) has to consider trees of unbounded diameter because of \cref{thm:diameter-bound}. It is compatible with our current knowledge that computing an optimal plane tree of diameter, say, $\bigO(1/\eps)$ would give a PTAS. 
\item Is the general problem of finding the longest plane tree in P? A similar question can be asked for several other plane objects, such as paths, cycles, matchings, perfect matchings, or triangulations. The computational complexity in all cases is open.
\end{enumerate}

\bibliography{MaxEST}

\end{document}